\documentclass[11pt,a4paper]{article}

\usepackage[T1]{fontenc}  
\usepackage[latin1]{inputenc} 
\usepackage{a4wide}
\usepackage{array}
\usepackage{amsmath}
\usepackage{amssymb}
\usepackage{amsthm}
\usepackage{mathtools}
\usepackage{paralist}
\usepackage{mdwlist}
\usepackage{algorithmic}
\usepackage{algorithm}
\usepackage[dvips]{graphicx}
\usepackage{url}
\usepackage{color}
\usepackage{cite}

\newcolumntype{C}[1]{>{\centering\arraybackslash$}m{#1}<{$}}
\newlength{\mycolwd}

\renewcommand{\int}{\mathop{\mathrm{int}}}


\theoremstyle{plain}
\newtheorem{theorem}{Theorem}[section]

\theoremstyle{definition}
\newtheorem{definition}{Definition}[section]

    \setcounter{topnumber}{4}
    \setcounter{bottomnumber}{2}
    \setcounter{totalnumber}{4}     
    \setcounter{dbltopnumber}{2}    


\begin{document}
\title{Extremal entanglement witnesses}

\renewcommand*{\thefootnote}{\fnsymbol{footnote}}
\setcounter{footnote}{1}

\author{
Leif Ove Hansen$^a$
\and Andreas Hauge$^a$\footnote{Deceased 31 May 2015}
\and Jan Myrheim$^a$
\and Per {\O}yvind Sollid$^b$\\
\\
$(a)$ Department of Physics,
Norwegian University of Science and Technology,\\
N--7491 Trondheim, Norway\\
$(b)$ Department of Physics, University of Oslo,\\
N--0316 Oslo, Norway
}
\date{\today}

\maketitle

\setcounter{footnote}{0}




\begin{abstract}
  We present a study of extremal entanglement witnesses on a bipartite
  composite quantum system.  We define the cone of witnesses as the
  dual of the set of separable density matrices, thus
  $\textrm{Tr}\,\Omega\rho\geq 0$ when $\Omega$ is a witness and
  $\rho$ is a pure product state, $\rho=\psi\psi^{\dagger}$ with
  $\psi=\phi\otimes\chi$.  The set of witnesses of unit trace is a
  compact convex set, uniquely defined by its extremal points.  The
  expectation value $f(\phi,\chi)=\mathrm{Tr}\,\Omega\rho$ as a
  function of vectors $\phi$ and $\chi$ is a positive semidefinite
  biquadratic form.  Every zero of $f(\phi,\chi)$ imposes strong
  real-linear constraints on $f$ and $\Omega$.  The real and symmetric
  Hessian matrix at the zero must be positive semidefinite.  Its
  eigenvectors with zero eigenvalue, if such exist, we call Hessian
  zeros.  A zero of $f(\phi,\chi)$ is quadratic if it has no Hessian
  zeros, otherwise it is quartic.  We call a witness quadratic if it
  has only quadratic zeros, and quartic if it has at least one quartic
  zero.  A main result we prove is that a witness is extremal if and
  only if no other witness has the same, or a larger, set of zeros and
  Hessian zeros.  A quadratic extremal witness has a minimum number of
  isolated zeros depending on dimensions.  If a witness is not
  extremal, then the constraints defined by its zeros and Hessian
  zeros determine all directions in which we may search for witnesses
  having more zeros or Hessian zeros.  A finite number of iterated
  searches in random directions, by numerical methods, lead to an
  extremal witness which is nearly always quadratic and has the
  minimum number of zeros.  We discuss briefly some topics related to
  extremal witnesses, in particular the relation between the facial
  structures of the dual sets of witnesses and separable states.  We
  discuss the relation between extremality and optimality of
  witnesses, and a conjecture of separability of the so called
  structural physical approximation (SPA) of an optimal witness.
  Finally, we discuss how to treat the entanglement witnesses on a
  complex Hilbert space as a subset of the witnesses on a real Hilbert
  space.
\end{abstract}


Keywords: Entanglement witnesses; positive maps; convex sets.

\vspace*{1\baselineskip}

\pagebreak

\tableofcontents

\section{Introduction}

Entanglement is the quintessence of nonclassicality in composite
quantum systems.  As a physical resource it finds many applications
especially in quantum information theory, see~\cite{hhhh,NC} and
references therein.  Entangled states are exactly those states that
can not be modelled within the classical paradigm of locality and
realism~\cite{epr1935,CHSH1969,Wer1989,Mas2008}.  Contrary to states
of classical composite systems, they allow better knowledge of the
system as a whole than of each component of the
system~\cite{schro1935,HorPR1994}.  The notion of entanglement is
therefore of interest within areas of both application and
interpretation of quantum mechanics.

The problem of distinguishing entangled states from nonentangled
(separable) states is a fundamental issue.  Pure separable states of a
bipartite system are exactly the pure product states, they are easily
recognized by singular value decomposition, also known as Schmidt
decomposition~\cite{NC}.  The situation is much more complicated for
mixed states.  A separable mixed state is a state mixed from pure
product states~\cite{Wer1989}.

The separability problem of distinguishing mixed separable states from
mixed entangled states has received attention along two lines.  On the
one hand, operational or computational tests for separability are
constructed~\cite{Peres1996}.  On the other hand, the geometry of the
problem is studied, with the aim of clarifying the structures defining
separability
~\cite{BZ,HorMPR96,Zyc,Gur2002,Sto1963,poslinmaps2013,Alfsen2010,pptI,ppt55}.
Two results, one from each approach, deserve special attention.  An
operationally simple necessary condition for separability was given by
Peres~\cite{Peres1996}.  He pointed out that the partial transpose of
a separable state is again a separable state, therefore we know that a
state is entangled if its partial transpose is not positive
semidefinite.  States with positive partial transpose, so called PPT
states, are interesting in their own right~\cite{HorMPR98}.  A
significant geometrical result is that of {\.Z}yczkowski~\textit{et al.} that
the volume of separable states is nonzero~\cite{Zyc}.  They prove the
existence of a ball of separable states surrounding the maximally
mixed state.  The maximal radius of this ball was provided by Gurvits
and Barnum~\cite{Gur2002}.

The well known necessary and sufficient separability condition in
terms of positive linear maps or entanglement witnesses connects these
two lines of thought~\cite{HorMPR96,Terhal2000}.  There exists a set
of linear maps on states, or equivalently a set of observables called
entanglement witnesses, that may detect, or witness, entanglement of a
state.  This result is the point of departure for almost all research
into the problem, be it computational, operational or geometrical.  The
major obstacle in the application of positive linear maps and
entanglement witnesses is that they are difficult to identify, in
principle as difficult as are the entangled states themselves.

Our contribution falls in the line of geometrical studies of
entanglement.  The ultimate goal is to classify the extremal points of
the convex set of entanglement witnesses, the building blocks of the
set, in a manner useful for purposes in quantum mechanics.  Positive
linear maps, and the extremal positive maps, were studied already in
1963 by St{\o}rmer in the context of partially ordered vector spaces
and $C^*$~algebras~\cite{Sto1963,poslinmaps2013}.  Since then the
extremal positive maps and extremal witnesses have only received
scattered
attention~\cite{Choi1975,ChoiLam1977,Robertson1985,Grab,Mar2008,Lew,Ha2012a,Chru3}.

Here we combine and apply basic notions from convex geometry and
optimization theory in order to study the extremal entanglement
witnesses.  We obtain computationally useful necessary and sufficient
conditions for extremality.  We use these ideas to analyse previously
known examples and construct new numerical examples of extremal
witnesses.  The extremal witnesses found in random searches are
generic, by definition.  It turns out that the previously known
examples of extremal witnesses are very far from being generic, and
the generic extremal witness is of a type not yet published in the
literature.

We discuss what we learn about the geometry of the set of witnesses,
and also the set of separable states, by studying extremal witnesses.
Other topics we comment on are the relation between extremal and
optimal witnesses~\cite{Lew,Lew2001} and the so called structural
physical approximation (SPA) of witnesses relating them to physical
maps of states~\cite{PHor2003,HorEk,Kor,Ha2012}.

Numerical methods play a central role in our work.  An important
reason is that the entanglement witnesses we want to study are so
complicated that there is little hope of treating them analytically.
Numerical work is useful for illustrating the theory, examining
questions of interest and guiding our thoughts in questions we cannot
answer rigorously.

The methods we use for studying extremal witnesses are essentially the
same as we have used previously for studying extremal states with
positive partial transpose, \textit{i.e.} PPT
states~\cite{ppt55,pptII,pptIII,pptIV,pptV}.  The work presented here
is also based in part on the master's thesis of one of us~\cite{Hauge2011}.

\subsection*{Outline of the article}

This article consists of three main parts.  Section~\ref{sec:intro} is
the first part, where we review material necessary for the
appreciation of later sections.  The basic concepts of convex geometry
are indispensable.  We review the geometry of entanglement witnesses,
and formulate the study of extremal witnesses as an optimization
problem where we represent the witness as a positive semidefinite
biquadratic form and study its zeros.

The second part consists of Section~\ref{sec:constraints} and
Section~\ref{sec:extremal}, where we develop a necessary and
sufficient condition for the extremality of a witness.  The basic idea
is that every zero of a positive biquadratic form representing a
witness imposes strong linear constraints on the form.  The
extremality condition in terms of the zeros of a witness and the
associated constraints is our most important result, not only because
it gives theoretical insight, but because it is directly useful for
numerical computations.

In the third, last, and most voluminous part, the
Sections~\ref{sec:decompwitnesses} to~\ref{sec:real}, we apply the
extremality condition from Section~\ref{sec:extremal}.  As a first
application we study decomposable witnesses in
Section~\ref{sec:decompwitnesses}.  These are well understood and not
very useful as witnesses, since they can only detect the entanglement
of a state in the trivial case when its partial transpose is not
positive.  But they serve to illustrate the concepts, and we find them
useful as stepping stones towards nondecomposable witnesses.

In Section~\ref{sec:knownexamples} we study examples of extremal
nondecomposable witnesses.  First we study two examples known from the
literature, and confirm their extremality by our numerical method.
Then we construct numerically random examples of extremal witnesses in
order to learn about their properties.  We find that the generic
extremal witnesses constructed numerically belong almost without
exception to a completely new class, not previously noticed in the
literature.  They have a fixed number of isolated quadratic zeros,
whereas the previously published extremal witnesses have continuous
sets of zeros.  The number of zeros depends on the dimensions of the
Hilbert spaces of the two subsystems.

In Section~\ref{sec:dshapedI} we show examples of a special D-shaped
type of faces of the set of witnesses in the lowest nontrivial
dimensions, $2\times 4$ and $3\times 3$.  These faces are ``next to
extremal'', bordered by extremal witnesses plus a straight section of
decomposable witnesses.

In Section~\ref{sec:separable} we study faces of the set of separable
states, using the duality between separable states and witnesses.  We
classify two different families of faces, generalizing results of
Alfsen and Shultz~\cite{Alfsen2010}, and present some statistics on
randomly generated maximal simplex faces.  We point out that the
facial structure is relevant for the question of how many pure product
states are needed in the convex decomposition of an arbitrary
separable state, and suggest that it may be possible to improve
substantially the trivial bound given by the dimension of the
set~\cite{HorP1997,Car}.

In Section~\ref{sec:optimal} we compare the notions of optimal and
extremal witnesses~\cite{Lew,Lew2001}.  In low dimensions we find that
almost every nondecomposable optimal witness is extremal, whereas an
abundance of nondecomposable and nonextremal optimal witnesses exist
in higher dimensions.

In Section~\ref{sec:spa} we comment on a separability conjecture
regarding structural physical approximations (SPAs) of optimal
witnesses~\cite{HorEk,Kor}.  This conjecture has since been
refuted~\cite{Ha2012,Stormer2013}.  We have tried to test numerically a
modified separability conjecture.  It holds within the numerical
precision of our separability test, but this test with the available
precision is clearly not a definitive proof.

In Section~\ref{sec:real} we point out that it is possible, and maybe
even natural from a certain point of view, to treat the entanglement
witnesses on a complex tensor product space as a subset of the
witnesses on a real tensor product space.

In Section~\ref{sec:conclusion} we summarize our work and suggest some
possible directions for future efforts.  Some details regarding
numerical methods are found in the Appendix.

\section{Background material}
\label{sec:intro}

In this section we review material necessary as background for the
following sections.  Convex geometry is the mathematical basis of the
theory of mixed quantum states, and is equally basic in the
entanglement theory for mixed states.  We introduce the concepts of
dual convex cones, entanglement witnesses, positive maps, and
biquadratic forms.  See \textit{e.g.} introductory sections of~\cite{BZ,BV}
for further details on convexity.  Concepts from
optimization theory are useful in the numerical treatment of a special
minimization problem, see~\cite{BV,NW}.

\subsection{Convexity}
\label{sec:convdual}

The basic concepts of convex geometry are useful or even essential for
describing mixed quantum states as probabilistic mixtures of pure
quantum states.  A convex subset of a real affine space is defined by
the property that any convex combination
\begin{equation}
\label{eq:convexcomb}
x=(1-p)x_1+px_2
\qquad
\mathrm{with}
\qquad
0\leq p\leq 1
\end{equation}
of members $x_1,x_2$ is a member.  If $x_1\neq x_2$ and $0<p<1$ we say
that $x$ is a proper convex combination of $x_1,x_2$, and $x$ is an
interior point of the line segment between $x_1$ and $x_2$.

The dimension of a convex set is the dimension of the smallest affine
space containing it.  We will be dealing here only with finite
dimensional sets.  Closed and bounded subsets of finite dimensional
Euclidean spaces are compact, according to the Heine--Borel theorem.
A compact convex set has extremal points that are not convex
combinations of other points.  It is completely described by its
extremal points, since any point in the set can be decomposed as a
convex combination involving no more than $n+1$ extremal points, where
$n$ is the dimension of the set~\cite{Car}.  It is called a polytope
if it has a finite number of extremal points, and a simplex if the
number of extremal points is one more than the dimension.

\begin{definition}
\label{def:facedef}
A bidirection at $x$ in the convex set $\mathcal{K}$ is a direction
vector $v\neq 0$ such that $x+tv\in\mathcal{K}$ for $t$ in some
interval ${[t_1,t_2]}$ with $t_1<0<t_2$.

A face $\mathcal{F}$ of $\mathcal{K}$ is a convex subset of
$\mathcal{K}$ with the property that if $x\in\mathcal{F}$ then every
bidirection in $\mathcal{K}$ at $x$ is a bidirection in $\mathcal{F}$
at $x$.

An equivalent condition defining $\mathcal{F}$ as a face is that if
$x\in\mathcal{F}$ is a proper convex combination of
$x_1,x_2\in\mathcal{K}$, then $x_1,x_2\in\mathcal{F}$.
\end{definition}

The empty set and $\mathcal{K}$ itself are faces of $\mathcal{K}$, by definition.  All
other faces are called proper faces.  The extremal points of $\mathcal{K}$ are
the zero dimensional faces of $\mathcal{K}$.  A face of dimension $n-1$ where
$n$ is the dimension of $\mathcal{K}$ is called a facet.

A point $x$ in a convex set $\mathcal{K}$ is an interior point of $\mathcal{K}$ if every
direction at $x$ (in the minimal affine space containing $\mathcal{K}$) is a
bidirection, otherwise $x$ is a boundary point of $\mathcal{K}$.  Every point
$x\in\mathcal{K}$ is either an extremal point of $\mathcal{K}$ or an interior point of a
unique face $\mathcal{F}_x$ of dimension one or higher.  This face is the
intersection of $\mathcal{K}$ with the affine space
\begin{equation}
{\mathcal A}_x=\{\,x+tv\mid t\in\mathbb{R}\,,\;v\in{\mathcal{B}}_x\,\}
\end{equation}
where ${\mathcal{B}}_x$ is the set of all bidirections in $\mathcal{K}$ at $x$.

Thus, every face $\mathcal{F}$ of $\mathcal{K}$ is an intersection
$\mathcal{F}=\mathcal{A}\cap\mathcal{K}$ of $\mathcal{K}$ with some affine space $\mathcal{A}$.
We take $\mathcal{A}$ to be a subspace of the minimal affine space
containing $\mathcal{K}$.  The minimum dimension of $\mathcal{A}$ is the
dimension of $\mathcal{F}$, but it may also be possible to choose $\mathcal{A}$
as an affine space of higher dimension than $\mathcal{F}$.  A proper face
$\mathcal{F}=\mathcal{A}\cap\mathcal{K}$ is said to be exposed if $\mathcal{A}$ has
dimension $n-1$ where $n$ is the dimension of $\mathcal{K}$.  Note that an
exposed face may have dimension less than $n-1$, thus it is not
necessarily a facet, and it may be just a point.  Every exposed point
is extremal, since it is a zero dimensional face, but a convex set may
have extremal points that are not exposed.

A proper face of $\mathcal{K}$ is part of the boundary of $\mathcal{K}$.  The faces of a
face $\mathcal{F}$ of $\mathcal{K}$ are the faces of $\mathcal{K}$ contained in $\mathcal{F}$.  In
particular, the following result is useful for understanding the face
structure of a convex set when its extremal points are known.
\begin{theorem}
\label{thm:extrpface}
\textit{Let $\mathcal{F}$ be a face of the convex set $\mathcal{K}$.  Then a point in $\mathcal{F}$ is an
extremal point of $\mathcal{F}$ if and only if it is an extremal point of $\mathcal{K}$.}
\end{theorem}

We state here another useful result which follows directly from the
definitions.
\begin{theorem}
\label{thm:convxints}
\textit{An intersection $\mathcal{K}=\mathcal{K}_1\cap\mathcal{K}_2$ of two convex sets $\mathcal{K}_1$ and $\mathcal{K}_2$
is again a convex set, and every face $\mathcal{F}$ of $\mathcal{K}$ is an intersection
$\mathcal{F}=\mathcal{F}_1\cap\mathcal{F}_2$ of faces $\mathcal{F}_1$ of $\mathcal{K}_1$ and $\mathcal{F}_2$ of $\mathcal{K}_2$.}
\end{theorem}

The facial structure of $\mathcal{K}=\mathcal{K}_1\cap\mathcal{K}_2$
follows from Definition~\ref{def:facedef} because the bidirections at
any point $x\in\mathcal{K}$ are the common bidirections at $x$ in
$\mathcal{K}_1$ and $\mathcal{K}_2$.

A cone $C$ in a real vector space is a set such that if $x\in C$,
$x\neq 0$, then $tx\in C$ but $-tx\not\in C$ for every $t>0$.
The concept of dual convex cones, to be defined below, has found a
central place in the theory of quantum entanglement.

\subsection{Quantum states}

We are concerned here with a bipartite quantum system with Hilbert
space $\mathcal{H}=\mathcal{H}_a\otimes\mathcal{H}_b$ of finite
dimension $N=N_aN_b$.  The real vector space $H$ of observables on
$\mathcal{H}$ has dimension $N^2\!$, a natural Euclidean inner product
$\langle A,B\rangle=\textrm{Tr}\, AB$ and the corresponding
Hilbert--Schmidt norm $\|A\|=\sqrt{\textrm{Tr}\,A^2}$.  We take
$\mathcal{H}_a=\mathbb{C}^{N_a}\!$, $\mathcal{H}_b=\mathbb{C}^{N_b}\!$, so
$H$ is the set of Hermitian matrices in the matrix algebra
$\mathrm{M}_{N}(\mathbb{C})$.  We write the components of a vector
$\psi\in\mathbb{C}^{N}$ as $\psi_I=\psi_{ij}$ with $I=1,2,\ldots,N$ or
$ij=11,12,\ldots,N_aN_b$.  For any $A\in H$ we define $A^P$ as the
partial transpose of $A$ with respect to the second subsystem, that
is,
\begin{equation}
(A^P)_{ij;kl}=A_{il;kj}\;.
\end{equation}

The set of positive semidefinite matrices in $H$ is a closed convex
cone which we denote by $\mathcal{D}$.  The state space of the quantum
system is the intersection $\mathcal{D}_1$ between $\mathcal{D}$ and
the hyperplane defined by $\textrm{Tr} A=1$.  The matrices in
$\mathcal{D}_1$ are called density matrices, or mixed quantum states.
A vector $\psi\in\mathcal{H}$ with $\psi^{\dagger}\psi=1$ defines a
pure state $\psi\psi^{\dagger}\in\mathcal{D}_1$.  $\mathcal{D}_1$ is a
compact convex set of dimension $N^2-1$.  The pure states are the
extremal points of $\mathcal{D}_1$, in fact this is an alternative
definition of $\mathcal{D}_1$.

A pure state $\psi\psi^{\dagger}$ is separable if $\psi$ is a product
vector, $\psi=\phi\otimes\chi$, thus a separable pure state is a tensor
product of pure states,
\begin{equation}
\psi\psi^{\dagger}
=(\phi\otimes\chi)(\phi\otimes\chi)^{\dagger}
=(\phi\phi^{\dagger})\otimes(\chi\chi^{\dagger})\;.
\end{equation}
The set of separable states $\mathcal{S}_1$ is the smallest convex
subset of $\mathcal{D}_1$ containing all the separable pure states.
Since the separable pure states are extremal points of $\mathcal{D}_1$
containing $\mathcal{S}_1$, they are also extremal points of
$\mathcal{S}_1$, and they are all the extremal points of
$\mathcal{S}_1$.  $\mathcal{S}_1$ is compact and defines a convex cone
$\mathcal{S}\subset\mathcal{D}$.  The dimension of $\mathcal{S}_1$ is
$N^2-1$, the same as the dimension of $\mathcal{D}_1$, hence every
separable state may be written as a convex combination of $N^2$ or
fewer pure product states~\cite{HorP1997,Car}.

The fact that $\mathcal{S}$ and $\mathcal{D}$ have the same dimension
$N^2$ is not quite as trivial as one might be tempted to think.  It is
a consequence of the fundamental fact that we use complex Hilbert
spaces in quantum mechanics, as the following argument shows.  In a
quantum mechanics based on real Hilbert spaces every separable state
would have to be symmetric under partial transposition, and the
dimension of $\mathcal{S}$ would be much smaller than the dimension of
$\mathcal{D}$.  With complex Hilbert spaces, a generic real symmetric
matrix which is a separable state is not symmetric under partial
transposition, and the remarkable conclusion is that its
representation as a convex combination of pure product states must
necessarily involve complex product vectors.  The basic reason that
$\mathcal{S}$ and $\mathcal{D}$ have the same dimension is the
relation $H=H_a\otimes H_b$ between the real vector spaces $H$, $H_a$
and $H_b$ of observables on $\mathcal{H}$, $\mathcal{H}_a$ and
$\mathcal{H}_b$.  We construct $\mathcal{S}$ explicitly as a subset of
$H_a\otimes H_b$, in such a way that the dimension of $\mathcal{S}$ is
the full dimension of $H_a\otimes H_b$.

We define the set of positive partial transpose states (PPT states) as
$\mathcal{P}_1=\mathcal{D}_1\cap\mathcal{D}_1^P\!$.  \,Partial transposition is an invertible linear
operation preserving the convex structure of $\mathcal{D}_1$, hence $\mathcal{P}_1$ is
also a compact convex set defining a convex cone $\mathcal{P}$.  We have that
$\mathcal{S}_1\subset\mathcal{P}_1\subset\mathcal{D}_1$.  The fact that every separable state has a
positive partial transpose is obvious, and provides a simple and
powerful test for separability, known as the Peres criterion
~\cite{Peres1996}.  In dimensions $2\times 2$ and $2\times 3$ the
converse statement is also true, that every PPT state is
separable~\cite{HorMPR96}.  The entanglement of PPT states is not
distillable into entanglement of pure states, and it is believed that
the entangled PPT states alone possess this special ``bound'' type of
entanglement~\cite{HorMPR98}.

\subsection{Dual cones, entanglement witnesses}
\label{sec:entwits}

The existence of entangled PPT states motivates the introduction of
entanglement witnesses that may reveal their entanglement.  We define
the dual cone of $\mathcal{S}$ as
\begin{equation}
\label{eq:witnessdef}
\mathcal{S}^{\circ}=\{\Omega\in H\mid\textrm{Tr}\,\Omega\rho\geq 0\,\,\,\forall\,\,\rho\in\mathcal{S}\}\,.
\end{equation}
The members of $\mathcal{S}^{\circ}$ will be called here entanglement witnesses.  The
usual convention is that a witness is required to have at least one
negative eigenvalue, but since our focus is on the geometry of $\mathcal{S}^{\circ}$
this restriction is not important to us here.

The definition of $\mathcal{S}^{\circ}$ as a dual cone implies that it is closed and
convex.  Since $\mathcal{S}$ is a closed convex cone the dual of $\mathcal{S}^{\circ}$ is $\mathcal{S}$
itself~\cite{HorMPR96,BV} (the fact that $S^{\circ\circ}=\mathcal{S}$ is given
as an exercise in~\cite{BV}).  Thus, a witness having a negative
expectation value in a state proves the state to be entangled, and
given any entangled state there exists a witness having a negative
expectation value in that state.  This two-way implication makes
entanglement witnesses powerful tools for detecting entanglement, both
theoretically and experimentally~\cite{Terhal2000,Lew,Brandao2005,Bre2006,Doh2004,Ioa2006,San2001,Toth2005}.

From the experimental point of view the testimony of an entanglement
witness is of a statistical nature.  A positive or negative result of
one single measurement of a witness gives little information about
whether the state in question is separable or entangled.  A positive
or negative average over many measurements will give a more reliable
answer, but some statistical uncertainty must always remain.

Since the extremal points of $\mathcal{S}_1$ are the pure product states, a
matrix $\Omega\in H$ is a witness if and only if its expectation value
is nonnegative in every pure product state.  Furthermore, since the
dimension of $\mathcal{S}$ is the full dimension of $H$, every witness
$\Omega\neq0$ has strictly positive expectation values in some pure
product states.  Since every product vector is a member of some basis
of orthogonal product vectors, we conclude that every witness
$\Omega\neq0$ has $\textrm{Tr}\,\Omega>0$~\cite{Zyc}.  Accordingly the set
$\mathcal{S}_1^{\circ}$ of normalized (unit trace) witnesses completely describes all
of $\mathcal{S}^{\circ}$.
\begin{theorem}
\label{thm:sudbounded}
\textit{$\mathcal{S}_1^{\circ}$ is a bounded set.}
\end{theorem}
\begin{proof}
The maximally mixed state $\rho_0=I/N$ is in $\mathcal{S}_1$.  Define for
$\theta>0$ and $\Gamma\in H$ with $\textrm{Tr}\,\Gamma=0$, $\textrm{Tr}\,\Gamma^2=1$,
\begin{equation}
f(\theta,\Gamma)=\min_{\rho\in\mathcal{S}_1}\textrm{Tr}\,\rho\left(\rho_0+\theta\Gamma\right)\\
=\frac{1}{N}+\theta\min_{\rho\in\mathcal{S}_1}\textrm{Tr}\,\rho\Gamma\,.
\end{equation}
Since $\mathcal{S}_1$ is compact $f$ is well defined.  Since $\textrm{Tr}\,\Gamma=0$, we
have $\Gamma\notin\mathcal{S}^{\circ}$ and the minimum of $\textrm{Tr}\,\rho\Gamma$ is strictly
negative.  Therefore, given any $\Gamma$ we can always find the
$\theta(\Gamma)$ which makes $f(\theta,\Gamma)=0$.  Since $\Gamma$
lies in a compact set there exists a $\Gamma^*$ with
$\theta^*=\theta(\Gamma^*)=\max_\Gamma\theta(\Gamma)$.  This
$\theta^*$ defines an $N^2-1$ dimensional Euclidean ball
$\mathcal{B}_1$ centered at $\rho_0$ and containing $\mathcal{S}_1^{\circ}$.
\end{proof} 
Note that the dual set $\mathcal{B}_1^\circ$ is an $N^2-1$
dimensional ball contained in $\mathcal{S}_1$, also centered at $\rho_0$~\cite{Zyc}.

From now on, when we talk about entanglement witnesses we will usually
assume that they are normalized and lie in $\mathcal{S}_1^{\circ}$.  Since $\mathcal{S}_1^{\circ}$ is a
compact convex set it is completely described by its extremal points.
The ultimate objective of the present work is to characterize these
extremal witnesses.

The concept of dual cones applies of course also to the cone $\mathcal{D}$ of
positive semidefinite matrices and the cone $\mathcal{P}$ of PPT matrices.  The
cone $\mathcal{D}$ is self-dual, $\mathcal{D}^{\circ}=\mathcal{D}$.  Since $\mathcal{P}=\mathcal{D}\cap\mathcal{D}^P\!$, the dual
cone $\mathcal{P}^{\circ}$ is the convex hull of $\mathcal{D}\cup\mathcal{D}^P$.  Hence, the extremal
points of $\mathcal{P}_1d$ are the pure states $\psi\psi^{\dagger}$ and the
partially transposed pure states $(\psi\psi^{\dagger})^P\!$.  These are
also extremal in $\mathcal{S}_1^{\circ}$~\cite{Sto1963,poslinmaps2013,Grab,Mar2008}.  A witness
$\Omega\in\mathcal{P}^{\circ}$ is called decomposable, because it has the form
\begin{equation}
\label{eq:decomp}
\Omega=\rho+\sigma^P\quad \textrm{with}\quad\rho,\sigma\in\mathcal{D}\,.
\end{equation}
This terminology comes from the mathematical theory of positive
maps~\cite{Sto1963}.  Decomposable witnesses are in a sense trivial, and
not very useful as witnesses, since they can not be used for detecting
entangled PPT states.

Altogether, we have the following sequence of compact convex sets,
\begin{equation}
\mathcal{S}_1\subset\mathcal{P}_1\subset\mathcal{D}_1=\mathcal{D}_1^{\circ}\subset\mathcal{P}_1^{\circ}\subset\mathcal{S}_1^{\circ}\,.
\end{equation}
All these sets, with the exception of $\mathcal{D}_1$, are invariant under
partial transposition.  The sets $\mathcal{S}_1$, $\mathcal{D}_1$, and $\mathcal{P}_1^{\circ}$ are very
simply described in terms of their extremal points.  The extremal
points of $\mathcal{P}_1$ are not fully understood, though some progress has
been made in this direction~\cite{ppt55,pptII,pptIII,pptIV,pptV}.  The
extremal points of $\mathcal{S}_1^{\circ}$ are what we investigate here, they include
the extremal points of $\mathcal{S}_1$, $\mathcal{D}_1$, and $\mathcal{P}_1^{\circ}$.

A face of $\mathcal{D}_1$ is a complete set of density matrices on some subspace
of $\mathcal{H}$~\cite{BZ}.  Faces of $\mathcal{S}_1$ have recently been studied by
Alfsen and Shultz~\cite{Alfsen2010,Alfsen2012}.  We also comment on
faces of $\mathcal{S}_1$ in Section~\ref{sec:separable}.  As we develop our
results regarding extremal witnesses we will simultaneously obtain a
classification of faces of $\mathcal{S}_1^{\circ}$.

\subsection{Positive maps}
\label{sec:posmaps}

The study of the set of separable states through the dual set of
entanglement witnesses was started by Micha{\l}, Pawe{\l}, and Ryszard
Horodecki~\cite{HorMPR96}.  They pointed out the relation between
witnesses and positive maps, and used known results from the
mathematical theory of positive maps to throw light on the
separability problem~\cite{Sto1963}.  In particular, the fundamental
result that there exist entangled PPT states is equivalent to the
existence of nondecomposable positive maps.

We describe here briefly how entanglement witnesses are related to
positive maps, but do not aim at developing the perspectives of
positive maps in great detail.  In Section~\ref{sec:spa} we return
briefly to the use of positive maps for detecting entanglement.

We use a matrix $A\in H$ to define a real linear map
$\textbf{L}_A: H_a\rightarrow H_b$ such that $Y=\mathbf{L}_A X$ when
\begin{equation}
\label{eq:jamiso01}
Y_{jl}=\sum_{i,k}A_{ij;kl}X_{ki}\,.
\end{equation}
The correspondence $A\leftrightarrow\mathbf{L}_A$ is a vector space
isomorphism between $H$ and the space of real linear maps $H_a\rightarrow
H_b$.  A slightly different version of this isomorphism is more
common in the literature, this is the Jamio{\l}kowski isomorphism
$A\leftrightarrow\mathbf{J}_A$ by which $\mathbf{J}_AX=\mathbf{L}_A(X^T)$~\cite{Jam1972}.

The transposed real linear map
$\mathbf{L}_A^T: H_b\rightarrow H_a$ is defined such that $X=\mathbf{L}_A^T Y$ when
\begin{equation}
\label{eq:jamiso02}
X_{ik}=\sum_{j,l}A_{ij;kl}Y_{lj}\,.
\end{equation}
It is the transpose of $\mathbf{L}_A$ with respect to the natural scalar
products $\langle U,V\rangle=\textrm{Tr}\,UV$ in $H_a$ and $H_b$.  In fact,
for any $X\in H_a$, $Y\in H_b$ we have
\begin{equation}
\langle\mathbf{L}_AX,Y\rangle=
\textrm{Tr}\,((\mathbf{L}_AX)Y)=
\sum_{i,j,k,l}A_{ij;kl}X_{ki}Y_{lj}=
\textrm{Tr}\,(X(\mathbf{L}_A^T Y)),
\end{equation}
and therefore
\begin{equation}
\langle\mathbf{L}_AX,Y\rangle=\langle X,\mathbf{L}_A^T Y\rangle.
\end{equation}

The maps $\mathbf{L}_A$ and $\mathbf{L}_A^T$ act on one dimensional projection
operators $\phi\phi^{\dagger}\in H_a$ and $\chi\chi^{\dagger}\in H_b$ according to
\begin{eqnarray}
\label{eq:jamiso}
\nonumber
\mathbf{L}_A(\phi\phi^{\dagger})&=(\phi\otimes I_b)^\dagger A (\phi\otimes I_b)\,,
\\
\mathbf{L}_A^T(\chi\chi^{\dagger})&=(I_a\otimes\chi)^\dagger A (I_a\otimes\chi)\,.
\end{eqnarray}
Note that $\phi\otimes I_b$ is an $N\times N_b$ matrix such that
$(\phi\otimes I_b)\chi=\phi\otimes\chi$, whereas $I_a\otimes\chi$ is an $N\times
N_a$ matrix such that $(I_a\otimes\chi)\phi=\phi\otimes\chi$.  It follows that
\begin{equation}
\label{eq:jamisoI}
\chi^{\dagger}\,(\mathbf{L}_A(\phi\phi^{\dagger}))\,\chi=
\phi^{\dagger}\,(\mathbf{L}_A^T(\chi\chi^{\dagger}))\,\phi=
(\phi\otimes\chi)^\dagger A (\phi\otimes\chi)\,.
\end{equation}
If $\Omega$ is an entanglement witness then
equation~\eqref{eq:jamisoI} implies that $\mathbf{L}_{\Omega}$ is a positive
linear map $H_a\rightarrow H_b$, mapping positive semidefinite matrices in
$H_a$ to positive semidefinite matrices in $H_b$.  Similarly,
$\mathbf{L}_{\Omega}^T$ is a positive linear map $H_b\rightarrow H_a$.  The
correspondence $\Omega\leftrightarrow\mathbf{L}_{\Omega}$ is a vector space
isomorphism between the set of entanglement witnesses and the set of
positive maps.

A positive map $\mathbf{M}$ is said to be completely positive if every map
$\mathbf{I}\otimes\mathbf{M}$, where $\mathbf{I}$ is the identity map in an arbitrary
dimension, is positive.  It is easily shown that $\mathbf{J}_A$ is completely
positive if and only if $A$ is a positive matrix.  Thus $\mathbf{L}_A$ is
completely positive if and only if $A^P$ is a positive matrix.

\subsection{Biquadratic forms and optimization}
\label{sec:biquformopt}

The expectation value of an observable $A\in H$ in a pure product
state $\psi\psi^{\dagger}$ with $\psi=\phi\otimes\chi$ is a biquadratic form
\begin{equation}
f_A(\phi,\chi)
=(\phi\otimes\chi)^\dagger A(\phi\otimes\chi)\,.
\end{equation}
The condition for $\Omega\in H$ to be an entanglement witness is that
the biquadratic form $f_{\Omega}$ is positive semidefinite.  Our
approach here is to study witnesses through the associated biquadratic
forms.

In the present work we study especially boundary witnesses,
corresponding to biquadratic forms that are marginally positive.  This
leads naturally to an alternative definition of entanglement witnesses
in terms of an optimization problem.
\begin{definition}
\label{def:witndef}
A matrix $A\in H$ is an entanglement witness if and only if the
minimum value $p^*$ of the problem
\begin{align}
\label{eq:kernelproblem}
\nonumber
\textrm{minimize} & \qquad f_A(\phi,\chi)=(\phi\otimes\chi)^\dagger A(\phi\otimes\chi)\,,
\quad\phi\in\mathcal{H}_a\,,\,\,\,\,\chi\in\mathcal{H}_b\,,\\
\textrm{subject to} & \qquad \|\phi\|=\|\chi\|=1\,,
\end{align}
is nonnegative. 
\end{definition}
The normalization of $\phi$ and $\chi$ ensures that $p^*>-\infty$
when $A\notin\mathcal{S}^{\circ}$, and that $p^*>0$ for every $A$ in the interior of
$\mathcal{S}^{\circ}\!$.  It is natural to use here the Hilbert--Schmidt norm introduced
above, but in principle any norm would serve the same purpose.

The boundary $\partial\mathcal{S}^{\circ}$ of $\mathcal{S}^{\circ}$ consists of those $\Omega\in\mathcal{S}^{\circ}$
for which there exists a separable state $\rho$ orthogonal to
$\Omega$, \textit{i.e.} with $\textrm{Tr}\,\Omega\rho=0$.  In terms of problem
\eqref{eq:kernelproblem}, $A\in\partial\mathcal{S}^{\circ}$ if and only if $p^*=0$.
Since $\mathcal{S}^{\circ\circ}=\mathcal{S}$ we can similarly understand the boundary of
$\mathcal{S}$ as the set of separable states orthogonal to some witness.  We
will return to this duality between the boundaries of $\mathcal{S}$ and $\mathcal{S}^{\circ}$
in Section~\ref{sec:separable}.

\subsection{Equivalence under $\textrm{SL}\otimes\textrm{SL}$ transformations}

It is very useful to observe that all the main concepts discussed in
the present article are invariant under what we call
$\textrm{SL}\otimes\textrm{SL}$ transformations, in which a matrix $A$ is
transformed into $VAV^{\dagger}$ with an invertible product matrix
$V=V_a\otimes V_b$.

For example, since such a transformation is linear in $A$, it
preserves convex combinations, extremal points, and in general all the
convexity properties of different sets.  It preserves the positivity
of matrices, the tensor product structure of vectors and matrices, the
number of zeros of witnesses, and in general all properties related to
entanglement, except that it may increase or decrease entanglement as
measured quantitatively if either $V_a$ or $V_b$ is not unitary.

Thus, for our purposes it is useful to consider two density matrices
or two entanglement witnesses to be equivalent if they are related by
an $\textrm{SL}\otimes\textrm{SL}$ transformation.  This sorting into
equivalence classes helps to reduce the problem of understanding and
classifying entangled states and entanglement witnesses.


\section{Secondary constraints at zeros of witnesses}
\label{sec:constraints}

By definition, a witness $\Omega$ satisfies the following infinite set
of inequalities, all linear in $\Omega$,
\begin{equation}
\label{eq:priminv}
f_{\Omega}(\phi,\chi)\geq 0
\qquad\mathrm{with}\qquad\phi\in\mathcal{H}_a\,,\quad\chi\in\mathcal{H}_b\,,
\quad\|\phi\|=\|\chi\|=1\,.
\end{equation}
These constraints on $\Omega$, represented here as a biquadratic form
$f_{\Omega}$, are the primary constraints defining the set $\mathcal{S}_1^{\circ}$,
apart from the trivial linear constraint $\textrm{Tr}\,\Omega=1$.

If $\Omega$ is situated on the boundary of $\mathcal{S}_1^{\circ}$ it means that at
least one of these primary inequalities is an equality.  We will call
the pair $(\phi_0,\chi_0)$ a zero of $\Omega$ if
$f_{\Omega}(\phi_0,\chi_0)=0$.  We count $(a\phi_0,b\chi_0)$ with
$a,b\in\mathbb{C}$ as the same zero.  The primary
constraints~(\ref{eq:priminv}) with $(\phi,\chi)$ close to the zero
$(\phi_0,\chi_0)$ lead to rather stringent constraints on $\Omega$,
which we introduce as secondary constraints to be imposed at the zero.
These secondary constraints are both equalities and inequalities, and
they are linear in $\Omega$, like the primary constraints from which
they are derived.  They are summarized in explicit form in
Appendix~\ref{sec:explicit}.  In the next section we apply all the
secondary constraints to the problem of constructing and classifying
extremal witnesses.

The analysis of secondary constraints presented in this section is
essentially the same as carried out by Lewenstein \textit{et al.}
in their study of optimal witnesses~\cite{Lew}.

\subsection{Positivity constraints on polynomials}
\label{sec:polynomeq}

A model example may illustrate how we treat constraints.  Let $f(t)$
be a real polynomial in one real variable $t$, of degree four and with
a strictly positive quartic term, satisfying the primary constraints
$f(t)\geq 0$ for all $t$.  These are constraints on the coefficients
of the polynomial.  The equation $f(t)=0$ can have zero, one or two
real roots for $t$.  Assume that $f(t_0)=0$.  Because this is a
minimum, we must have $f'(t_0)=0$ and $f''(t_0)\geq 0$.  In the
limiting case $f''(t_0)=0$ we must have also $f^{(3)}(t_0)=0$.

Thus, if $f(t_0)=0$ we get secondary constraints $f'(t_0)=0$, and
either $f''(t_0)>0$ or $f''(t_0)=0$, $f^{(3)}(t_0)=0$.  If there is a
second zero $t_1$, similar secondary constraints must hold there.  It
should be clear that we may replace the infinite set of primary
constraints $f(t)\geq 0$ for every $t$ by the finite set of secondary
constraints at the zeros $t_0$ and $t_1$.

Because the zeros of a witness $\Omega$ are roots of a polynomial
equation in several variables, they have the following property.
\begin{theorem}
\label{thm:natureofzeros}
\textit{The set of zeros of a witness consists of at most a finite number of
components, where each component is either an isolated point or a
continuous connected surface.}
\end{theorem}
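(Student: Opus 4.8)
The plan is to recognise the zero set as a real-algebraic subset of a compact space and then to invoke the finiteness of connected components that such sets enjoy. First I would fix the ambient space. Since $(a\phi_0,b\chi_0)$ and $(\phi_0,\chi_0)$ are counted as the same zero for $a,b\in\C$, the natural home for the zeros is the product of unit spheres $S=\{\phi\in\H_\A\mid\|\phi\|=1\}\times\{\chi\in\H_\B\mid\|\chi\|=1\}$ modulo the phase action of the group $G=U(1)\times U(1)$, \ie the compact product of projective spaces. Writing each complex coordinate in terms of its real and imaginary parts turns $f_{\Omega}(\phi,\chi)=(\phichi)^\dagger\Omega(\phichi)$ into a single real polynomial, homogeneous of degree two in the real coordinates of each subsystem separately, hence of total degree four. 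Its zero locus is therefore a real-algebraic set, and its intersection $Z$ with the compact set $S$ is a compact real-algebraic set, invariant under $G$ because $f_{\Omega}$ is unchanged when $\phi$ and $\chi$ are multiplied by phases.

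Next I would apply the key structural input: every real-algebraic set, being semialgebraic, has only finitely many connected components, each of which is again semialgebraic and carries a well-defined dimension (the classical finiteness theorem, \eg Whitney, or Bochnak, Coste and Roy). This at once bounds the number of components of $Z$. Since $G$ is connected, any point of $Z$ is joined within $Z$ to its whole $G$-orbit, so every component of $Z$ is $G$-invariant; passing to the quotient therefore preserves the finite number of components and exhibits the set of zeros in the projective product as a finite union of connected pieces.

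Finally I would read off the stated dichotomy. Each connected component of $Z$ is a connected semialgebraic set of some well-defined dimension. A component of dimension zero is a finite semialgebraic set and, being connected, a single point, which is the isolated zero of the statement; a component of positive dimension is the ``continuous connected surface''. This matches the Hessian criterion developed below, in that an isolated zero is one where the second derivative of $f_{\Omega}$ is strictly positive in every transverse direction, whereas a positive-dimensional component forces Hessian zeros along its tangent directions.

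The main obstacle is the finiteness-of-components theorem itself; the remaining steps are routine bookkeeping. I would emphasise that this step genuinely uses the polynomial (equivalently real-analytic) nature of $f_{\Omega}$ together with the compactness secured by normalising $\phi$ and $\chi$: for a merely continuous nonnegative form the zero set could have infinitely many components or accumulate pathologically, and it is precisely the algebraic structure of the biquadratic form that excludes this.
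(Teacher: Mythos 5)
Your proof is correct, and it takes a genuinely different route from the paper's. The paper argues by slicing: it restricts $f_{\Omega}$ to lines $(\phi_1+t\phi_2,\chi_1+t\chi_2)$ with $f_{\Omega}(\phi_2,\chi_2)>0$, observes that each slice is a nonnegative real quartic in $t$ with at most two roots, and claims that these roots move continuously as the line is varied, so that the zeros organize themselves into finitely many points and connected surfaces. The paper itself signals the informality of this argument (``This construction \emph{should} result in a set of zeros as described''), and indeed it never rules out, say, infinitely many components accumulating. Your approach instead identifies the zero set as a compact real-algebraic --- hence semialgebraic --- subset of the product of unit spheres, invokes the classical finiteness theorem for connected components of semialgebraic sets, and disposes of the phase ambiguity by noting that the connected group $U(1)\times U(1)$ preserves each component, so the quotient keeps finitely many components; the dichotomy then falls out of the semialgebraic dimension (a connected component of dimension zero is a point, and with finitely many closed components each is open in the zero set, hence genuinely isolated). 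What the paper's sketch buys is elementarity and a picture that motivates the later quadratic/quartic distinction; what yours buys is an actual proof, at the cost of importing a nontrivial theorem from real algebraic geometry. One small caveat: a positive-dimensional semialgebraic component need not literally be a smooth ``surface'' (it can be a curve or have singular points), but the theorem's wording is informal on this point and your reading matches its evident intent.
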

\begin{proof}
Choose $(\phi_2,\chi_2)$ such that $f_{\Omega}(\phi_2,\chi_2)>0$,
and define $f(t)=f_{\Omega}(\phi_1+t\phi_2,\chi_1+t\chi_2)$.  For
given $(\phi_1,\chi_1)$ and $(\phi_2,\chi_2)$ this is a nonnegative
polynomial in the real variable $t$ of degree four, hence it has
zero, one or two real roots for $t$.  By varying $(\phi_1,\chi_1)$
we reach all the zeros of $\Omega$.  The zeros of $f(t)$ will move
continuously when we vary $(\phi_1,\chi_1)$, except that they may
appear or disappear.  This construction should result in a set of
zeros as described in the theorem.
\end{proof}
Now let $(\phi_0,\chi_0)$ be a zero of $\Omega$, with
$\|\phi_0\|=\|\chi_0\|=1$.  Since the constraints~(\ref{eq:priminv})
on the polynomial $f_{\Omega}$ at
$(\phi,\chi)=(\phi_0+\xi,\chi_0+\zeta)$ are actually independent of
the normalization conditions $\|\phi\|=\|\chi\|=1$, we choose to
abandon these nonlinear normalization conditions (nonlinear in the
Hilbert--Schmidt norm) and replace them by the linear constraints
$\phi_0^{\dagger}\xi=0$, $\chi_0^{\dagger}\zeta=0$.  Strictly
speaking, even these orthogonality conditions are not essential, the
important point is that we vary $\phi$ and $\chi$ in directions away
from $\phi_0$ and $\chi_0$.

We find that some of the secondary constraints on $f_{\Omega}$ are
intrinsically real equations rather than complex equations, therefore
we introduce real variables $x\in\mathbb{R}^{2N_a-2}\!$, $y\in\mathbb{R}^{2N_b-2}$ and
write $\xi=J_0x$, $\zeta=K_0y$ with $\phi_0^{\dagger}J_0=0$,
$\chi_0^{\dagger}K_0=0$.  Our biquadratic form is then a real
inhomogeneous polynomial quadratic in $x$ and quadratic in $y$,
\begin{equation}
\label{eq:realparam}
f(x,y)=f_{\Omega}(\phi,\chi)
=((\phi_0+\xi)\otimes(\chi_0+\zeta))^{\dagger}
\,\Omega\,
((\phi_0+\xi)\otimes(\chi_0+\zeta))\,.
\end{equation}
The linear term of the polynomial is
\begin{align}
\nonumber
f_1(x,y) & = 2\,\mathrm{Re}\,\,(
(\xi\otimes\chi_0)^\dagger\Omega(\phi_0\otimes\chi_0)
+(\phi_0\otimes\zeta)^\dagger\Omega(\phi_0\otimes\chi_0))
\\
&=x^T\!\mathrm{D}_xf(0,0)+y^T\!\mathrm{D}_yf(0,0)\,,
\end{align}
in terms of the gradient
\begin{align}
\label{eq:grad}
\nonumber
\mathrm{D}_xf(0,0)
&=2\,\mathrm{Re}\,\,(J_0\otimes\chi_0)^\dagger\Omega(\phi_0\otimes\chi_0)\,,\\
\mathrm{D}_yf(0,0) 
&=2\,\mathrm{Re}\,\,(\phi_0\otimes K_0)^\dagger\Omega(\phi_0\otimes\chi_0)\,.
\end{align}
The quadratic term is
\begin{align}
\label{eq:hess0}
\nonumber
f_2(x,y)
& =
(\xi\otimes\chi_0)^{\dagger}\,\Omega\,(\xi\otimes\chi_0)
+(\phi_0\otimes\zeta)^{\dagger}\,\Omega\,(\phi_0\otimes\zeta)
\\
\nonumber
&\quad+
2\,\mathrm{Re}\,((\phi_0\otimes\zeta)^{\dagger}\,\Omega\,(\xi\otimes\chi_0)
+(\phi_0\otimes\chi_0)^{\dagger}\,\Omega\,(\xi\otimes\zeta))
\\
& = z^T G_{\Omega}\,z
\,,
\end{align}
where $z^T=(x^T,y^T)$, and $2G_{\Omega}=\mathrm{D}^2f(0,0)$ is the second
derivative, or Hessian, matrix, which is real and symmetric,
\begin{align}
\label{eq:hess}
\nonumber
G_{\Omega}&=\mathrm{Re}\,\,
\begin{bmatrix}
g_{xx} & g_{yx}^T \\
g_{yx} & g_{yy}
\end{bmatrix},\\
\nonumber
g_{xx} & =(J_0\otimes\chi_0)^\dagger\Omega(J_0\otimes\chi_0)\,,\\
\nonumber
g_{yy} & =(\phi_0\otimes K_0)^\dagger\Omega(\phi_0\otimes K_0)\,,\\
g_{yx} & =(\phi_0\otimes K_0)^\dagger\Omega(J_0\otimes\chi_0)
+(\phi_0\otimes K_0^*)^\dagger\Omega^P(J_0\otimes\chi_0^*)\,.
\end{align}
The cubic term is like the linear term but with
$\phi_0\leftrightarrow\xi$ and $\chi_0\leftrightarrow\zeta$,
\begin{equation}
f_3(x,y)=2\,\mathrm{Re}\,\,((\phi_0\otimes\zeta)^\dagger\Omega(\xi\otimes\zeta)
+(\xi\otimes\chi_0)^\dagger\Omega(\xi\otimes\zeta))\,.
\end{equation}
The quartic term is simply
\begin{equation}
f_4(x,y)=f_\Omega(\xi,\zeta)=(\xi\otimes\zeta)^\dagger\Omega(\xi\otimes\zeta)\,.
\end{equation}

The fact that the constant term of the polynomial vanishes,
$f(0,0)=0$, is one real linear constraint on $\Omega$,
\begin{equation}
\mathbf{T}_0:H\rightarrow\mathbb{R}\,,\quad \mathbf{T}_0\Omega=
(\phi_0\otimes\chi_0)^\dagger\Omega(\phi_0\otimes\chi_0)=0\,.
\end{equation}
Because $(x,y)=(0,0)$ is a minimum of the polynomial
the linear term  must also vanish identically,
and this results in another linear system of constraints,
\begin{equation}
\mathbf{T}_1:H\rightarrow\mathbb{R}^{2(N_a+N_b-2)},\quad
\mathbf{T}_1\Omega=
\begin{bmatrix}
\mathrm{D}_xf(0,0)\\
\mathrm{D}_yf(0,0)
\end{bmatrix}=0\,.
\end{equation}
Note that the equality constraints $\mathbf{T}_0$ and $\mathbf{T}_1$ are the same
for every witness with the zero $(\phi_0,\chi_0)$, they are uniquely
defined by the zero alone.  The total number of constraints in $\mathbf{T}_0$
and $\mathbf{T}_1$ is
\begin{equation}
M_2=2(N_a+N_b)-3\,.
\end{equation}
All of these $M_2$ constraints are linearly independent.  They are
given an explicitly real representation in
Appendix~\ref{sec:explicit}.

It is worth noting that the vanishing of the constant and linear terms
of the polynomial $f(x,y)$ is equivalent to the conditions that
\begin{align}
\nonumber
(\phi\otimes\chi_0)^\dagger\Omega(\phi_0\otimes\chi_0)
&=0\quad\forall\;\phi\in\mathcal{H}_a\,,\\
(\phi_0\otimes\chi)^\dagger\Omega(\phi_0\otimes\chi_0)
&=0\quad\forall\;\chi\in\mathcal{H}_b\,.
\end{align}
Hence the $2(N_a+N_a)-3$ real constraints $\mathbf{T}_0\Omega=0$ and
$\mathbf{T}_1\Omega=0$ may be expressed more simply as the following set of
$N_a+N_b$ complex constraints, equivalent to $2(N_a+N_b)$ real
constraints that are then not completely independent,
\begin{align}
\label{eq:gradCLL}
\nonumber
\mathbf{L}_\Omega^T(\chi_0\chi_0^\dagger)\,\phi_0
&=(I_a\otimes\chi_0)^\dagger\Omega(\phi_0\otimes\chi_0)
=0\,,\\
\mathbf{L}_\Omega(\phi_0\phi_0^\dagger)\,\chi_0
&=(\phi_0\otimes I_b)^\dagger\Omega(\phi_0\otimes\chi_0)
=0\,.
\end{align}

The quadratic term of the polynomial has to be nonnegative, again
because $(x,y)=(0,0)$ is a minimum.  The inequalities
\begin{equation}
\label{eq:f2constr}
z^T G_{\Omega}\,z\geq 0\qquad\forall\;z\in\mathbb{R}^{2(N_a+N_b-2)}
\end{equation}
are secondary inequality constraints, linear in $\Omega$, equivalent
to the nonlinear constraints that all the eigenvalues of the Hessian
matrix $G_{\Omega}$ must be nonnegative.

There are now two alternatives.  If the
inequalities~(\ref{eq:f2constr}) hold with strict inequality for all
$z\neq 0$, in other words, if all the eigenvalues of the Hessian are
strictly positive, then we call $(\phi_0,\chi_0)$ a quadratic
zero.  In this case $\mathbf{T}_0$ and $\mathbf{T}_1$ are the only equality
constraints placed on $\Omega$ by the existence of the zero
$(\phi_0,\chi_0)$.  We can immediately state the following important
result.
\begin{theorem}
\label{thm:quadzeros}
\textit{A quadratic zero is isolated: there is a finite distance to the next
zero.  Hence, a witness can have at most a finite number of
quadratic zeros.}
\end{theorem}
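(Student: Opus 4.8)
The plan is to prove that a quadratic zero is isolated by examining the local structure of the biquadratic form near the zero, using the polynomial expansion already set up in equations (\ref{eq:realparam})--(\ref{eq:hess0}). The key observation is that at a quadratic zero $(\phi_0,\chi_0)$, by definition the Hessian matrix $G_\Omega$ is strictly positive definite, so all its eigenvalues are bounded below by some $\lambda_{\min}>0$. I want to show that in a sufficiently small punctured neighbourhood of $(\phi_0,\chi_0)$ there are no further zeros.

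First I would write any nearby candidate direction as $(\phi,\chi)=(\phi_0+\xi,\chi_0+\zeta)$ with $\xi=J_0 x$, $\zeta=K_0 y$, so that $z^\top=(x^\top,y^\top)$ parametrizes the transverse directions. Since $(0,0)$ is a zero and the linear term $f_1$ vanishes identically (the constraints $\T_0$ and $\T_1$), the local expansion reads
\begin{equation}
f(x,y)=z^\top G_\Omega\, z+f_3(x,y)+f_4(x,y)\,,
\end{equation}
where $f_3$ is cubic and $f_4$ is quartic in $z$. The quadratic part satisfies $z^\top G_\Omega\, z\geq\lambda_{\min}\|z\|^2$, while the higher-order terms are bounded by $|f_3(x,y)+f_4(x,y)|\leq C(\|z\|^3+\|z\|^4)$ for some constant $C$ depending on $\Omega$. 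Hence for $\|z\|$ small enough, say $\|z\|\leq\rho$ with $\rho$ chosen so that $C(\rho+\rho^2)<\lambda_{\min}$, we get $f(x,y)\geq(\lambda_{\min}-C(\|z\|+\|z\|^2))\|z\|^2>0$ for all $z\neq 0$ in that ball. Therefore $(\phi_0,\chi_0)$ is the only zero within that neighbourhood, which is exactly the claim of isolation.

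The finiteness of the number of quadratic zeros then follows from compactness. The set of zeros lives in a product of projective spaces (recalling that $(a\phi_0,b\chi_0)$ counts as the same zero), which is compact, and any collection of isolated points with pairwise positive separation in a compact set must be finite; alternatively one may invoke Theorem~\ref{thm:natureofzeros}, which already guarantees that the zero set has finitely many components, so a set consisting entirely of isolated points is finite. The main subtlety to handle carefully is the uniformity of the bound $C$ on the cubic and quartic remainders: one must confirm that $f_3$ and $f_4$ are genuine polynomials in $(x,y)$ with coefficients fixed by $\Omega$, so that $C$ can be taken constant over the relevant neighbourhood and does not degenerate as $\|z\|\to 0$. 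Since $f_3$ and $f_4$ are given explicitly as homogeneous forms of degrees three and four, this is routine, and the strict positivity of $G_\Omega$ supplies the dominant quadratic lower bound that drives the whole estimate.
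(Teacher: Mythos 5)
Your proof is correct and follows essentially the same route as the paper: the paper states this theorem without a written proof, as an immediate consequence of the fact that at a quadratic zero the gradient vanishes and the Hessian is strictly positive definite, so that the quadratic term dominates the cubic and quartic remainders in a small neighbourhood --- exactly the estimate you spell out. One caveat: your first finiteness argument (``isolated points with pairwise positive separation in a compact set must be finite'') is not sound on its own, since isolation gives no uniform separation bound and an infinite family of quadratic zeros could a priori accumulate at a point that is itself only a quartic zero; however, your fallback to Theorem~\ref{thm:natureofzeros} --- each isolated zero is a connected component of the zero set, and there are only finitely many components --- closes this correctly.
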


\subsection{Hessian zeros}
\label{sec:Hessianzeros}

The second alternative is that the Hessian has $K$ zero eigenvalues
with $K\geq 1$.  We will call $z\neq 0$ a Hessian zero at the zero
$(\phi_0,\chi_0)$ if $G_{\Omega}z=0$.  It is then a real linear
combination
\begin{equation}
z=\sum_{i=1}^Ka_iz_i
\end{equation}
of basis vectors $z_i\in\textrm{Ker}\,G_{\Omega}$.  The $K$ linearly
independent eigenvectors $z_i$ define the following system of linear
constraints on $\Omega$,
\begin{equation}
\label{eq:T2icnstr}
\mathbf{T}_2:H\rightarrow\mathbb{R}^{2K(N_a+N_b-2)}\,,\quad
\left(\mathbf{T}_2\Omega\right)_i=
G_{\Omega}z_i=0\,,\quad i=1,\ldots,K\,.
\end{equation}
These constraints ensure that $G_{\Omega}z=0$ and hence
$f_2(x,y)=z^TG_{\Omega}z=0$.  The vanishing of the quadratic term
of the polynomial implies that the cubic term must vanish as well.  We
therefore call $(\phi_0,\chi_0)$ a quartic zero, and the direction $z$
at $(\phi_0,\chi_0)$ a quartic direction.  The vanishing cubic term is
\begin{equation}
f_3(x,y)=2\sum_{l,m,n}a_la_ma_n
\mathrm{Re}\,(
(\phi_0\otimes\zeta_l)^\dagger\Omega(\xi_m\otimes\zeta_n)
+(\xi_l\otimes\chi_0)^\dagger\Omega(\xi_m\otimes\zeta_n))=0,
\end{equation}
where $\xi_i=J_0x_i$ and $\zeta_i=K_0y_i$.  Since the
product $a_la_ma_n$ is completely symmetric in the indices $l,m,n$ we
should symmetrize the expression it multiplies.  All the different
properly symmetrized expressions must vanish.  In this way we obtain a
linear system
\begin{equation}
\left(\mathbf{T}_3\Omega\right)_{lmn}=0
\qquad\mathrm{with}\qquad
1\leq l\leq m\leq n\leq K\,.
\end{equation}
The number of equations is the binomial coefficient $\tbinom{K+2}{3}$,
but it is likely that these constraints on $\Omega$ will in general
not be independent.

The total number of constraints in $\mathbf{T}_2$ and $\mathbf{T}_3$ is therefore
\begin{equation}
M_4(K)=2K(N_a+N_b-2)+\binom{K+2}{3}\,.
\end{equation}
Note that $M_4(1)=M_2=2(N_a+N_b)-3$.  Since the constraints in
$\mathbf{T}_2$ and $\mathbf{T}_3$ address other terms in the polynomial than those in
$\mathbf{T}_0$ and $\mathbf{T}_1$, the two sets of constraints should be independent
from each other.  Furthermore, in the case $K=1$, all $M_4(1)$
equations in the total system $\mathbf{T}_2,\mathbf{T}_3$ should be linearly
independent.  In the case of $K>1$ we expect overlapping constraints.

\subsection{Summary of constraints}

Let $\Omega$ be a witness, and let $Z$ be the complete set of zeros
of $\Omega$, with $Z'\subseteq Z$ as the subset of quartic zeros.
Each zero in $Z$ defines zeroth and first order equality constraints
$\mathbf{T}_0$ and $\mathbf{T}_1$.  The combination of a set of
constraints is the direct sum.  Thus we define $\mathbf{U}_0$ as the direct
sum of the constraints $\mathbf{T}_0$ over all the zeros in $Z$, and
similarly $\mathbf{U}_1$ as the direct sum of the constraints $\mathbf{T}_1$.  We
define $\mathbf{U}_{01}=\mathbf{U}_0\oplus\mathbf{U}_1$.  Each zero in $Z'$ introduces
additional second and third order constraints $\mathbf{T}_2$ and $\mathbf{T}_3$. We
denote the direct sums of these by $\mathbf{U}_2$ and $\mathbf{U}_3$, respectively,
and we define $\mathbf{U}_{23}=\mathbf{U}_2\oplus\mathbf{U}_3$.  Denote by $\mathbf{U}_\Omega$ the
full system of constraints on $\Omega$,
$\mathbf{U}_\Omega=\mathbf{U}_{01}\oplus\mathbf{U}_{23}$.

It is an important observation that all the constraints are completely
determined by the zeros and Hessian zeros of $\Omega$.  Thus they
depend only indirectly on $\Omega$.  In particular, $\mathbf{U}_{01}$ depends
only on $Z$, whereas $\mathbf{U}_{23}$ depends on $Z'$ and on the kernel of
the Hessian at each quartic zero.

A boundary witness with only quadratic zeros will be called here a
quadratic boundary witness, or simply a quadratic witness, since we
are talking most of the time about boundary witnesses.  A boundary
witness with at least one quartic zero will be called a quartic
witness.  Note that if $\Omega$ is quadratic then $Z'$ is empty so
$\mathbf{U}_\Omega=\mathbf{U}_{01}$.  This classification of boundary witnesses as
quadratic and quartic is fundamental, since the two classes have
rather different properties.  The quadratic witnesses turn up in much
greater numbers in random searches, and are also simpler to
understand theoretically.


\section{Extremal witnesses}
\label{sec:extremal}

In this section we apply the constraints developed in the previous
section to study extremal witnesses.  We approach the problem through
the basic definition that a nonextremal witness is one which can be
written as a convex combination of other witnesses.  In the language
of convex sets it is an interior point of a face of $\mathcal{S}_1^{\circ}$ of
dimension one or higher.

In Section~\ref{sec:characterizationI} we characterize the extremal
witnesses in terms of their zeros.  In
Sections~\ref{sec:characterization} and~\ref{sec:facesSud} we
characterize them in terms of the constraints related to their zeros.
and present a search algorithm for finding them numerically.  The
geometrical interpretation of this algorithm is that we start from a
given witness, reconstruct the unique face of $\mathcal{S}_1^{\circ}$ of which it is
an interior point, go to the boundary of this face, and repeat the
process.  In Section~\ref{sec:zerosgeneric} we discuss the expected
number of zeros of extremal witnesses.

The theorems~\ref{thm:extremecondI} and~\ref{thm:extremecond} give
necessary and sufficient conditions for a witness to be extremal, and
they are main results of our work.

\subsection{Zeros of witnesses, convexity and extremality}
\label{sec:characterizationI}

Let $\Omega$ be a convex combination of two different witnesses
$\Lambda$ and $\Sigma$,
\begin{equation}
\Omega=(1-p)\,\Lambda+p\,\Sigma\;,
\end{equation}
with $0<p<1$.  Then the corresponding biquadratic form is a convex
combination
\begin{equation}
\label{eq:biqfOLS}
f_{\Omega}(\phi,\chi)=(1-p)\,f_{\Lambda}(\phi,\chi)+p\,f_{\Sigma}(\phi,\chi)\,,
\end{equation}
and the Hessian matrix at any zero $(\phi_0,\chi_0)$ of $\Omega$ is
also a convex combination
\begin{equation}
G_{\Omega}=(1-p)\,G_{\Lambda}+p\,G_{\Sigma}\,.
\end{equation}
The facts that the biquadratic form of an entanglement witness is
positive semidefinite, and that the Hessian matrix at any zero of a
witness is also positive semidefinite, imply the following result.
\begin{theorem}
\label{thm:witzerosHessian}
\textit{If $\Omega$ is a convex combination of two witnesses $\Lambda$ and
$\Sigma$, as above, then $(\phi_0,\chi_0)$ is a zero of $\Omega$ if
and only if it is a zero of both $\Lambda$ and $\Sigma$.}

\textit{Similarly, $z$ is a Hessian zero of $\Omega$ at the zero
$(\phi_0,\chi_0)$ if and only if it is a Hessian zero at
$(\phi_0,\chi_0)$ of both $\Lambda$ and $\Sigma$.}

\textit{Thus, all witnesses in the interior of the line segment between
$\Lambda$ and $\Sigma$ have exactly the same zeros and Hessian
zeros.}
\end{theorem}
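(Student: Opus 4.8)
The plan is to exploit the fact that all three relevant objects---the biquadratic form, its value at a point, and its Hessian at a point---are linear in $\Omega$, together with the crucial \emph{positivity} (nonnegativity) of the forms and Hessians coming from the witness property. Everything reduces to the following elementary convexity fact: if $a,b\geq 0$ and $(1-p)a+pb=0$ with $0<p<1$, then necessarily $a=b=0$. I would establish each statement by applying this fact in an appropriate setting.

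For the first statement, I would start from the convex combination \eqref{eq:biqfOLS} of biquadratic forms. Since $\Lambda$ and $\Sigma$ are witnesses, $f_{\Lambda}(\phi_0,\chi_0)\geq 0$ and $f_{\Sigma}(\phi_0,\chi_0)\geq 0$ for any $(\phi_0,\chi_0)$. If $(\phi_0,\chi_0)$ is a zero of $\Omega$, then $f_{\Omega}(\phi_0,\chi_0)=0$, so the convexity fact forces $f_{\Lambda}(\phi_0,\chi_0)=f_{\Sigma}(\phi_0,\chi_0)=0$, \ie $(\phi_0,\chi_0)$ is a zero of both. The converse is immediate: if both forms vanish at $(\phi_0,\chi_0)$, then so does their convex combination.

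For the second statement, I would argue analogously but one level up, using the Hessian. Fix a common zero $(\phi_0,\chi_0)$ of $\Lambda$ and $\Sigma$ (which by the first part is also a zero of $\Omega$), so all three Hessians $G_{\Lambda}$, $G_{\Sigma}$, $G_{\Omega}$ are defined there and satisfy $G_{\Omega}=(1-p)\,G_{\Lambda}+p\,G_{\Sigma}$. The key point is that $G_{\Lambda}$ and $G_{\Sigma}$ are each positive semidefinite, since they are Hessians of positive biquadratic forms at a zero (as developed around \eqref{eq:f2constr}). Thus for any fixed $z$, the scalars $z\t G_{\Lambda}z$ and $z\t G_{\Sigma}z$ are nonnegative. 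If $z$ is a Hessian zero of $\Omega$, then $z\t G_{\Omega}z=0$, and the same convexity fact gives $z\t G_{\Lambda}z=z\t G_{\Sigma}z=0$. Here I would insert the standard observation that for a positive semidefinite symmetric matrix $G$, the condition $z\t Gz=0$ is equivalent to $Gz=0$ (write $G=R\t R$, so $z\t Gz=\|Rz\|^2=0$ implies $Rz=0$ hence $Gz=R\t Rz=0$). Applying this to $G_{\Lambda}$ and $G_{\Sigma}$ yields $G_{\Lambda}z=G_{\Sigma}z=0$, so $z$ is a Hessian zero of both. The converse direction again follows by linearity.

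The third statement is then a corollary requiring essentially no new work: any interior point of the segment is itself a proper convex combination of $\Lambda$ and $\Sigma$ (with some parameter $0<p'<1$), so the first two statements apply verbatim and show its zeros and Hessian zeros coincide with the common zeros and Hessian zeros of $\Lambda$ and $\Sigma$, independently of $p'$. The main subtlety to handle carefully---the only place where more than bookkeeping is needed---is the passage from $z\t Gz=0$ to $Gz=0$ for the individual Hessians; I would make sure to state the positive semidefiniteness of $G_{\Lambda}$ and $G_{\Sigma}$ explicitly as the input justifying this step, since it is what upgrades the scalar vanishing into the vector identity that the definition of Hessian zero demands.
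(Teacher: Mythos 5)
Your proposal is correct and follows essentially the same route as the paper: the paper derives the theorem directly from the linearity of $f_{\Omega}$ and $G_{\Omega}$ in $\Omega$ together with the positive semidefiniteness of the biquadratic form and of the Hessian at a zero, which is exactly the convexity fact you invoke. Your only addition is to spell out the step from $z\t G z=0$ to $Gz=0$ for a positive semidefinite $G$, which the paper leaves implicit but which is indeed the point needed to match the paper's definition of a Hessian zero.
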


Assume further that $\Lambda$ and $\Sigma$ are extremal points of the
intersection of the straight line with $\mathcal{S}_1^{\circ}$, so that
$\Omega\notin\mathcal{S}_1^{\circ}$ when $p<0$ and when $p>1$.  To be specific,
consider the case $p>1$.  We want to show that $\Sigma$ has at least
one zero or Hessian zero in addition to the zeros and Hessian zeros of
the witnesses in the interior of the line segment.

By assumption, the set of negative points
\begin{equation}
X_-(p)=\{(\phi,\chi)\mid f_{\Omega}(\phi,\chi)<0\}
\end{equation}
is nonempty for $p>1$.  Clearly $X_-(p)\subset X_-(q)$ for $1<p<q$,
since $X_-(p)$ is empty for $0\leq p\leq 1$, and
$f_{\Omega}(\phi,\chi)$ is a linear function of $p$ for fixed
$(\phi,\chi)$ as given by equation \eqref{eq:biqfOLS}.  The closure
$\overline{X}_-(p)$ is a compact set when we normalize so that
$\|\phi\|=\|\chi\|=1$.  It follows that the limit of
$\overline{X}_-(p)$ as $p\to 1+$ is a nonempty set $X_0$ of zeros of
the witness $\Sigma$,
\begin{equation}
X_0=\bigcap_{p>1}\overline{X}_-(p)\,.
\end{equation}

Every zero $(\phi_0,\chi_0)\in X_0$ is of the kind we are looking for.
If it is not a zero of $\Omega$ for $p<1$, then it is a new zero of
$\Sigma$ as compared to the zeros of the witnesses in the interior of
the line segment.  If it is a zero of $\Omega$ for $p<1$, then
$\Sigma$ has at least one Hessian zero at $(\phi_0,\chi_0)$ which is
not a Hessian zero at $(\phi_0,\chi_0)$ of the witnesses in the
interior of the line segment.

To prove the last statement, assume that $(\phi_0,\chi_0)\in X_0$ is a
zero of $\Omega$ for $0<p<1$.  Then it is a zero of $\Omega$ for any
$p$, again because $f_{\Omega}(\phi,\chi)$ is a linear function of
$p$.  Similarly, if $z$ is a Hessian zero of $\Omega$ at
$(\phi_0,\chi_0)$ for $0<p<1$, it is a Hessian zero of $\Omega$ at
$(\phi_0,\chi_0)$ for any $p$.

From the assumption that $(\phi_0,\chi_0)\in X_0$ follows that
$f_{\Omega}(\phi,\chi)$ for any $p>1$ takes negative values for some
$(\phi,\chi)$ arbitrarily close to $(\phi_0,\chi_0)$.  The only way
this may happen is that for $p>1$ the second derivative
$z^TG_{\Omega}z$ is negative in some direction $z$, meaning that
$G_{\Omega}$ for $p>1$ has one or more negative eigenvalues
$\lambda_i(p)$, with eigenvectors $z_i(p)$ that are orthogonal to each
other and orthogonal to the $p$ independent part of $\textrm{Ker}\,G_{\Omega}$.

In the limit $p\rightarrow 1+$ the negative eigenvalues $\lambda_i(p)$ of
$G_{\Omega}$ go to zero, and the corresponding eigenvectors $z_i(p)$
go to eigenvectors $z_i(1)$ of $G_{\Sigma}$ with zero eigenvalues.
These eigenvectors are then Hessian zeros of $\Sigma$ at
$(\phi_0,\chi_0)$ that are not Hessian zeros of $\Omega$ for $0<p<1$.

We summarize the present discussion as follows.
\begin{theorem}
\label{thm:witzerosHessianII}
\textit{If a line segment in $\mathcal{S}_1^{\circ}$ with end points $\Lambda$ and $\Sigma$
can not be prolonged within $\mathcal{S}_1^{\circ}$ in either direction, then
$\Sigma$ has the same zeros and Hessian zeros as the interior points
of the line segment, and at least one additional zero or Hessian zero.}

\textit{The same holds for $\Lambda$, with additional zeros and Hessian
zeros that are different from those of $\Sigma$.}
\end{theorem}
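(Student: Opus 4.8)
The statement has two halves, and the first is essentially free. By Theorem~\ref{thm:witzerosHessian} a pair $(\phi_0,\chi_0)$ is a zero (respectively a Hessian zero) of the interior witnesses precisely when it is a zero (respectively a Hessian zero) of \emph{both} $\Lambda$ and $\Sigma$; in particular every zero and every Hessian zero of the interior points is one of $\Sigma$. The entire content of the theorem is therefore the production of \emph{one more} zero or Hessian zero of $\Sigma$, and the symmetric statement for $\Lambda$. Throughout I would exploit the single structural fact, visible in~\eqref{eq:biqfOLS}, that for each fixed $(\phi,\chi)$ the number $f_{\Omega}(\phi,\chi)$ and the matrix $G_{\Omega}$ depend affinely on $p$.

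The plan for the extra zero is the limiting construction sketched before the statement. Non-prolongability beyond $\Sigma$ means $\Omega\notin\Sud$ for every $p>1$, so $X_-(p)=\{(\phi,\chi)\mid f_{\Omega}(\phi,\chi)<0\}$ is nonempty there. Affinity in $p$, together with $f_{\Omega}\geq 0$ at $p=1$, shows that once $f_{\Omega}(\phi,\chi)$ is negative it stays negative for larger $p$, giving the nesting $X_-(p)\subseteq X_-(q)$ for $1<p<q$. After imposing $\|\phi\|=\|\chi\|=1$ each $\overline{X}_-(p)$ is compact, so the nested family of nonempty compact sets has nonempty intersection
\begin{equation}
X_0=\bigcap_{p>1}\overline{X}_-(p)\,,
\end{equation}
by the finite-intersection property. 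Any $(\phi_0,\chi_0)\in X_0$ is a zero of $\Sigma$: it is approached by points with $f_{\Omega}<0$, while $f_{\Sigma}\geq 0$, so the value is squeezed to $0$ in the limit $p\to 1+$.

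It remains to run the dichotomy on a chosen point $(\phi_0,\chi_0)\in X_0$. If it is not a zero of the interior witnesses it is already the new zero of $\Sigma$ we want. If it is a zero of the interior witnesses, then by affinity its gradient vanishes for \emph{all} $p$, yet $f_{\Omega}$ takes negative values arbitrarily close to it for every $p>1$. With the gradient absent, the sign of $f_{\Omega}$ near the point is controlled by the quadratic term, so $G_{\Omega}$ must carry a negative eigenvalue $\lambda_i(p)<0$ for $p>1$; as $p\to 1+$ these eigenvalues tend to $0$ and, after extracting a convergent subsequence, their eigenvectors tend to zero-eigenvectors of $G_{\Sigma}$ that lie outside the $p$-independent part of $\Ker G_{\Omega}$. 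Such a limit vector is a Hessian zero of $\Sigma$ at $(\phi_0,\chi_0)$ that the interior points do not share.

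The claim for $\Lambda$ follows by applying the identical argument on the other side, $p<0$. Finally, the additional zeros and Hessian zeros of $\Lambda$ are automatically distinct from those of $\Sigma$: a zero or Hessian zero common to $\Lambda$ and $\Sigma$ would, by Theorem~\ref{thm:witzerosHessian}, already belong to the interior witnesses, contradicting that it was produced as \emph{additional}. The step I expect to be delicate is the Hessian case of the dichotomy: one must rule out that the negative values of the non-witness form $f_{\Omega}$ for $p>1$ arise from a higher-order (cubic or quartic) effect along a kernel direction rather than from a genuine negative Hessian eigenvalue, and one must control the convergence of the eigenvectors $z_i(p)$ through possible eigenvalue crossings and changes of multiplicity as $p\to 1+$. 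The rest is compactness and the affine dependence on $p$.
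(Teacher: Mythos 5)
Your proposal is correct and follows essentially the same route as the paper: the nested family of negative sets $X_-(p)$, the compactness argument yielding $X_0=\bigcap_{p>1}\overline{X}_-(p)$, the dichotomy between a genuinely new zero and a new Hessian zero, and the limit of negative Hessian eigenvalues and eigenvectors as $p\to 1+$. The step you flag as delicate (excluding higher-order effects along kernel directions and controlling eigenvector convergence) is exactly the step the paper itself asserts without further justification, so your argument matches the paper's proof in both structure and level of rigor.
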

These theorems lead to the following extremality criterion for
witnesses.
\begin{theorem}
\label{thm:extremecondI}
\textit{A witness $\Omega$ is extremal if and only if no witness
$\Lambda\neq\Omega$ has a set of zeros and Hessian zeros including
the zeros and Hessian zeros of $\Omega$.}

\textit{An equivalent condition is that there can exist no witness $\Lambda\neq\Omega$
satisfying the constraints $\mathbf{U}_{\Omega}\Lambda=0$.}
\end{theorem}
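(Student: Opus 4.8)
The plan is to prove the theorem in two stages: first show that the two stated conditions are equivalent, and then tie them to extremality through Theorems~\ref{thm:witzerosHessian} and~\ref{thm:witzerosHessianII}. The equivalence of the two formulations is essentially a restatement of the constraint analysis of Section~\ref{sec:constraints}, now applied to a second witness $\Lambda$ in place of $\Omega$. For a witness $\Lambda$ the condition $\UU_{\Omega}\Lambda=0$ means precisely that $\Lambda$ satisfies, at each zero $(\phi_0,\chi_0)$ of $\Omega$, the equalities $\T_0$ and $\T_1$, and at each quartic zero also $\T_2$ and $\T_3$. I would argue that, because $\Lambda$ is itself a witness (so $f_{\Lambda}\geq 0$), these are equivalent to $\Lambda$ possessing the zeros and Hessian zeros of $\Omega$: $\T_0\Lambda=0$ says $f_{\Lambda}(\phi_0,\chi_0)=0$, which for the nonnegative form $f_{\Lambda}$ forces $(\phi_0,\chi_0)$ to be a genuine (minimal) zero and makes the gradient condition $\T_1\Lambda=0$ automatic; $\T_2\Lambda=0$ says exactly that $\Ker G_{\Lambda}\supseteq\Ker G_{\Omega}$ at that zero, i.e.\ every Hessian zero of $\Omega$ is a Hessian zero of $\Lambda$; and once $\T_2\Lambda=0$ holds, the vanishing of the quadratic term of $f_{\Lambda}$ along these directions forces the cubic term to vanish too (by the one-variable argument of Section~\ref{sec:polynomeq}), so $\T_3\Lambda=0$ follows by polarization. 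Writing $\mathcal{Z}(W)$ for the zeros of a witness $W$ together with the kernel $\Ker G_{W}$ of the Hessian at each zero, ordered by inclusion of both data, this shows that for witnesses the condition $\UU_{\Omega}\Lambda=0$ is equivalent to $\mathcal{Z}(\Lambda)\supseteq\mathcal{Z}(\Omega)$; it remains only to connect the latter to extremality.

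For the implication that a non-extremal $\Omega$ admits such a $\Lambda$, I would invoke the earlier theorems directly. If $\Omega$ is not extremal it is an interior point of a maximal line segment in $\Sud$ with endpoints $P$ and $Q$. By Theorem~\ref{thm:witzerosHessian} every interior point of this segment, in particular $\Omega$, has the same zeros and Hessian zeros; by Theorem~\ref{thm:witzerosHessianII} the endpoint $Q$ has all of these together with at least one more. Hence $\Lambda:=Q\neq\Omega$ is a witness with $\mathcal{Z}(Q)\supsetneq\mathcal{Z}(\Omega)$, which certainly gives $\mathcal{Z}(\Lambda)\supseteq\mathcal{Z}(\Omega)$, as required.

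For the converse, suppose some witness $\Lambda\neq\Omega$ satisfies $\mathcal{Z}(\Lambda)\supseteq\mathcal{Z}(\Omega)$; I want to conclude that $\Omega$ is not extremal. The clean route, which avoids any local expansion of $f_{\Omega}$, is to intersect the whole line through $\Omega$ and $\Lambda$ with the compact convex set $\Sud$, obtaining a nondegenerate segment $[P,Q]$ that cannot be prolonged in either direction and that contains both $\Omega$ and $\Lambda$. If $\Omega$ lies in the interior of $[P,Q]$ it is a proper convex combination of the two distinct witnesses $P,Q$ and we are done. The delicate case, which I expect to be the main obstacle, is to rule out $\Omega$ being an endpoint, say $\Omega=P$. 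Then $\Lambda$ lies in $(P,Q]$, and applying Theorem~\ref{thm:witzerosHessian} to the endpoints $P=\Omega$ and $Q$ shows that any interior point $R$ has $\mathcal{Z}(R)=\mathcal{Z}(\Omega)\cap\mathcal{Z}(Q)$. Using the hypothesis $\mathcal{Z}(\Lambda)\supseteq\mathcal{Z}(\Omega)$, together with the fact that $\Lambda$ is either $Q$ or is itself such an interior point (in which case $\mathcal{Z}(\Lambda)=\mathcal{Z}(\Omega)\cap\mathcal{Z}(Q)$), one finds in either case that $\mathcal{Z}(\Omega)\subseteq\mathcal{Z}(Q)$, whence $\mathcal{Z}(R)=\mathcal{Z}(\Omega)$. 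This contradicts Theorem~\ref{thm:witzerosHessianII}, which forces the endpoint $\Omega=P$ to have strictly more zeros or Hessian zeros than the interior points. Hence $\Omega$ cannot be an endpoint, so it is interior to $[P,Q]$ and therefore not extremal.

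I regard the endpoint case above as the crux: the hypothesis only guarantees that $\Lambda$ carries \emph{at least} the zero data of $\Omega$, not that $\Lambda$ sits on the ``far side'' of $\Omega$, and the line-intersection argument is what converts this one-sided information into a genuine proper convex decomposition. An alternative, more computational route would try to show directly that $(1+t)\Omega-t\Lambda$ remains a witness for small $t>0$; this succeeds easily at quadratic zeros by a perturbation bound on the strictly positive definite Hessian $G_{\Omega}$, but at quartic zeros it requires comparing the quartic terms of $f_{\Omega}$ and $f_{\Lambda}$ along $\Ker G_{\Omega}$, which is exactly the complication the line-intersection argument sidesteps. Combining the equivalence of the first paragraph with the two implications then yields the theorem.
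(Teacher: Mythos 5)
Your proof is correct and follows essentially the same route as the paper's: both directions rest on Theorems~\ref{thm:witzerosHessian} and~\ref{thm:witzerosHessianII}, with the converse handled by intersecting the line through $\Omega$ and $\Lambda$ with $\Sud$ and using Theorem~\ref{thm:witzerosHessianII} to exclude $\Omega$ from being an endpoint of the maximal segment --- which is exactly the paper's terse ``this line segment can be prolonged within $\Sud$'' step, spelled out in detail. Your opening paragraph establishing the equivalence between $\UU_{\Omega}\Lambda=0$ and inclusion of zeros and Hessian zeros is a sound elaboration of what the paper treats as immediate from the definitions of the constraints.
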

\begin{proof}
The ``if'' part follows directly from
Theorem~\ref{thm:witzerosHessian}.

To prove the ``only if'' part, assume that the set of zeros and
Hessian zeros of some witness $\Lambda\neq\Omega$ include the zeros
and Hessian zeros of $\Omega$.  Then by
Theorem~\ref{thm:witzerosHessian} the interior points of the line
segment with $\Lambda$ and $\Omega$ as end points have exactly the
same zeros and Hessian zeros as $\Omega$.  By
Theorem~\ref{thm:witzerosHessianII} this line segment can be
prolonged within $\mathcal{S}_1^{\circ}$ so that it gets $\Omega$ as an interior
point.  Hence $\Omega$ is not extremal.\ \
\end{proof}
\subsection{How to search for extremal witnesses}
\label{sec:characterization}

Once the zeros and Hessian zeros of a witness $\Omega$ are known, it
is a simple computational task to find a finite perturbation of
$\Omega$ within the unique face of $\mathcal{S}_1^{\circ}$ where $\Omega$ is an
interior point.  The most general direction for such a perturbation is
a traceless $\Gamma\in\textrm{Ker}\,\mathbf{U}_\Omega$.  Note that we only need to find
some $\Gamma'\in\textrm{Ker}\,\mathbf{U}_\Omega$ not proportional to $\Omega$, then
$\Gamma=\Gamma'-(\textrm{Tr}\,\Gamma')\,\Omega$ is nonzero and traceless and
lies in $\textrm{Ker}\,\mathbf{U}_\Omega$.

\begin{theorem}
\label{thm:witperturb}
\textit{Let $\Omega$ be a witness, and let $\Gamma\in H$, $\Gamma\neq 0$,
$\mathrm{Tr}\,\Gamma=0$.  Then $\Lambda=\Omega+t\Gamma$ is a witness for all
$t$ in some interval $[t_1,t_2]$, with $t_1<0<t_2$, if and only if
$\Gamma\in\mathrm{Ker}\,\mathbf{U}_\Omega$.}

\textit{The maximal value of $t_2$ is the value of $t$ where $\Lambda$
acquires a new zero or Hessian zero.  The minimal value of $t_1$ is
determined in the same way.}
\end{theorem}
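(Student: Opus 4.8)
The plan is to prove this biconditional by unpacking what membership in $\Ker\UU_\Omega$ means and connecting it to the local behaviour of the biquadratic form near each zero. Recall that $\UU_\Omega = \UU_{01}\oplus\UU_{23}$ encodes exactly the equality constraints (vanishing of the constant, linear, and — at quartic zeros — quadratic and cubic terms) forced at every zero and Hessian zero of $\Omega$. The content of the theorem is that a \emph{traceless} perturbation $\Gamma$ keeps $\Omega+t\Gamma$ inside $\Sud$ for a two-sided interval of $t$ precisely when $\Gamma$ respects all these constraints.

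First I would prove the ``only if'' direction. Suppose $\Lambda = \Omega + t\Gamma$ is a witness for all $t\in[t_1,t_2]$ with $t_1<0<t_2$. Then $\Omega$ is a proper convex combination of the two endpoint witnesses $\Omega+t_1\Gamma$ and $\Omega+t_2\Gamma$, so by Theorem~\ref{thm:witzerosHessian} every zero and Hessian zero of $\Omega$ is shared by both endpoints, and hence (again by the same convexity argument applied along the segment) by $\Gamma$ regarded as a difference. Concretely, fix a zero $(\phi_0,\chi_0)$ of $\Omega$: since $f_{\Omega+t\Gamma}(\phi_0,\chi_0)=f_\Omega(\phi_0,\chi_0)+t\,f_\Gamma(\phi_0,\chi_0)$ must be nonnegative for $t$ of both signs while vanishing at $t=0$, we get $f_\Gamma(\phi_0,\chi_0)=0$, which is $\T_0\Gamma=0$ at that zero. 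The same two-sided-sign argument applied to the gradient and (at a quartic zero, using that the Hessian $G_{\Omega+t\Gamma}=G_\Omega+tG_\Gamma$ must stay positive semidefinite on a neighbourhood of $t=0$) to the Hessian acting on each $z_i\in\Ker G_\Omega$, and to the symmetrized cubic form, forces $\T_1\Gamma=0$, $\T_2\Gamma=0$, $\T_3\Gamma=0$ respectively. Summing over all zeros gives $\UU_\Omega\Gamma=0$.

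Next the ``if'' direction, which is where the real work lies. Assume $\Gamma\in\Ker\UU_\Omega$ and $\Tr\Gamma=0$. I must show $f_{\Omega+t\Gamma}(\phi,\chi)\ge 0$ for all $(\phi,\chi)$, for all small enough $t$ of either sign. The key is a compactness-plus-local-Taylor argument, exactly the model calculation of Section~\ref{sec:polynomeq} promoted to several variables. Away from the zero set of $\Omega$, the form $f_\Omega$ is bounded below by a positive constant on the compact normalized domain, so a small perturbation $t\,f_\Gamma$ cannot make it negative there. Near a zero $(\phi_0,\chi_0)$, I use the local parametrization \eqref{eq:realparam}: because $\Gamma$ satisfies $\T_0,\T_1$ at this zero, $f_\Gamma$ has vanishing constant and linear parts there, so $f_\Gamma = f_{\Gamma,2}+f_{\Gamma,3}+f_{\Gamma,4}$ starts at quadratic order. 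At a \emph{quadratic} zero, $G_\Omega$ is strictly positive definite on the relevant subspace, so $G_\Omega + tG_\Gamma$ stays positive definite for small $t$ and dominates the higher-order terms, keeping $f_{\Omega+t\Gamma}\ge 0$. At a \emph{quartic} zero, $G_\Omega$ has a kernel spanned by the $z_i$; the constraints $\T_2\Gamma=0$ guarantee $G_\Gamma z_i=0$, so $G_\Gamma$ also annihilates $\Ker G_\Omega$ and $G_\Omega+tG_\Gamma$ remains positive semidefinite with the same kernel for small $t$, while $\T_3\Gamma=0$ kills the perturbed cubic term in the quartic directions, leaving the (genuinely positive) quartic term $f_{\Omega,4}$ to control positivity in those directions.

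The main obstacle — and the step I would spend the most care on — is making the local estimate near a quartic zero fully rigorous, since there the quadratic form degenerates and one must show that after imposing $\T_2,\T_3$ the positivity is genuinely recovered at fourth order uniformly in a neighbourhood, and then patch these local neighbourhoods together with the compactness argument into a single interval $[t_1,t_2]$ valid for all $(\phi,\chi)$ simultaneously. I would handle the patching by noting that the zero set is compact (by Theorem~\ref{thm:natureofzeros} it is a finite union of isolated points and surfaces) and covering it by finitely many neighbourhoods on each of which the above local argument yields a positive radius in $t$, then taking the minimum. The final clause about $t_2$ being the first value at which a new zero or Hessian zero appears is essentially Theorem~\ref{thm:witzerosHessianII} read in reverse: the segment cannot be prolonged beyond exactly the $t$ where the positive-semidefiniteness first fails, which by the Taylor analysis is precisely where a new zero or a new kernel vector of the Hessian emerges.
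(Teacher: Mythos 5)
Your ``only if'' direction is correct and is essentially the paper's own argument (the paper simply invokes Theorem~\ref{thm:witzerosHessian} where you spell out the linearity in $t$ constraint by constraint). The genuine problem is the ``if'' direction, where your strategy differs from the paper's and the step you yourself flag as ``the main obstacle'' is not merely unfinished: the local claim you propose to prove is false as stated. Imposing $\T_0\Gamma=\T_1\Gamma=\T_2\Gamma=\T_3\Gamma=0$ \emph{at a single quartic zero} does not constrain the fourth-order part of $f_\Gamma$ at all, and that is exactly where negativity enters. A model in the local real parametrization: take
\begin{equation*}
f_\Omega(x,y)=(y_1-x_1y_2)^2,\qquad f_\Gamma(x,y)=-x_1^2y_2^2,
\end{equation*}
both of admissible bidegree (at most quadratic in $x$ and in $y$). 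At the origin all of your hypotheses hold: $f_\Gamma$ vanishes to third order, so $G_\Omega+tG_\Gamma$ is positive semidefinite with unchanged kernel, the perturbed cubic term vanishes on the kernel, and the quartic term of $f_\Omega$ is nonnegative there; yet $f_\Omega+tf_\Gamma<0$ on the curve $y_1=x_1y_2$, $x_1y_2\neq0$, arbitrarily close to the origin, for every $t>0$. What saves the theorem is that the origin is not the only zero of $f_\Omega$: it vanishes on the whole surface $y_1=x_1y_2$, and $f_\Gamma$ violates the zeroth-order constraint $\T_0$ at those neighbouring zeros. So any valid proof must use the constraints from all zeros simultaneously; your patching scheme, in which each neighbourhood is controlled using only the constraints at its own centre, cannot be completed. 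For the same reason your parenthetical ``genuinely positive quartic term'' is unjustified: along a continuum of zeros, which Theorem~\ref{thm:natureofzeros} allows, the quartic term vanishes identically in directions tangent to the zero set.

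The paper sidesteps exactly this difficulty by arguing indirectly, and that is the idea missing from your proposal. Since $f_{\Omega+t\Gamma}$ is linear in $t$ and $\Sud$ is compact, there is a largest $t_2\geq0$ such that $X_-(t)=\{(\phi,\chi)\mid f_{\Omega+t\Gamma}(\phi,\chi)<0\}$ is empty for $0\leq t\leq t_2$, and $t_2<\infty$. The limit of $\overline{X}_-(t)$ as $t\to t_2+$ consists of zeros of $\Lambda_2=\Omega+t_2\Gamma$, and the limit argument behind Theorem~\ref{thm:witzerosHessianII} --- which needs $\UU_\Omega\Gamma=0$ precisely to guarantee that the zeros and Hessian zeros of $\Omega$ persist as zeros and Hessian zeros of $\Omega+t\Gamma$ for \emph{all} $t$ --- shows that $\Lambda_2$ has at least one zero or Hessian zero that $\Omega$ does not have. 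Hence $\Lambda_2\neq\Omega$, so $t_2>0$; the same reasoning for $t<0$ gives $t_1<0$, and it also delivers the final clause of the theorem for free: the interval ends exactly where a new zero or Hessian zero appears. No quantitative local positivity estimate is required anywhere. If you wish to rescue a direct proof along your lines, you would have to formulate and prove a local statement invoking the constraints of all zeros in a neighbourhood at once, which is a substantially harder task than the single-zero Taylor analysis you sketch.
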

\begin{proof}
To prove the ``only if'' part, assume that $\Lambda=\Omega+t\Gamma$
is a witness for $t_1\leq t\leq t_2$.  Then by
Theorem~\ref{thm:witzerosHessian}, $\Lambda$ has the same zeros and
Hessian zeros as $\Omega$ for $t_1<t<t_2$.  Since the constraints
$\mathbf{U}_{\Omega}$ depend only on the zeros and Hessian zeros of
$\Omega$, we conclude that $\mathbf{U}_{\Omega}\Lambda=0$ for $t_1<t<t_2$,
and hence $\mathbf{U}_{\Omega}\Gamma=0$.

To prove the ``if'' part, assume that $\mathbf{U}_{\Omega}\Gamma=0$.  Since
$\mathbf{U}_{\Omega}\Omega=0$, it follows that $\mathbf{U}_{\Omega}\Lambda=0$ for
any value of $t$.  Consider the set of negative points,
\begin{equation}
X_-(t)=\{(\phi,\chi)\mid f_{\Lambda}(\phi,\chi)<0\}\,.
\end{equation}

We want to argue that $X_-(t)$ must be empty for $t$ in some interval
$[t_1,t_2]$ with $t_1<0<t_2$.

In fact, since $\mathcal{S}_1^{\circ}$ is a compact set, there must exist some $t_2\geq
0$ such that $X_-(t)$ is empty for $0\leq t\leq t_2$ and nonempty for
$t>t_2$.  Then the limit of $\overline{X}_-(t)$ as $t\to t_2+$ is a
nonempty set $X_0$ of zeros of $\Lambda_2=\Omega+t_2\Gamma$.  A
similar reasoning as the one leading to
Theorem~\ref{thm:witzerosHessianII} now leads us to the conclusion
that $\Lambda_2$ must have at least one zero or Hessian zero which is
not a zero or Hessian zero of $\Omega$.  This proves that
$\Lambda_2\neq\Omega$ and $t_2>0$.

By a similar argument we deduce the existence of a lower limit
$t_1<0$.
\end{proof}
Figure~\ref{fig:boundary} shows a model for how a new isolated
quadratic zero of $\Lambda$ appears, or how an existing quadratic zero
turns into a quartic zero, as the parameter $t$ increases in the
function
\begin{equation}
f_t(u)=f_{\Omega+t\Gamma}(\phi+u\phi',\chi+u\chi')\,.
\end{equation}
\begin{figure}
\centering
\includegraphics[width=0.49\textwidth]{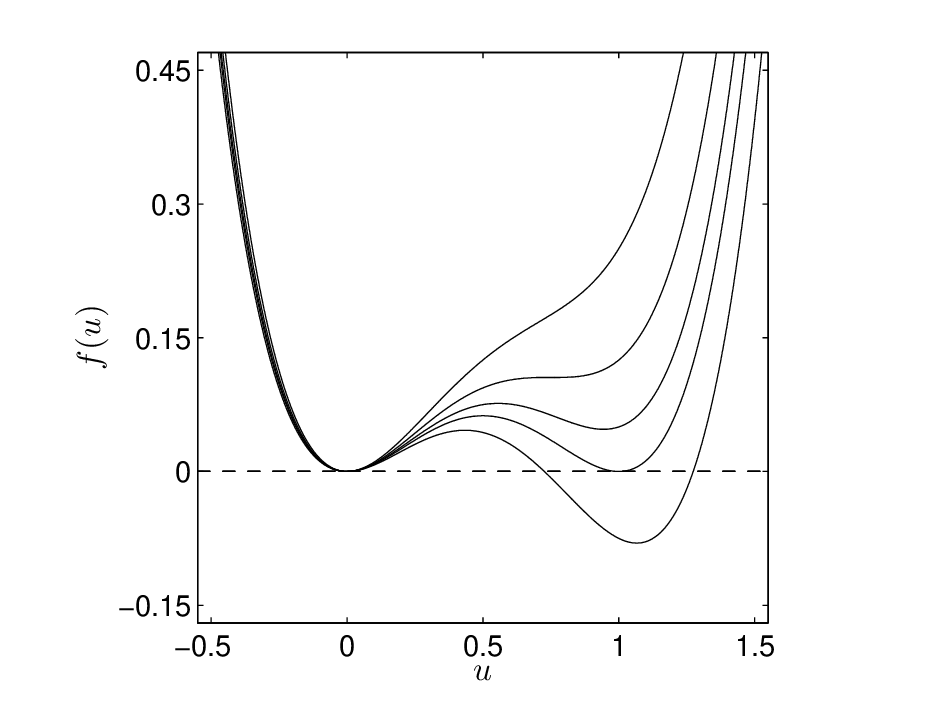}
\includegraphics[width=0.49\textwidth]{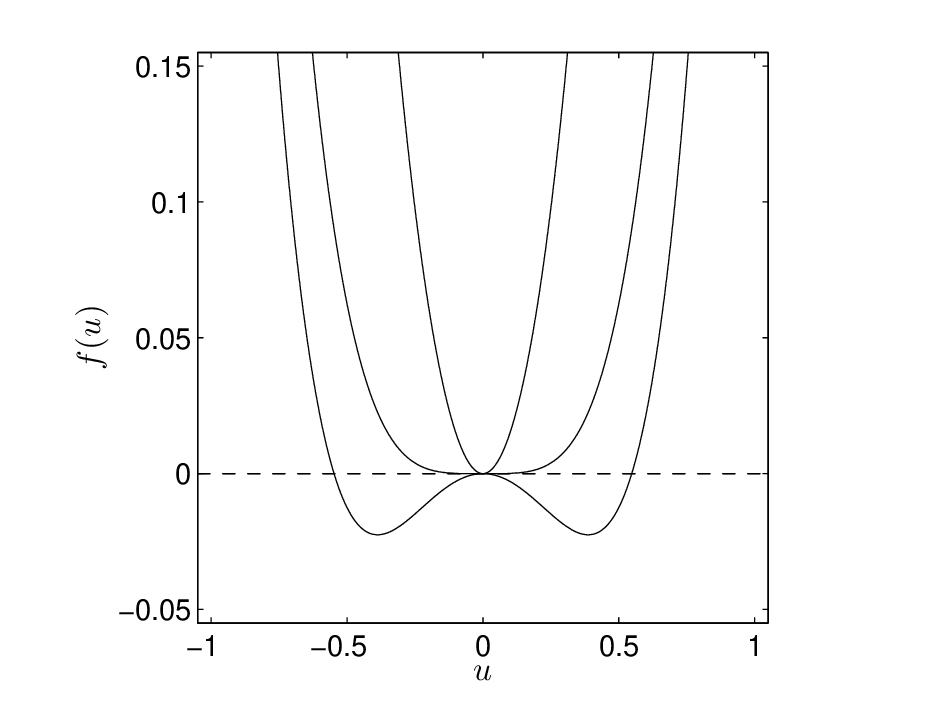}
\caption{Models for how the positivity limit is reached.  Left: a
new local minimum appears, turning into a zero and then a negative
minimum.  The model function plotted for five different values of
$t$ is $f_t(u)=u^2((u-1)^2+1-t)$, depending on a parameter $t$
with the critical value $t_c=1$.  Right: a quadratic zero turns
into a quartic zero, and then new negative minima branch off.  The
model function is $f_t(u)=u^2(u^2+1-t)$, again with the critical
parameter value $t_c=1$.}
\label{fig:boundary}
\end{figure}

The next theorem is an immediate corollary.  It is a slightly stronger
version of Theorem~\ref{thm:extremecondI}.  It is interesting for the
theoretical understanding, and together with
Theorem~\ref{thm:witperturb} it is directly useful for numerical
calculations.
\begin{theorem}
\label{thm:extremecond}
\textit{A witness $\Omega$ is extremal if and only if $\mathrm{Ker}\,\mathbf{U}_\Omega$ is
one dimensional (spanned by $\Omega$).}
\end{theorem}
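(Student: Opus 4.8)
The plan is to read this off from Theorem~\ref{thm:extremecondI} together with the perturbation result Theorem~\ref{thm:witperturb}, using the standing observation that $\Omega$ itself always lies in $\Ker\UU_\Omega$. Indeed, the equality constraints $\T_0,\T_1,\T_2,\T_3$ are by construction satisfied by any witness carrying the zeros and Hessian zeros from which they are built, so $\UU_\Omega\Omega=0$ and $\Ker\UU_\Omega$ is at least one dimensional. The theorem therefore amounts to showing that the kernel is \emph{exactly} one dimensional precisely when $\Omega$ is extremal.

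For the ``if'' direction I would argue directly from the equivalent formulation in Theorem~\ref{thm:extremecondI}. Assuming $\Ker\UU_\Omega=\spn{\Omega}$, any matrix $\Lambda$ with $\UU_\Omega\Lambda=0$ is a scalar multiple $c\Omega$; since all our witnesses are normalized to unit trace, taking traces forces $c=1$, hence $\Lambda=\Omega$. Thus no witness $\Lambda\neq\Omega$ satisfies $\UU_\Omega\Lambda=0$, and Theorem~\ref{thm:extremecondI} yields extremality. This half is immediate and uses nothing beyond the trace normalization.

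For the ``only if'' direction I would prove the contrapositive: if $\dim\Ker\UU_\Omega\geq 2$, then $\Omega$ is not extremal. I would pick any $\Gamma'\in\Ker\UU_\Omega$ not proportional to $\Omega$ and set $\Gamma=\Gamma'-(\Tr\Gamma')\,\Omega$, which is nonzero, traceless, and still in $\Ker\UU_\Omega$, as observed just before the theorem statement. This is the one place the argument has genuine content: a mere kernel direction is only an algebraic condition on $\Omega$, and I must upgrade it to actual witnesses. That upgrade is exactly what Theorem~\ref{thm:witperturb} supplies: since $\Gamma\in\Ker\UU_\Omega$ is traceless and nonzero, $\Omega+t\Gamma$ is a witness for all $t$ in some interval $[t_1,t_2]$ with $t_1<0<t_2$. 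Then $\Omega$ is a proper convex combination of the two distinct witnesses $\Omega+t_1\Gamma$ and $\Omega+t_2\Gamma$, so it is not extremal.

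I expect the ``only if'' step to be the crux, but its difficulty has in fact already been discharged inside Theorem~\ref{thm:witperturb}; what remains here is purely bookkeeping. This is why the result is genuinely an immediate corollary: it strengthens Theorem~\ref{thm:extremecondI} by replacing the condition ``no other \emph{witness} lies in the kernel'' with the cleaner linear-algebraic condition ``$\Ker\UU_\Omega$ is one dimensional,'' a form that is directly checkable numerically by simply computing $\dim\Ker\UU_\Omega$.
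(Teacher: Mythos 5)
Your proposal is correct and takes essentially the same route as the paper, which presents this theorem as an immediate corollary of Theorem~\ref{thm:witperturb} together with Theorem~\ref{thm:extremecondI}. Your writeup simply makes that corollary explicit: the ``if'' direction via Theorem~\ref{thm:extremecondI}, and the ``only if'' direction via the traceless kernel element $\Gamma=\Gamma'-(\Tr\Gamma')\,\Omega$ fed into Theorem~\ref{thm:witperturb}, exactly as the paper intends.
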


Thus, once the zeros and Hessian zeros of the witness $\Omega$ are
known we test for extremality by computing the dimension of
$\textrm{Ker}\,\mathbf{U}_\Omega$, for example by a singular value decomposition of
$\mathbf{U}_\Omega$.  This allows for a simple numerical implementation of
the extremality criterion.

These theorems motivate Algorithm~1, a search algorithm for finding
extremal witnesses.  Obviously, any extremal witness might be reached
by this algorithm already in the first iteration, starting for example
from the maximally mixed state.  The search is guaranteed to converge
to an extremal witness in a finite number of iterations, since the
number of possible search directions is reduced in each iteration. We
here summarize this in Algorithm~1. Starting from an initial witness
$\Omega=\Omega_0\in\textrm{int}(\mathcal{S}^{\circ})$, we proceed down
through a hierarchy of faces on $\mathcal{S}^{\circ}$ of decreasing
dimensions. The number of possible search directions $\Gamma$ on each
face, is reduced in each iteration as
$\textrm{Ker}\,\mathbf{U}_\Omega$ is reduced.
\begin{table*}
{\begin{tabular}{ll}
{\bf{Algorithm 1:}} Finding an extremal entanglement witness \\
\hline\hline
{\bf{Precondition:}} Choose an initial witness $\Omega=\Omega_0\in\textrm{int}(\mathcal{S}^{\circ})$ and construct $\mathbf{U}_\Omega$\\
\hline
1. {\bf{while}} $\textrm{dim}\,\textrm{Ker}\,(\mathbf{U}_\Omega)>1$\\
2. \qquad choose a $\Gamma\in\textrm{Ker}\,(\mathbf{U}_\Omega)$\\
3. \qquad {\bf{if}} $\textrm{Tr}(\Gamma)\neq 0$\\
4. \qquad\qquad redefine $\Gamma\leftarrow\Gamma-\textrm{Tr}(\Gamma)\Omega$\\
5. \qquad {\bf{endif}}\\
6. \qquad find $t_c$ as the maximal $t$ such that $\Omega+t\Gamma$ is still a witness\\
7. \qquad redefine $\Omega\leftarrow \Omega+t_c\Gamma$\\
8. \qquad locate all zeros of $\Omega$ and construct an updated $\mathbf{U}_\Omega$\\
9. {\bf{endwhile}}\\
\hline
{\bf{return}} $\Omega$
\end{tabular}}
\end{table*}

\subsection{Faces of the set $\mathcal{S}_1^{\circ}$ of normalized witnesses}
\label{sec:facesSud}

Theorem~\ref{thm:witperturb} has the following geometrical meaning.
Define
\begin{equation}
\label{eq:faceOmegadef}
\mathcal{F}_{\Omega}=(\textrm{Ker}\,\mathbf{U}_{\Omega})\cap\mathcal{S}_1^{\circ}\,.
\end{equation}
Equivalently, define $\mathcal{F}_{\Omega}$ as the set of all witnesses of the
form $\Omega+t\Gamma$ with $\Gamma\in\textrm{Ker}\,\mathbf{U}_{\Omega}$ and
$\textrm{Tr}\,\Gamma=0$.  If $\Omega$ is extremal in $\mathcal{S}_1^{\circ}$ then $\mathcal{F}_{\Omega}$
consists of the single point $\Omega$.  Otherwise, $\mathcal{F}_{\Omega}$ is
the unique face of $\mathcal{S}_1^{\circ}$ having $\Omega$ as an interior point.

Thus, Algorithm~1 produces a decreasing sequence
of faces of $\mathcal{S}_1^{\circ}$, $\mathcal{F}_1\supset\mathcal{F}_2\supset\ldots\supset\mathcal{F}_n$, where
every face $\mathcal{F}_j$ is a face of every $\mathcal{F}_i$ with $i<j$, and the
extremal point found is an extremal point of all these faces.

The theorems in Section~\ref{sec:characterizationI} imply that
every face $\mathcal{F}$ of $\mathcal{S}_1^{\circ}$ is uniquely characterized by a set of zeros
and Hessian zeros that is the complete set of zeros and Hessian zeros
of every witness in the interior of $\mathcal{F}$.  Every boundary point of
$\mathcal{F}$ is a witness having the zeros and Hessian zeros characteristic of
the interior points, plus at least one more zero or Hessian zero.  In
general, the boundary of $\mathcal{S}_1^{\circ}$ is a hierarchy of faces, faces of
faces, faces of faces of faces, and so on.  The number of zeros and
Hessian zeros increases each time we step from one face onto a face of
the face, and the descent through the hierarchy from one face to the
next always ends in an extremal witness.

\subsection{Zeros of extremal witnesses}
\label{sec:zerosgeneric}

A natural question regards the number of zeros a witness must have in
order to be extremal.  We provide two lower bounds for quadratic
witnesses, one obtained by comparison with a pure state as a witness
and the other by counting constraints.  Similar bounds for quartic
witnesses are not easy to obtain.

The first of these bounds reveals a ``double spanning'' property of
zeros of quadratic extremal witnesses.  We define the partial
conjugate of $(\phi,\chi)$ as $(\phi,\chi^*)$.
\begin{theorem}
\label{thm:zerosspan2}
\textit{The zeros of a quadratic extremal witness $\Omega$ span the Hilbert
space.  The partially conjugated zeros also span the Hilbert space.}
\end{theorem}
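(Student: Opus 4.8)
The plan is to argue by contradiction, using a pure state as a comparison witness. Since $\Omega$ is quadratic its set of quartic zeros $\Z'$ is empty, so the whole constraint system collapses to $\UU_\Omega=\UU_{01}$, i.e. to the zeroth- and first-order constraints $\T_0,\T_1$ attached to the finitely many (by Theorem~\ref{thm:quadzeros}) zeros $(\phi_i,\chi_i)$ of $\Omega$. Suppose the product vectors $\phi_i\ot\chi_i$ did \emph{not} span $\H$. Then there is a nonzero $\eta\in\H$ with $\eta^\dagger(\phi_i\ot\chi_i)=0$ for every zero, and I would feed the pure state $\Gamma=\eta\eta^\dagger$, which lies in $\D\subset\Sd$ and is therefore a witness, into the extremality criterion. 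At any zero $(\phi_0,\chi_0)$ one has $\T_0\Gamma=|\eta^\dagger(\phi_0\ot\chi_0)|^2=0$, and the first-order constraints of \eqref{eq:gradCLL}, for instance $(\id_\A\ot\chi_0)^\dagger\Gamma(\phi_0\ot\chi_0)=\big((\id_\A\ot\chi_0)^\dagger\eta\big)\big(\eta^\dagger(\phi_0\ot\chi_0)\big)$, vanish as well, since each carries the scalar factor $\eta^\dagger(\phi_0\ot\chi_0)=0$. Hence every component of $\T_0$ and $\T_1$ annihilates $\Gamma$, so $\UU_\Omega\Gamma=\UU_{01}\Gamma=0$.

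Now I would invoke extremality. By Theorem~\ref{thm:extremecond}, $\Ker\UU_\Omega$ is one-dimensional and spanned by $\Omega$, so $\Gamma=c\,\Omega$ for some real $c$. If $c=0$ then $\eta\eta^\dagger=0$, contradicting $\eta\neq0$. If $c\neq0$ then $\Omega$ is a real multiple of the rank-one matrix $\eta\eta^\dagger$, whose form $|\eta^\dagger(\phi\ot\chi)|^2$ vanishes on the whole continuous family of product vectors orthogonal to $\eta$ (a nonempty continuum since $\n_\A,\n_\B\geq2$); thus $\Omega$ would have infinitely many zeros, contradicting the finiteness guaranteed by Theorem~\ref{thm:quadzeros}. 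Either alternative is impossible, so the zeros must span $\H$.

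For the second statement I would pass to the partial transpose. A direct index computation gives the identity $f_{\Omega\pt}(\phi,\chi)=f_\Omega(\phi,\chi^*)$, so $(\phi_0,\chi_0)$ is a zero of $\Omega$ exactly when its partial conjugate $(\phi_0,\chi_0^*)$ is a zero of $\Omega\pt$. Since partial transposition is a trace-preserving linear involution mapping $\Sud$ onto itself, it sends extremal witnesses to extremal witnesses; and since on the local real parameters it acts as the identity on the $\phi$-block and as complex conjugation (a real orthogonal map) on the $\chi$-block, it transforms each Hessian by an orthogonal congruence and hence preserves positive definiteness, so $\Omega\pt$ is again \emph{quadratic}. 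Applying the first statement to the quadratic extremal witness $\Omega\pt$, its zeros $\phi_i\ot\chi_i^*$ span $\H$, which is exactly the claim about the partially conjugated zeros.

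The factoring of the constraints through the scalar $\eta^\dagger(\phi_0\ot\chi_0)=0$ is the easy and decisive observation. The step I expect to be the real obstacle is excluding the alternative $\Omega\propto\eta\eta^\dagger$: this is precisely where quadraticity is indispensable, since the finiteness of the zero set (Theorem~\ref{thm:quadzeros}) is what a rank-one pure state violates, and it also explains why the theorem is stated only for quadratic witnesses, a quartic witness such as a pure state carrying a continuum of zeros and no spanning obligation. A secondary piece of bookkeeping is the verification that partial transposition preserves quadraticity through the Hessian-congruence argument.
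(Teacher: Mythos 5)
Your proof is correct and is essentially the paper's own argument: both contradict extremality by building the pure state on a vector orthogonal to all zeros and exploiting the fact that such a rank-one witness has a continuum of (quartic) zeros while a quadratic witness has only finitely many isolated ones, and both handle the second claim by applying the first to $\Omega\pt$. The only cosmetic difference is that you verify the constraints $\UU_\Omega\Gamma=0$ explicitly and invoke the kernel criterion of Theorem~\ref{thm:extremecond}, whereas the paper invokes the equivalent zero-set-inclusion form of Theorem~\ref{thm:extremecondI} directly.
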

\begin{proof}
Assume that the zeros span less than the whole Hilbert space, so
that there exists a vector $\psi$ orthogonal to all the zeros.  Then
the projection $P=\psi\psi^{\dagger}$ is a witness with a set of
zeros including all the zeros of $\Omega$.  But $P$ has a continuum
of zeros, all quartic, see Theorem~\ref{thm:r1zeros}.  Hence
$P\neq\Omega$ and $\Omega$ is not extremal, by
Theorem~\ref{thm:extremecondI}.

The partial conjugates of the zeros of $\Omega$ span the Hilbert
space because they are the zeros of the quadratic extremal witness
$\Omega^P$.
\end{proof}

This proof does not hold when $\Omega$ is a quartic witness, because
$\Omega$ may then have Hessian zeros that are not Hessian zeros of
$P$.  A pure state $\psi\psi^{\dagger}$ as a quartic extremal witness
is a counterexample, where the zeros do not span the Hilbert space,
although the partially conjugated zeros may span the Hilbert space.
The partial transpose $(\psi\psi^{\dagger})^P$ is a counterexample of
the opposite kind, where the zeros may span the Hilbert space but the
partially conjugated zeros do not.  Furthermore, even if neither the
zeros nor the partially conjugated zeros span the Hilbert space, it is
still possible for a quartic witness to be extremal, because a quartic
zero leads to more constraints than a quadratic zero.  In
Section~\ref{sec:quartic} we describe the numerical construction of
an extremal quartic witness in dimension $3\times 3$ with only eight
zeros, and with one Hessian zero at one of the zeros.

Counting constraints from quadratic zeros gives a different lower
bound for the number of zeros of a quadratic extremal witness.  There
are $M_2=2(N_a+N_b)-3$ constraints per quadratic zero.  With $n$
quadratic zeros there are a total of $nM_2$ constraints, which in the
generic case are linearly independent until $nM_2$ approaches
$N^2-1$, the dimension of $\mathcal{S}_1^{\circ}$.  A lower bound on the number of
zeros of a quadratic extremal witness is then given by
\begin{equation}
\label{eq:nzquadrextrw}
n_c=\mathrm{ceil}\,\frac{(N_aN_b)^2-1}{2(N_a+N_b)-3}\,,
\end{equation}
where $\textrm{ceil}$ rounds upwards to the nearest integer. 

In the special case $N_a=2$ this formula gives the lower bound
$n_c=N-1$, which is weaker than the lower bound $N$ given by
Theorem~\ref{thm:zerosspan2}.  When $N_a\geq 3$ and $N_b\geq 3$
the formula gives a lower bound $n_c\geq N$, consistent with
Theorem~\ref{thm:zerosspan2}.

Table~\ref{tbl:constraints} lists the numerically computed number of
linearly independent constraints $m_{\textrm{ind}}$ arising from $n_c$ randomly
chosen pairs $(\phi,\chi)$. $n_c$ is the estimated minimum number of zeros required 
for a quadratic witness to be extremal, assuming maximal
independence of constraints.  $M_2$ is the number of linearly 
independent constraints from each zero. In order to compute $m_{\textrm{ind}}$ we interpret
each $(\phi,\chi)$ as a quadratic zero, build $\mathbf{U}_{01}$ and do a
singular value decomposition.  The table shows that in many cases the
number of independent constraints from a random set of product vectors
is equal to the dimension of the real vector space $H$ of Hermitian
matrices.  In such a case there exists no biquadratic form which is
positive semidefinite and has these zeros.  We conclude that the zeros
of a positive semidefinite biquadratic form have to satisfy some
relations that reduce the number of independent constraints.

\begin{table}
  \centering
  \begin{tabular}{crrrr}
    \noalign{\smallskip}
    \hline
    \noalign{\smallskip}
    $N_b,N_b$  &  $N^2$  &  $M_2$          &  $n_c$     &
    \multicolumn{1}{c}{$m_{\mathrm{ind}}$}  \\
    \noalign{\smallskip}
    \hline
    \noalign{\smallskip}
     2,2       &  16     &      5          &   3        &     14         \\
     2,3       &  36     &      7          &   5        &     34         \\
     2,4       &  64     &      9          &   7        &     62         \\
     2,5       & 100     &     11          &   9        &     98         \\
     3,3       &  81     &      9          &   9        &     81         \\
     3,4       & 144     &     11          &  13        &    143         \\
     3,5       & 225     &     13          &  18        &    225         \\ 
     4,4       & 256     &     13          &  20        &    256         \\
     4,5       & 400     &     15          &  27        &    400         \\
     5,5       & 625     &     17          &  37        &    625         \\
    \noalign{\smallskip}
    \hline
  \end{tabular}
\caption{Numbers related to quadratic zeros of witnesses
    in dimension $N=N_aN_b$.
    $M_2$ is the number of linearly 
    independent constraints from each zero.
    $n_c$ is the estimated minimum number of zeros required 
    for a quadratic witness to be extremal, assuming maximal
    independence of constraints. 
    $m_{\mathrm{ind}}$ is the actual number of independent
    constraints from $n_c$ random product vectors.
    Note that for $N_a=2$ there is an intrinsic 
    degeneracy causing the number of constraints from $n_c$ zeros
    to be $n_cM_2-1$, so  the actual minimum number
    of zeros is $n_c+1=N$.}
  \label{tbl:constraints}
\end{table}

Note that for $N_a=2$ there is an intrinsic 
degeneracy causing the number of constraints from $n_c$ zeros
to be $n_cM_2-1$, so  the actual minimum number
of zeros is $n_c+1=N$. So the number of constraints is reduced by one.
For example, in the $2\times 4$ system the actual number of constraints
from $n_c=7$ random product vectors is $62$ instead of $63=n_cM_2$.
This degeneracy implies that one extra zero is needed, thus saving
Theorem~\ref{thm:zerosspan2}.  The proof of
Theorem~\ref{thm:zerosspan2} indicates the origin of the degeneracy.
With $N-1$ zeros there exists a vector $\psi$ orthogonal to all the
zeros and a vector $\eta$ orthogonal to all the partially conjugated
zeros, and because $\psi\psi^{\dagger}$ and $(\eta\eta^{\dagger})^P$
both lie in $\textrm{Ker}\,\mathbf{U}_{01}$ we must have $\dim\textrm{Ker}\,\mathbf{U}_{01}\geq 2$.
For other dimensions there is no similar degeneracy.

The $3\times 4$ system is special in that the number of constraints
from a generic set of $n_c=13$ product vectors add up to exactly the
number needed to define a unique $A\in H$ with $\mathbf{U}_{01} A=0$.  This
does not however imply that $A$ defined in this way from a random set
of product vectors will be an extremal witness, in fact it will
usually not be a witness, because it will have both positive and
negative expectation values in product states.

A central question in general, well worth further attention, is how to
choose a set of product vectors such that they may serve as the zeros
of an extremal witness.  According to Theorem~\ref{th:exposeddualsIV},
the zeros of a witness must define pure states lying on a face of the
set $\mathcal{S}$ of separable states.  However, this statement leaves open the
practical question of how to test numerically whether a set of pure
product states lies on a face of $\mathcal{S}$.

In $2\times N_b$ and $3\times 3$ systems, the minimum number of zeros
of a quadratic extremal witness is exactly equal to the dimension
$N=N_aN_b$ of the composite Hilbert space $\mathcal{H}$.  This implies the
following result, which we state as a theorem.
\begin{theorem}
\label{thm:2xn3x3}
\textit{In dimensions $2\times N_b$ and $3\times 3$ a generic witness with
doubly spanning zeros will be extremal.}
\end{theorem}
We do not know whether this is a general theorem in these dimensions,
valid for all witnesses and not only for the generic witnesses.  It
may be that some construction like the one described in
Section~\ref{sec:nongeneric} will lead to nongeneric counterexamples.

In higher dimensions the minimum number of zeros to define a quadratic
extremal witness is strictly larger than the dimension of $\mathcal{H}$. This
difference between $2\times N_b$ and $3\times 3$ systems on the one
hand, and higher dimensional systems on the other hand, has important
consequences discussed in Sections~\ref{sec:separable}
and~\ref{sec:optimal}.


\section{Decomposable witnesses}
\label{sec:decompwitnesses}

A decomposable witness is called so because it corresponds to a
decomposable positive map, and because it has the form
$\Omega=\rho+\sigma^P$ with $\rho,\sigma\in\mathcal{D}$, possibly $\rho=0$ or
$\sigma=0$.  Although the decomposable witnesses are useless for
detecting entangled PPT states, they are useful in other ways, for
example as stepping stones in some of our numerical methods for
constructing extremal nondecomposable witnesses.

In this section we will summarize some basic properties of
decomposable witnesses.  This is a natural place to start when we want
to understand entanglement witnesses in general.  In particular, we
are interested in the relation between a witness and its zeros.

Since the set $\mathcal{P}_1^{\circ}$ of decomposable witnesses is the convex hull of
$\mathcal{D}_1$ and $\mathcal{D}_1^P$, an extremal point of $\mathcal{P}_1^{\circ}$ must be an extremal
point of either $\mathcal{D}_1$ or $\mathcal{D}_1^P$.  That is, it must be either a pure
state $\psi\psi^\dagger$ or a partially conjugated pure state
$(\psi\psi^\dagger)^P\!$, or both if $\psi$ is a product vector.  In
the present section we verify the results in~\cite{Sto1963,Grab,Mar2008}, that witnesses of the forms
$\psi\psi^\dagger$ and $(\psi\psi^\dagger)^P$ are extremal in $\mathcal{S}_1^{\circ}$.
Since they are extremal in $\mathcal{S}_1^{\circ}$, they are also extremal in the
subset $\mathcal{P}_1^{\circ}\subset\mathcal{S}_1^{\circ}$.  Thus they are precisely the extremal points
of $\mathcal{P}_1^{\circ}$.

\subsection{Zeros of a decomposable witness}

The biquadratic form corresponding to the decomposable witness
$\Omega=\rho+\sigma^P$ is
\begin{equation}
f_{\Omega}(\phi,\chi)=
f_{\rho}(\phi,\chi)+
f_{\sigma}(\phi,\chi^*)\,.
\end{equation}
It is positive semidefinite because $f_{\rho}$ and $f_{\sigma}$ are
positive semidefinite.

Assume now that $(\phi_0,\chi_0)$ is a zero of $\Omega$.  The above
decomposition of $f_{\Omega}$ shows that $(\phi_0,\chi_0)$ must be a
zero of $\rho$, and the partial conjugate $(\phi_0,\chi_0^*)$ must be
a zero of $\sigma$.  But because $\rho,\sigma\in\mathcal{D}$ it follows that
\begin{equation}
\rho(\phi_0\otimes\chi_0)=0\,,\qquad
\sigma(\phi_0\otimes\chi_0^*)=0\,.
\end{equation}
This proves the following theorem.
\begin{theorem}
\label{thm:zerosnospan2}
\textit{The zeros of a decomposable witness $\Omega=\rho+\sigma^P$ span the
Hilbert space only if $\rho=0$.  The partially conjugated zeros span
the Hilbert space only if $\sigma=0$. Hence, by Theorem~\ref{thm:zerosspan2},
a quadratic extremal witness is nondecomposable.}
\end{theorem}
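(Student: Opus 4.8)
The plan is to read the conclusion off directly from the two kernel conditions already established in the paragraph preceding the theorem, namely that every zero $(\phi_0,\chi_0)$ of $\Omega=\rho+\sigma\pt$ satisfies $\rho(\phi_0\ot\chi_0)=0$ and $\sigma(\phi_0\ot\chi_0^*)=0$. The first of these says that each product vector $\phi_0\ot\chi_0$ arising from a zero lies in $\Ker\rho$. Since $\Ker\rho$ is a linear subspace of $\H$, it contains the span of any collection of its members; hence if the zeros of $\Omega$ span $\H$, then $\Ker\rho=\H$ and therefore $\rho=0$. This is exactly the first assertion, read as the implication that the zeros spanning $\H$ forces $\rho=0$.

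The second assertion is to be proved in the same way, working with the partial conjugates. From $\sigma(\phi_0\ot\chi_0^*)=0$ we get $\phi_0\ot\chi_0^*\in\Ker\sigma$ for every zero, so if the partially conjugated zeros $(\phi_0,\chi_0^*)$ span $\H$ then $\Ker\sigma=\H$ and $\sigma=0$.

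For the final clause I would argue by contradiction. Let $\Omega$ be a quadratic extremal witness and suppose it were decomposable, $\Omega=\rho+\sigma\pt$ with $\rho,\sigma\in\D$. By Theorem~\ref{thm:zerosspan2} the zeros of $\Omega$ span $\H$ and the partially conjugated zeros span $\H$. The two assertions just proved then force $\rho=0$ and $\sigma=0$, so $\Omega=0$. This contradicts the fact that a witness is nonzero (indeed $\Tr\Omega=1$ for $\Omega\in\Sud$, and more generally $\Tr\Omega>0$ for any nonzero witness). Hence no such decomposition exists and $\Omega$ is nondecomposable.

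I do not anticipate a genuine obstacle: all the analytic content, namely the passage from ``$(\phi_0,\chi_0)$ is a zero of $\Omega$'' to ``$\phi_0\ot\chi_0\in\Ker\rho$'', has already been carried out in the preceding paragraph using the positivity of $\rho$ and $\sigma$. What remains is the elementary linear-algebra step that a spanning set contained in a subspace forces that subspace to be everything, together with correct bookkeeping of the logical direction of the two ``only if'' statements and of the contradiction $\Omega=0$. The one point to watch is that ``span the Hilbert space'' refers to the product vectors $\phi_0\ot\chi_0$ (respectively their partial conjugates $\phi_0\ot\chi_0^*$), not to the abstract pairs, so that membership in $\Ker\rho$ and $\Ker\sigma$ can be invoked verbatim.
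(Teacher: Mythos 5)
Your proof is correct and follows exactly the paper's own route: the paper derives $\rho(\phi_0\ot\chi_0)=0$ and $\sigma(\phi_0\ot\chi_0^*)=0$ in the paragraph preceding the theorem and then states ``this proves the following theorem,'' leaving precisely the spanning argument and the $\Omega=0$ contradiction that you spell out. Your version merely makes these implicit steps explicit, including the correct reading of the ``only if'' direction and the use of $\Tr\Omega>0$ to rule out $\Omega=0$.
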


Note that if $(\phi_0,\chi_0)$ is a zero of $\Omega$
and  $\phi\otimes\chi$ is any product vector, then
\begin{equation}
(\phi\otimes\chi_0)\sigma^P\!(\phi_0\otimes\chi)=
(\phi\otimes\chi^*)\sigma(\phi_0\otimes\chi_0^*)=0\,,
\end{equation}
and hence
\begin{equation}
\label{eq:rhomatel}
(\phi\otimes\chi_0)\Omega(\phi_0\otimes\chi)=
(\phi\otimes\chi_0)\rho(\phi_0\otimes\chi)\,.
\end{equation}
Similarly,
\begin{equation}
(\phi\otimes\chi_0^*)\rho^P\!(\phi_0\otimes\chi^*)=
(\phi\otimes\chi)\rho(\phi_0\otimes\chi_0)=0\,,
\end{equation}
and hence
\begin{equation}
\label{eq:sigmamatel}
(\phi\otimes\chi_0^*)\Omega^P\!(\phi_0\otimes\chi^*)=
(\phi\otimes\chi_0^*)\sigma(\phi_0\otimes\chi^*)\,.
\end{equation}
These equations may be useful for computing $\rho$ and $\sigma$ if
$\Omega$ is known but its decomposition $\Omega=\rho+\sigma^P$ is
unknown.  We will return to this problem in
Section~\ref{subsec:decompdecompwitness}.

\subsection{Pure states and partially transposed pure states}

Let $P$ be a pure state, $P=\psi\psi^{\dagger}$, and let $Q$ be the
partial transpose of a pure state, $Q=(\eta\eta^{\dagger})^P\!$, with
$\psi,\eta\in\mathcal{H}$.  The corresponding positive maps $\mathbf{L}_P$ and $\mathbf{L}_Q$
are rank one preservers: they map matrices of rank one to matrices of
rank one or zero, because
\begin{equation}
\mathbf{L}_P(\phi\phi`{\dagger})
=(\phi\otimes I_b)^\dagger P (\phi\otimes I_b)
=\zeta\zeta^\dagger
\qquad\mathrm{with}\qquad
\zeta=(\phi\otimes I_b)^\dagger\psi\,,
\end{equation}
and
\begin{equation}
\mathbf{L}_Q(\phi\phi`{\dagger})
=(\phi\otimes I_b)^\dagger Q (\phi\otimes I_b)
=\zeta\zeta^\dagger
\qquad\mathrm{with}\qquad
\zeta=(\phi^*\otimes I_b)^\dagger\eta^*\,.
\end{equation}
The corresponding biquadratic forms are positive semidefinite because
they are absolute squares,
\begin{equation}
f_P(\phi,\chi)
=(\phi\otimes\chi)^\dagger P(\phi\otimes\chi)
=|\psi^{\dagger}(\phi\otimes\chi)|^2\,,
\end{equation}
and
\begin{equation}
f_Q(\phi,\chi)
=(\phi\otimes\chi)^\dagger Q(\phi\otimes\chi)
=(\phi\otimes\chi^*)^\dagger Q^P(\phi\otimes\chi^*)
=|\eta^{\dagger}(\phi\otimes\chi^*)|^2\,.
\end{equation}

The zeros of $P$ are the product vectors orthogonal to $\psi$.
A singular value decomposition (Schmidt decomposition) gives
orthonormal bases $\{u_i\}$ in $\mathcal{H}_a$ and $\{v_j\}$ in $\mathcal{H}_b$
such that
\begin{equation}
\psi=\sum_{i=1}^m c_i\,u_i\otimes v_i
\qquad\mathrm{with}\quad c_i>0\,.
\end{equation}
Here $m$ is the Schmidt number of $\psi$, $1\leq
m\leq\min(N_a,N_b)$.  The condition for a product vector
\begin{equation}
\phi\otimes\chi
=\sum_{i=1}^{N_a}\sum_{j=1}^{N_b} a_ib_j\,u_i\otimes v_j
\end{equation}
to be orthogonal to $\psi$ is that
\begin{equation}
\sum_{i=1}^m c_ia_ib_i=0\,.
\end{equation}
For any dimensions $N_a\geq 2$, $N_b\geq 2$ and any Schmidt
number $m$ the set of zeros is continuous and connected.  All the
zeros are quartic, since quadratic zeros are isolated.  The zeros do
not span the whole Hilbert space, only the subspace orthogonal to
$\psi$.  However, the partially conjugated zeros span the Hilbert
space, except when $m=1$ so that $\psi$ is a product
vector~\cite{Aug}.  This almost completes the proof of the following
theorem, what remains is to prove the extremality.

\begin{theorem}
\label{thm:r1zeros}
\textit{A pure state as a witness has only quartic zeros, all in one
continuous connected set, and it is extremal in $\mathcal{S}_1^{\circ}$.  The same is
true for the partial transpose of a pure state.}
\end{theorem}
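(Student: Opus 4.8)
Since the text preceding the theorem already establishes that $P=\psi\psi^\dagger$ has only quartic zeros forming a single continuous connected set, the plan is to prove only the extremality assertions. First I would dispose of the partial transpose $Q=(\eta\eta^\dagger)\pt$ by a symmetry reduction: partial transposition is a linear bijection of $H$ preserving $\Sud$, since $f_{A\pt}(\phi,\chi)=f_A(\phi,\chi^*)$, so it carries extremal points of $\Sud$ to extremal points. Hence $Q$ is extremal as soon as the pure state $\eta\eta^\dagger$ is, and it suffices to treat a pure state $P=\psi\psi^\dagger$. To show $P$ is extremal I would invoke Theorem~\ref{thm:extremecondI}: it is enough to prove that any witness $\Lambda$ whose zeros include all the zeros of $P$ must equal $P$ (I will not even need the inclusion of Hessian zeros, which only makes the hypothesis on $\Lambda$ stronger). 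The zeros of $P$ are exactly the product vectors $\phichi$ with $\psi^\dagger(\phichi)=0$. Writing $\zeta(\phi)=(\phi\ot\id_\B)^\dagger\psi$, so that $\psi^\dagger(\phichi)=\zeta(\phi)^\dagger\chi$, the hypothesis reads $f_\Lambda(\phi,\chi)=0$ for every $\chi$ with $\chi\perp\zeta(\phi)$.

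Fix $\phi$ with $\zeta(\phi)\neq 0$. Because $\Lambda$ is a witness, $\LL_\Lambda(\phiphi)\geq 0$, and $f_\Lambda(\phi,\chi)=\chi^\dagger\LL_\Lambda(\phiphi)\chi$ vanishes on the entire hyperplane $\zeta(\phi)^\perp$. For a positive semidefinite matrix the vanishing of its quadratic form on a subspace forces that subspace into the kernel, so $\zeta(\phi)^\perp\subseteq\Ker\LL_\Lambda(\phiphi)$, whence
\[
\LL_\Lambda(\phiphi)=a(\phi)\,\zeta(\phi)\zeta(\phi)^\dagger=a(\phi)\,\LL_P(\phiphi)
\]
for some scalar $a(\phi)\geq 0$. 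When $\zeta(\phi)=0$ all $\phichi$ are zeros of $P$, hence of $\Lambda$, and the same identity holds with both sides zero.

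The main obstacle is to upgrade this pointwise proportionality into a single global constant, and I would resolve it by exploiting the left--right symmetry of the setup together with \eqref{eq:jamisoI}. Running the identical argument with the roles of $\H_\A$ and $\H_\B$ interchanged gives $\LL_\Lambda^{\top}(\chichi)=b(\chi)\,\eta(\chi)\eta(\chi)^\dagger$, where $\eta(\chi)=(\id_\A\ot\chi)^\dagger\psi$ and $b(\chi)\geq 0$. Evaluating $f_\Lambda(\phi,\chi)$ through both descriptions by means of \eqref{eq:jamisoI} yields $f_\Lambda(\phi,\chi)=a(\phi)\,|\psi^\dagger(\phichi)|^2=b(\chi)\,|\psi^\dagger(\phichi)|^2$, so $a(\phi)=b(\chi)$ on the open dense set where $\psi^\dagger(\phichi)\neq 0$. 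Thus $a$ agrees with a constant on an open set; since $a(\phi)=\Tr\LL_\Lambda(\phiphi)/|\zeta(\phi)|^2$ is a ratio of real-analytic functions of $\phi$, it must be that constant, say $a\equiv c$.

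Finally I would conclude that $\LL_\Lambda(\phiphi)=c\,\LL_P(\phiphi)$ for every $\phi$. Because the rank-one projections $\phiphi$ span the real space $H_\A$ and the correspondence $A\leftrightarrow\LL_A$ is a vector space isomorphism, this equality of maps on a spanning set forces $\Lambda=cP$; imposing $\Tr\Lambda=\Tr P=1$ gives $c=1$, so $\Lambda=P$. By Theorem~\ref{thm:extremecondI} this proves $P$ extremal, and the partial transpose case follows from the reduction above. I expect the only delicate point to be the constancy of $a(\phi)$; the two-sided argument via \eqref{eq:jamisoI} is exactly what rules out a nonconstant scaling that would otherwise be compatible with the rank-one structure on each fibre.
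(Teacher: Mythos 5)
Your proof is correct, and while it shares the paper's overall frame---reduce, via Theorem~\ref{thm:extremecondI}, to showing that any witness $\Lambda$ vanishing on all zeros of $P=\psi\psi^{\dagger}$ must equal $P$, using no Hessian-zero information (the paper makes the same remark)---it resolves the key step by a genuinely different mechanism. Both proofs first deduce that $\LL_\Lambda(\phiphi)$ is positive semidefinite and vanishes on the hyperplane $\zeta(\phi)^{\perp}$, with $\zeta(\phi)=(\phi\ot\id_\B)^{\dagger}\psi$, hence has rank at most one. At that point the paper invokes the classification of rank-one preservers from~\cite{Grab}, concluding that $\Lambda=\omega\omega^{\dagger}$ or $\Lambda=(\widetilde{\omega}\widetilde{\omega}^{\dagger})\pt$, and then disposes of the two cases separately. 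You avoid that external structure theorem entirely: you record the stronger pointwise identity $\LL_\Lambda(\phiphi)=a(\phi)\,\LL_P(\phiphi)$, globalize the scalar by applying the same reasoning to $\LL_\Lambda^{\top}$ and matching the two resulting expressions for $f_\Lambda$ through equation~\eqref{eq:jamisoI}, and finish by linearity of $A\mapsto\LL_A$ on the rank-one projections spanning $H_\A$. The paper's route is shorter given the cited theorem and ties the result to the structure theory of positive maps; yours is self-contained and purely linear-algebraic. Two small remarks: the constancy of $a$ needs no analyticity---given $\phi_1,\phi_2$ with $\zeta(\phi_i)\neq 0$, choose a single $\chi$ with $\zeta(\phi_1)^{\dagger}\chi\neq 0$ and $\zeta(\phi_2)^{\dagger}\chi\neq 0$, whence $a(\phi_1)=b(\chi)=a(\phi_2)$---and your reduction of the partial-transpose case via the identity $f_{A\pt}(\phi,\chi)=f_A(\phi,\chi^*)$, which makes partial transposition a linear bijection of $\Sud$ preserving extremality, is a tidy replacement for the paper's ``the proof for $Q$ is similar''.
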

\begin{proof}
We want to prove that $P$ is extremal, the proof for $Q$ is similar.
We show that the only witness $\Omega$ satisfying the constraints
$\mathbf{U}_P\Omega=0$ is $\Omega=P$, from this we conclude that $P$ is
extremal, using Theorem~\ref{thm:extremecondI}.

Even though the zeros of $P$ are all quartic, they are so many that
we need not use the secondary constraints coming from the quartic
nature of the zeros.
According to equation \eqref{eq:gradCLL}, the constraints
$\mathbf{U}_P\Omega=0$ include the constraints
$\mathbf{L}_\Omega(\phi\phi^\dagger)\chi=0$ for every zero $(\phi,\chi)$ of
$P$.  The zeros of $P$ are the solutions of the equation
$\psi^{\dagger}(\phi\otimes\chi)=0$.  For every $\phi\in\mathcal{H}_a$ this is one
linear equation for $\chi\in\mathcal{H}_a$, the solutions of which form a
subspace of $\mathcal{H}_a$ of dimension either $N_b-1$ or $N_b$.  But
this means that $\mathbf{L}_\Omega(\phi\phi^\dagger)$ has rank either one
or zero, so that the positive map $\mathbf{L}_{\Omega}$ is a rank one
preserver.  Hence, either $\Omega=\omega\omega^{\dagger}$ or
$\Omega=(\widetilde{\omega}\widetilde{\omega}^{\dagger})^P$ for
some $\omega,\widetilde{\omega}\in\mathcal{H}$~\cite{Grab}.

In the first case, $\Omega=\omega\omega^{\dagger}\!$, we must have
$\omega^{\dagger}(\phi\otimes\chi)=0$ for every $(\phi,\chi)$ such that
$\psi^{\dagger}(\phi\otimes\chi)=0$.  But then $\omega$ must be proportional
to $\psi$, and $\Omega=P$.

In the second case,
$\Omega=(\widetilde{\omega}\widetilde{\omega}^{\dagger})^P\!$, we
must have $\widetilde{\omega}^{\dagger}(\phi\otimes\chi^*)=0$ for every
$(\phi,\chi)$ such that $\psi^{\dagger}(\phi\otimes\chi)=0$.  This is
possible, but only if $\psi$ and $\widetilde{\omega}$ are product
vectors, with $\widetilde{\omega}$ the partial conjugate of $\psi$.
Then we again have $\Omega=P$.
\end{proof}

The fact that every zero $(\phi_0,\chi_0)$ of $P$ is quartic can also
be seen directly from equation~(\ref{eq:hess0}).  With
$\Omega=P=\psi\psi^{\dagger}$ and
$\psi^{\dagger}(\phi_0\otimes\chi_0)=0$ equation~(\ref{eq:hess0})
takes the form
\begin{equation}
\label{eq:hess01}
z^T G_{\Omega}\,z
=f_2(x,y)
=|\psi^{\dagger}(\xi\otimes\chi_0+\phi_0\otimes\zeta)|^2\,.
\end{equation}
Thus, $z^T G_{\Omega}\,z=0$ unless the vector
$\xi\otimes\chi_0+\phi_0\otimes\zeta\in\mathcal{H}$ has a component along $\psi$.  This
means that the Hessian matrix $G_{\Omega}$ at the zero
$(\phi_0,\chi_0)$ has rank at most two.

\subsection{Decomposing a decomposable witness}
\label{subsec:decompdecompwitness}

If we want to prove that a given witness $\Omega$ is decomposable, the
definitive solution is of course to decompose it as
$\Omega=\rho+\sigma^P$ with $\rho,\sigma\in\mathcal{D}$.  We want to discuss
here methods for doing this, based on a knowledge of zeros of
$\Omega$.  Unfortunately, the present discussion does not lead to a
complete solution of the problem.

Assume that we know a finite set of $k$ zeros of $\Omega$,
$Z=\{\phi_i\otimes\chi_i\}$, with partial conjugates
$\widetilde{Z}=\{\phi_i\otimes\chi_i^*\}$.  $Z$ may be the complete
set of zeros of $\Omega$, or only a subset.  The orthogonal complement
$Z^{\perp}$ is a $d_1$ dimensional subspace of $\mathcal{H}$, and
$\widetilde{Z}^{\perp}$ is a $d_2$ dimensional subspace, with $d_1\geq
N-k$ and $d_2\geq N-k$.  Let $P$ and $\widetilde{P}$ be the orthogonal
projections onto $Z^{\perp}$ and $\widetilde{Z}^{\perp}\!$,
respectively.

Define orthogonal projections $\mathbf{P}$ and
$\widetilde{\mathbf{P}}$ on the real Hilbert space $H$ such that
$\mathbf{P}X=PXP$ and
$\widetilde{\mathbf{P}}X=(\widetilde{P}X^P\!\widetilde{P})^P$ for
$X\in H$.  Define the overlap of these two projections as the
orthogonal projection $\mathbf{O}$ onto the subspace
$(\mathbf{P}H)\cap(\widetilde{\mathbf{P}}H)$.  Since $\mathbf{O}X=X$
if and only if $\mathbf{P}X=X$ and $\widetilde{\mathbf{P}}X=X$, we may
compute $\mathbf{O}$ numerically by picking out the eigenvectors of
$\mathbf{P}+\widetilde{\mathbf{P}}$ with eigenvalue two.  It follows
that
\begin{equation}
\mathbf{O}
=\mathbf{P}\mathbf{O}
=\mathbf{O}\mathbf{P}
=\widetilde{\mathbf{P}}\mathbf{O}
=\mathbf{O}\widetilde{\mathbf{P}}\,,
\end{equation}
and that $\mathbf{P}'=\mathbf{P}-\mathbf{O}$ and
$\widetilde{\mathbf{P}}'=\widetilde{\mathbf{P}}-\mathbf{O}$ are
orthogonal projections.

If $\Omega=\rho+\sigma^P$ with $\rho,\sigma\in\mathcal{D}$, then we must have
$Z\subset\textrm{Ker}\,\rho$ and $\widetilde{Z}\subset\textrm{Ker}\,\sigma$, hence
$\mathbf{P}\rho=\rho$ and
$\widetilde{\mathbf{P}}\sigma^P=(\widetilde{P}\sigma\widetilde{P})^P
=\sigma^P\!$.  Defining $\rho_1=\mathbf{P}'\rho$,
$\rho_2=\mathbf{O}\rho$,
$\sigma_1^P=\widetilde{\mathbf{P}}'\sigma^P\!$,
$\sigma_2^P=\mathbf{O}\sigma^P$ we have that
\begin{equation}
\Omega
=\rho_1+\rho_2+\sigma_2^P+\sigma_1^P
\end{equation}
with $\rho_1\in\mathbf{P}'H$,
$\sigma_1^P\in\widetilde{\mathbf{P}}'H$, and
$\rho_2+\sigma_2^P\in\mathbf{O}H$.  It follows that
\begin{equation}
\rho_1+\sigma_1^P=\Omega-\mathbf{O}\Omega\,,
\qquad
\rho_2+\sigma_2^P=\mathbf{O}\Omega\,.
\end{equation}
The decomposition of $\Omega-\mathbf{O}\Omega$ into $\rho_1$ and
$\sigma_1^P$ is unique and easily computed.

If the overlap $\mathbf{O}$ is zero, then $\rho_2=0$, $\sigma_2=0$,
and this is the end of the story, except that we should check that
both $\rho_1$ and $\sigma_1$ are positive semidefinite.  Otherwise it
remains to decompose $\mathbf{O}\Omega$ into $\rho_2$ and
$\sigma_2^P$ in such a way that $\rho=\rho_1+\rho_2$ and
$\sigma=\sigma_1+\sigma_2$ are both positive semidefinite.  This is
the difficult part of the problem, and the solution, if it exists,
need not be unique.

We have not pursued the problem further.  To do so one should use the
equations~(\ref{eq:rhomatel}) and~(\ref{eq:sigmamatel}), which put
strong and presumably useful restrictions on $\rho$ and $\sigma$.  For
any product vector $\phi\otimes\chi$ and any
$\phi_i\otimes\chi_i\in Z$ it is required that
\begin{align}
\nonumber
(\phi\otimes\chi_i)\rho(\phi_i\otimes\chi) & =
(\phi\otimes\chi_i)\Omega(\phi_i\otimes\chi)\,,\\
(\phi\otimes\chi_i^*)\sigma(\phi_i\otimes\chi^*) & =
(\phi\otimes\chi_i^*)\Omega^P(\phi_i\otimes\chi^*)\,.
\end{align}

\subsection{Decomposable witnesses with prescribed zeros}
\label{subsec:decompwitnessesprs0}

One use of decomposable witnesses is that they provide examples of
witnesses with prescribed zeros.  We will describe now how this works,
and we use the same notation as in the previous subsection.

Let $\Omega=\rho+\sigma^P$ be a decomposable witness, as before.  The
necessary and sufficient conditions for $\Omega$ to have $Z$ as a
predefined set of zeros is that $\rho=UU^{\dagger}$ and
$\sigma=VV^{\dagger}$ with $X,Y\in H$, $U=PX$, $V=\widetilde{P}Y$.  We
choose $\rho$ and $\sigma$ to have the maximal ranks
\begin{equation}
\mathrm{rank}\,\rho=d_1\,,\qquad
\mathrm{rank}\,\sigma=d_2\,,
\end{equation}
where $d_1\geq N-k$ and $d_2\geq N-k$ are the dimensions of
$Z^{\perp}$ and $\widetilde{Z}^{\perp}$ respectively.  In the generic
case, when $k<N$ there will be no linear dependencies between the
zeros, or between their partial conjugates, so that we will have
$d_1=d_2=N-k$.

The set of unnormalized decomposable witnesses of this form has
dimension
\begin{align}
\nonumber
d_D & = \mathrm{rank}\,\mathbf{P}'+\mathrm{rank}\,\widetilde{\mathbf{P}}'+\mathrm{rank}\,\mathbf{O}\\
\nonumber
& = \mathrm{rank}\,\mathbf{P}+\mathrm{rank}\,\widetilde{\mathbf{P}}-\mathrm{rank}\,\mathbf{O}\\ 
& = d_1^{\;2}+d_2^{\;2}-\mathrm{rank}\,\mathbf{O}\,.
\end{align}
The corresponding set of normalized witnesses is a convex subset
$\mathcal{F}_D\subset\mathcal{F}_{\Omega}$ of dimension $d_D-1$, consisting of witnesses
of the form $\Omega+t\Gamma$ with
\begin{equation}
\Gamma=A+B^P\!,\quad
\mathbf{P}A=A\,,\quad
\widetilde{\mathbf{P}}(B^P)=B^P\!,\quad
\textrm{Tr}\,A=\textrm{Tr}\,B=0\,.
\end{equation}
These conditions ensure that $\rho+tA$ and $\sigma+tB$ will remain
positive for small enough positive or negative finite values of the
real parameter $t$.

The witness $\Omega$ is an interior point of a unique face of $\mathcal{S}_1^{\circ}$,
which we call $\mathcal{F}_{\Omega}$, as defined in
equation~(\ref{eq:faceOmegadef}).  The dimension of this face is
$d_{\Omega}-1$ when we define
\begin{equation}
d_{\Omega}=\dim\mathrm{Ker}\,\mathbf{U}_{\Omega}\;.
\end{equation}
For a quadratic witness $\Omega$ a lower limit is
\begin{equation}
\label{eq:dOmegalowerbound}
d_{\Omega}\geq N^2-kM_2
\qquad\mathrm{with}\qquad
M_2=2(N_a+N_b)-3\,.
\end{equation}
This inequality will usually be an equality, especially when $k$ is
small.

Since $\mathcal{F}_D\subset\mathcal{F}_{\Omega}$ we must have $d_D\leq d_{\Omega}$,
implying a lower bound for the overlap $\mathbf{O}$,
\begin{equation}
\mathrm{rank}\,\mathbf{O}\geq
d_1^{\;2}+d_2^{\;2}-d_{\Omega}\geq
2(N-k)^2-d_{\Omega}\,.
\end{equation}

In particular, when the inequality in
equation~(\ref{eq:dOmegalowerbound}) is an equality we have the
nontrivial lower bound
\begin{equation}
\mathrm{rank}\,\mathbf{O}\geq
N^2-k(4N-2k-M_2)\,.
\end{equation}
The generic case when $k$ is small is that $d_D=d_{\Omega}$, which
means that $\mathrm{rank}\,\mathbf{O}$ has the minimal value allowed by this
inequality.  See Table~\ref{tbl:decomposable} where we have tabulated
some values for $d_D$ and $d_{\Omega}$ found numerically.

When we prescribe $k$ zeros of the decomposable witness $\Omega$,
there is a possibility that the actual number of zeros $k'\!$, is
larger than $k$.  A simple counting exercise indicates how many zeros
to expect.  A product vector $\phi\otimes\chi$ is a zero if and only if
$\rho(\phi\otimes\chi)=0$ and $\sigma(\phi\otimes\chi^*)=0$.  These are
$2(N-k)$ complex equations for $\phi\otimes\chi$.  There are
$N_a+N_b-2$ complex degrees of freedom in a product vector, up to
normalization of each factor.  The critical value $k=k_c$ is the
number of zeros for which the number of equations equals the number of
variables, \textit{i.e.},
\begin{equation}
k_c=N+1-\frac{N_a+N_b}{2}\;.
\end{equation}
We expect to find
\begin{align}
\nonumber
k'& =  k\phantom{\infty}  \;\; \text{if\ \ }k<k_c\,,\\
\nonumber
k'&\geq k\phantom{\infty} \;\; \text{if\ \ }k=k_c\,,\\
k'& = \infty\phantom{k}   \;\; \text{if\ \ }k>k_c\,.
\end{align}

Whether $\Omega$ will be quadratic or quartic can be estimated as
follows.  The Hessian $G_{\Omega}$ at a zero of $\Omega$ is a real
square matrix of dimension $2(N_a+N_b-2)$.  If $\mathrm{rank}\,\rho=d_1$ and
$\mathrm{rank}\,\sigma=d_2$, then we may write $\Omega=\rho+\sigma^P$ as a
convex combination of $d_1+d_2$ or fewer extremal decomposable
witnesses.  Each extremal decomposable witness contributes at most two
nonzero eigenvalues to $G_{\Omega}$.  Thus, if we define $d_H$ as the
minimal dimension of the kernel of $G_{\Omega}$ at any zero, we have a
lower bound
\begin{equation}
\label{eq:dHlowerbound}
d_H\geq 2(N_a+N_b-d_1-d_2-2)\,.
\end{equation}

\begin{table}
\centering
{\begin{tabular}{cccccc} 
\hline
\noalign{\smallskip}
$N_a\times N_b$ &  $k$  & $k'$  &  $d_D$  &  $d_{\Omega}$ & $d_H$  \\
\noalign{\smallskip}
\hline
\noalign{\smallskip}
     $2\times 4$ &   1  &    1   &     55      &  55         & 0   \\
     $k_c=6$     &   2  &    2   &     46      &  46         & 0    \\
                 &   3  &    3   &     37      &  37         & 0    \\
                 &   4  &    4   &     28      &  28         & 0    \\
                 &   5  &    5   &     18      &  19         & 2    \\
                 &   6  &    8   &\phantom{1}8 &  10         & 4    \\ 
                 &   7  &$\infty$&\phantom{1}2 &\phantom{1}2 & 6    \\ 
\noalign{\smallskip}
\hline
\noalign{\smallskip}
     $3\times 3$ &   1 &  1     &     72      &  72          & 0    \\
     $k_c=7$     &   2 &  2     &     63      &  63          & 0     \\
                 &   3 &  3     &     54      &  54          & 0    \\
                 &   4 &  4     &     44      &  45          & 0    \\
                 &   5 &  5     &     32      &  36          & 0    \\
                 &   6 &  6     &     18      &  27          & 2    \\
                 &   7 & 10,14  &\phantom{1}8 &  18          & 4    \\
                 &   8 &$\infty$&\phantom{1}2 &\phantom{1}9  & 6    \\
\noalign{\smallskip}
\hline
\noalign{\smallskip}
     $3\times 4$ &  1 & 1   &    133      & 133           & 0    \\
     $k_c=9.5$   &  2 & 2   &    122      & 122           & 0    \\
                 &  3 & 3   &    111      & 111           & 0    \\
                 &  4 & 4   &    100      & 100           & 0    \\
                 &  5 & 5   &\phantom{1}88 &\phantom{1}89 & 0    \\
                 &  6 & 6   &\phantom{1}72 &\phantom{1}78 & 0    \\
                 &  7 & 7   &\phantom{1}50 &\phantom{1}67 & 2    \\
                 &  8 & 8   &\phantom{1}32 &\phantom{1}56 & 4    \\
                 &  9 & 9   &\phantom{1}18 &\phantom{1}45 & 6    \\
       & 10\phantom{1}&$\infty$&\phantom{11}8 &\phantom{1}34 & 8    \\
       & 11\phantom{1}&$\infty$&\phantom{11}2 &\phantom{1}23 & 10\phantom{8}\\
       & 12\phantom{1}& $-$    &\phantom{11}0 &\phantom{1}12 & $-$    \\
\noalign{\smallskip}
\hline
\noalign{\smallskip}
\end{tabular}}
  \caption{Numbers related to decomposable witnesses, as explained in the text.
    $k$ is the prescribed number of zeros and $k'$ is the actual number,
    which may be larger.
    $k_c$ is the critical number of prescribed zeros for which we expect
    a finite number of zeros in total. If $k<k_c$ we expect only the 
    prescribed zeros.  If $k=k_c$ we expect some finite number of zeros.
    If $k>k_c$ we expect a continuum of zeros.}
\label{tbl:decomposable}
\end{table}

In Table~\ref{tbl:decomposable} we list some numbers related to
decomposable witnesses, for different dimensions $N_a\times N_b$.
The numbers presented were obtained numerically as follows.  We choose
$k$ random product vectors.  When $k<N$ there will exist decomposable
witnesses with these as zeros, and $d_D$ is the computed dimension of
the set of such witnesses.  $k'$ is the actual number of zeros they
have.  The listed value of $d_{\Omega}$, the dimension of the kernel
of $\mathbf{U}_\Omega$ when $\Omega$ is a quadratic witness with these zeros,
is the lower bound given in equation \eqref{eq:dOmegalowerbound}.  The
listed value of the dimension $d_H$ of the kernel of the Hessian is
the lower bound from equation \eqref{eq:dHlowerbound}.

We see from the table that with only a few zeros, the set of
decomposable witnesses in the face has the same dimension as the face
itself.  But $\mathcal{F}_{\Omega}$ can not consist entirely of decomposable
witnesses, because in that case our numerical searches for extremal
witnesses, where we search for witnesses with increasing numbers of
zeros, would only produce decomposable witnesses, exactly the opposite
of what actually happens.  Hence we conclude that $\mathcal{F}_D$ is a closed
subset of $\mathcal{F}_{\Omega}$, of the same dimension as $\mathcal{F}_{\Omega}$ but
strictly smaller than $\mathcal{F}_{\Omega}$.

On a face defined by a higher number of zeros the set of decomposable
witnesses has a lower dimension.  In some cases the decomposable
witnesses will lie on the boundary of the face in question. This will
either be because they have more than $k$ zeros, or because some of
the $k$ zeros are quartic.  In other cases the decomposable witnesses
make up a low dimensional part of the interior of the quadratic face.
This is the case \textit{e.g} for faces $\mathcal{F}_4$ in $3\times 3$ systems, where
the set of decomposable witnesses is 44 dimensional, the face is 45
dimensional, and the decomposable witnesses have four quadratic zeros and
hence are situated in the interior of the face.


\pagebreak

\section{Examples of extremal witnesses}
\label{sec:knownexamples}

In this section we apply Theorem~\ref{thm:extremecond} and
Algorithm~1 to study examples of extremal
witnesses.  We study first examples known from the literature, in
particular the Choi--Lam witness~\cite{Choi1975,ChoiLam1977} and the
Robertson witness~\cite{Robertson1985}.  In
Sections~\ref{sec:newexamples}, \ref{sec:quadratic},
and~\ref{sec:quartic} we construct numerical examples of generic
witnesses, study these and report some observations.  Finally, in
Section~\ref{sec:nongeneric} we present an example of a nongeneric
extremal witness, numerically constructed, having more than the
minimum number of zeros.


\subsection{The Choi--Lam witness}

Define, like in~\cite{Ha2012a},

\settowidth{\mycolwd}{$-\mathrm{e}^{-\mathrm{i}\theta}$}
\begin{equation}
\label{eq:choiwitgen}
\Omega_K(a,b,c;\theta)=\left(
\begin{array}{*{9}{@{}C{\mycolwd}@{}}}
    a & . & . & . & -\mathrm{e}^{\mathrm{i}\theta} & . & . & . & -\mathrm{e}^{-\mathrm{i}\theta} \\
    . & c & . & . & . & . & . & . & . \\
    . & . & b & . & . & . & . & . & . \\
    . & . & . & b & . & . & . & . & . \\
   -\mathrm{e}^{-\mathrm{i}\theta} & . & . & . & a & . & . & . & -\mathrm{e}^{\mathrm{i}\theta} \\
    . & . & . & . & . & c & . & . & . \\
    . & . & . & . & . & . & c & . & . \\
    . & . & . & . & . & . & . & b & . \\
   -\mathrm{e}^{\mathrm{i}\theta} & . & . & . & -\mathrm{e}^{-\mathrm{i}\theta} & . & . & . & a   
\end{array}
\right)
\end{equation}
We write dots instead of zeros in the matrix to make it more readable.
The special case $\Omega_C=\Omega_K(1,0,1;0)$ is the Choi--Lam witness,
one of the first examples of a nondecomposable
witness~\cite{Choi1975,ChoiLam1977}.  The set of zeros of $\Omega_C$
consists of three isolated quartic zeros
\begin{equation}
\label{eq:choiisolzeros}
e_{13}=e_1\otimes e_3\,,\quad
e_{21}=e_2\otimes e_1\,,\quad
e_{32}=e_3\otimes e_2\,,
\end{equation}
where $e_1,e_2,e_3$ are the natural basis vectors in $\mathbb{C}^3\!$,
and a continuum of zeros $\phi\otimes\chi$ where
$\alpha,\beta\in\mathbb{R}$ and
\begin{equation}
\label{eq:choicontzeros}
\phi=e_1+\mathrm{e}^{i\alpha}\,e_2+\mathrm{e}^{i\beta}\,e_3\,,
\qquad
\chi=\phi^*\!.
\end{equation}
The product vectors defined in equation~\eqref{eq:choicontzeros} span
a seven dimensional subspace consisting of all vectors $\psi\in\mathbb{C}^9$
with components $\psi_1=\psi_5=\psi_9$.  The three product vectors
defined in equation~\eqref{eq:choiisolzeros} have
$\psi_1=\psi_5=\psi_9=0$ and lie in the same subspace.

The Hessian has a doubly degenerate kernel at each of these zeros.
Hence a single zero of $\Omega=\Omega_C$ contributes $29$ equations
in $\mathbf{U}_{\Omega}$: nine from $\mathbf{T}_0$ and $\mathbf{T}_1$, $2\cdot8$ from
$\mathbf{T}_2$, and four from $\mathbf{T}_3$.  However, by a singular value
decomposition of 

\begin{equation}
\mathbf{T}=\mathbf{T}_0\oplus\mathbf{T}_1\oplus\mathbf{T}_2\oplus\mathbf{T}_3
\end{equation}
at one zero
at a time we find numerically that the number of independent equations
is $24$ for each of the isolated zeros and $28$ for any randomly
chosen nonisolated zero. We again see that redundant equations appear
when the kernel of the Hessian is more than one dimensional.  We also
observe that the redundancies depend on the nature of the zero.

Choosing increasingly many nonisolated zeros, only 67 linearly
independent equations are obtained, out of the $80$ needed for proving
extremality.  These 67 equations are obtained with the quadratic and
quartic constraints from three zeros, or with only the quadratic
constraints from nine zeros.  With all three isolated zeros and any
single zero from the continuum, $\textrm{Ker}\,\mathbf{U}_{\Omega}$ is one dimensional
and uniquely defines the Choi--Lam witness.  This verifies numerically
that it is extremal.  We need the quartic constraints from the three
isolated zeros in order to prove extremality, because the quadratic
constraints from all the zeros provide only 76 independent equations,
leaving a four dimensional face of witnesses having the same set of
zeros as the Choi--Lam witness $\Omega_C$.  This proves that
$\Omega_C$ is not exposed, but is an extremal point of a four
dimensional exposed face of $\mathcal{S}_1^{\circ}$.

Equation~(\ref{eq:choiwitgen}) with $\theta=0$ and the restrictions
$0\leq a\leq 1$, $a+b+c=2$, $bc=(1-a)^2\!$, defines more generally a one
parameter family of extremal witnesses considered by Ha and
Kye"\cite{Choetal1992,Ha2011}.  They prove that $\Omega_K(a,b,c;0)$ is
both extremal and exposed for $0<a<1$.  The original Choi--Lam witness
$\Omega_C$ is extremal but not exposed, it is the limiting case
$a=c=1$, $b=0$.  We will return in Section~\ref{sec:separable} to a
more detailed discussion of the facial structure of the set of
separable states and the set of witnesses.

We have verified by our numerical methods, for several values of $a$
with $0<a<1$, that $\Omega_K(a,b,c;0)$ is indeed extremal.  As explained
in~\cite{Ha2011} there are four classes of zeros.  The zeros in one of
these classes have Hessians with two dimensional kernels, while the
Hessians of the zeros in the other three classes have one dimensional
kernels.  It turns out that a set of four zeros, one from each class,
uniquely defines $\Omega_K(a,b,c;0)$ as the only solution to the
constraints imposed by the zeros when utilizing both quadratic and
quartic constraints.  This shows numerically that the witness is
extremal.

The more general case with $\theta\neq 0$ has been treated as an
example of optimal, and in fact extremal, witnesses with structural
physical approximations that are entangled PPT states~\cite{Ha2012}.
We will return to these concepts in Section~\ref{sec:spa}.

\subsection{The Robertson witness}

Another example we have studied is the extremal positive map in
dimension $4\times 4$ introduced by Robertson~\cite{Robertson1985}
\begin{equation}
\label{eq:robmap}
X\rightarrow
\begin{pmatrix}
    X_{33}+X_{44} & \phantom{+} & 0 & \phantom{+}  & X_{13}+X_{42} & \phantom{+} & X_{14}-X_{32}\\
    0 & \phantom{+} & X_{33}+X_{44} & \phantom{+} & X_{23}-X_{41} & \phantom{+} & X_{24}+X_{31}\\
    X_{31}+X_{24} & \phantom{+} & X_{32}-X_{14} & \phantom{+} & X_{11}+X_{22} & \phantom{+} & 0\\
    X_{41}-X_{23} & \phantom{+} & X_{42}+X_{13} & \phantom{+} & 0 & \phantom{+} & X_{11}+X_{22}
\end{pmatrix}.
\end{equation}
By equation~(\ref{eq:jamiso01}), it corresponds to the witness
\settowidth{\mycolwd}{$-1$}
\begin{equation}
B=\left(
\begin{array}{*{16}{@{}C{\mycolwd}@{}}}
   . & . & . & . & . & . & . & . & . & . & . & . & . & . & . & .\\
   . & . & . & . & . & . & . & . & . & . & . & 1 & . & . &-1 & .\\
   . & . & 1 & . & . & . & . & . & 1 & . & . & . & . & . & . & .\\
   . & . & . & 1 & . & . & . & . & . & . & . & . & 1 & . & . & .\\
   . & . & . & . & . & . & . & . & . & . & . &-1 & . & . & 1 & .\\
   . & . & . & . & . & . & . & . & . & . & . & . & . & . & . & .\\
   . & . & . & . & . & . & 1 & . & . & 1 & . & . & . & . & . & .\\
   . & . & . & . & . & . & . & 1 & . & . & . & . & . & 1 & . & .\\
   . & . & 1 & . & . & . & . & . & 1 & . & . & . & . & . & . & .\\
   . & . & . & . & . & . & 1 & . & . & 1 & . & . & . & . & . & .\\
   . & . & . & . & . & . & . & . & . & . & . & . & . & . & . & .\\
   . & 1 & . & . &-1 & . & . & . & . & . & . & . & . & . & . & .\\
   . & . & . & 1 & . & . & . & . & . & . & . & . & 1 & . & . & .\\
   . & . & . & . & . & . & . & 1 & . & . & . & . & . & 1 & . & .\\
   . &-1 & . & . & 1 & . & . & . & . & . & . & . & . & . & . & .\\
   . & . & . & . & . & . & . & . & . & . & . & . & . & . & . & .
\end{array}
\right)
\end{equation}
\begin{align}
\label{eq:robbiqform}
\nonumber
f_R(\phi,\chi) & = (|\phi_1|^2+|\phi_2|^2)(|\chi_3|^2+|\chi_4|^2)+(|\phi_3|^2+|\phi_4|^2)(|\chi_1|^2+|\chi_2|^2)\\
& + 2\,\mathrm{Re}\,[(\phi_1^{\ast}\phi_3+\phi_4^{\ast}\phi_2)(\chi_3^{\ast}\chi_1+\chi_2^{\ast}\chi_4)+(\phi_2^{\ast}\phi_3-\phi_4^{\ast}\phi_1)(\chi_3^{\ast}\chi_2-\chi_1^{\ast}\chi_4)]\;.
\end{align}
Every product vector $\phi\otimes\chi$ with
$\phi_1=\phi_2=\chi_1=\chi_2=0$ or $\phi_3=\phi_4=\chi_3=\chi_4=0$ is
a zero.  More generally, $\phi\otimes\chi$ is a zero if
\begin{equation}
\label{eq:robzeroeq1}
(|\phi_1|^2+|\phi_2|^2)(|\chi_3|^2+|\chi_4|^2)
=(|\phi_3|^2+|\phi_4|^2)(|\chi_1|^2+|\chi_2|^2)
\end{equation} 
and
\begin{eqnarray}
\label{eq:robzeroeq2}
\phi_1^{\ast}\phi_3+\phi_4^{\ast}\phi_2
& = &
-\chi_1^{\ast}\chi_3-\chi_4^{\ast}\chi_2\;,
\nonumber
\\
\phi_2^{\ast}\phi_3-\phi_4^{\ast}\phi_1
& = &
-\chi_2^{\ast}\chi_3+\chi_4^{\ast}\chi_1\;.
\end{eqnarray} 
For any given $\phi$ we obtain a continuum of zeros in the following
way.  Define
\begin{equation}
a=\phi_1^{\ast}\phi_3+\phi_4^{\ast}\phi_2\;,
\qquad
b=\phi_2^{\ast}\phi_3-\phi_4^{\ast}\phi_1\;.
\end{equation} 
Then choose $\chi_1,\chi_2$ at random and define
\begin{eqnarray}
\chi_3=\frac{-a\chi_1-b\chi_2}{|\chi_1|^2+|\chi_2|^2}\;,
\qquad
\chi_4=\frac{ b^{\ast}\chi_1-a^{\ast}\chi_2}{|\chi_1|^2+|\chi_2|^2}\;.
\end{eqnarray} 
This solves equation~(\ref{eq:robzeroeq2}).  In order to solve also
equation~(\ref{eq:robzeroeq1}), rescale $\chi_i\rightarrow c\chi_i$ for
$i=1,2$ and $\chi_i\rightarrow \chi_i/c$ for $i=3,4$ with a suitably chosen
constant $c>0$.

We have verified numerically that $\Omega_R$ is extremal.  Four zeros
determine the witness uniquely through the quadratic and quartic
constraints, \textit{e.g.} the zeros
\begin{equation}
\label{eq:robzeros}
e_{ij}=e_i\otimes e_j,\,\,\, ij=11,12,33,34\;.
\end{equation} 
It appears that all zeros have Hessians with eight dimensional
kernels. The witness has a continuum of zeros, and 20 randomly chosen
of these turn out to also uniquely determine the witness through only
quadratic constraints.  The fact that the quadratic constraints are
sufficient to determine the witness uniquely proves that it is
exposed.  See Section~\ref{sec:separable} for a discussion of exposed
faces.

As a related example we have looked at a witness $\Omega_Z$ belonging
to a new class of witnesses in dimension $N\times(2K)$ introduced by
Zwolak and Chru{\'s}ci{\'n}ski~\cite{Chru3,Zwolak2012,Chru2}.  We take
$N=2$, $K=1$, and $|z_{12}|=1$ in their notation.  Structurally
similar to $\Omega_R$, $\Omega_Z$ also has the zeros defined in
equation~(\ref{eq:robzeros}), and again the quadratic and quartic
constraints from these four zeros determine $\Omega_Z$ uniquely.  This
verifies numerically that $\Omega_Z$ is extremal, and exemplifies the
fact that the same set of zeros with different Hessian zeros can
uniquely specify two different witnesses.  This witness, like
$\Omega_R$, has a continuous set of zeros and can be determined
uniquely through only the quadratic constraints from a randomly chosen
set of 20 of these zeros.  Hence it is exposed.

\subsection{Numerical examples, generic and nongeneric}
\label{sec:newexamples}

We have successfully implemented Algorithm~1 and
used it to locate numerical examples of extremal witnesses in $2\times
4$, $3\times 3$ and $3\times 4$ dimensions.  In the process of
searching for an extremal witness we produce witnesses situated on a
hierarchy of successively lower dimensional faces of $\mathcal{S}_1^{\circ}$.

We define a class of extremal witnesses to be generic if such
witnesses can be found with nonzero probability by means of
Algorithm~1 when the search direction $\Gamma$ is
chosen at random in every iteration.  An overwhelming majority of the
extremal witnesses found in numerical random searches are quadratic,
but a small number of quartic witnesses are also found.  There may be
numerical problems in locating a zero which is quartic or close to
quartic.  In such cases our implementation of the algorithm stops
prematurely.  We discuss the quartic witnesses in
Section~\ref{sec:quartic}.

Line 5 of Algorithm~1 regards locating the
boundary of a face.  \ref{sec:findfacebnd} describes how this
can be formulated as a problem of locating a simple root of a special
function, and also mentions other possible approaches.

\subsubsection{Quadratic extremal witnesses}
\label{sec:quadratic}

We make the following comments concerning the quadratic extremal
witnesses found.

\begin{itemize}

\item[--] We have experienced premature stops due to numerical
  problems with zeros that are close to quartic. See further comments
  in Section~\ref{sec:quartic}.

\item[--] When no existing zero becomes close to quartic, a single new
  quadratic zero appears in every iteration of
  Algorithm~1 when the boundary of the current
  face is reached.  For a small number of zeros, there is no
  redundancy between constraints from the existing zeros and
  constraints from the new zero.  A redundancy appears typically with
  the seventh zero in $2\times 4$, with the ninth zero in $3\times 3$,
  and never in $3\times 4$.  This gives a hierarchy of faces of $\mathcal{S}_1^{\circ}$
  of dimension $N^2-1-kM_2$, where $k$ is the number of zeros of a
  witness in the interior of the face, and $M_2=2(N_a+N_b)-3$.

\item[--] The extremal witnesses have the expected number of zeros as
  listed in Table~\ref{tbl:constraints}.  The zeros and the partially
  conjugated zeros span $\mathcal{H}$, as required by
  Theorem~\ref{thm:zerosspan2}.

\item[--] A quadratic extremal witness has at least one negative
  eigenvalue, and the same is true for its partial transpose.  We do
  not find extremal witnesses with more than three negative eigenvalues in
  $3\times 3$ or more than four negative eigenvalues in $3\times 4$.  See
  Table~\ref{tbl:negeigs} and Figure~\ref{fig:negeigs}.

\item[--] Every witness and its partial transpose have full rank.

\end{itemize}

\begin{figure}
\centering
\includegraphics[width=0.8\textwidth]{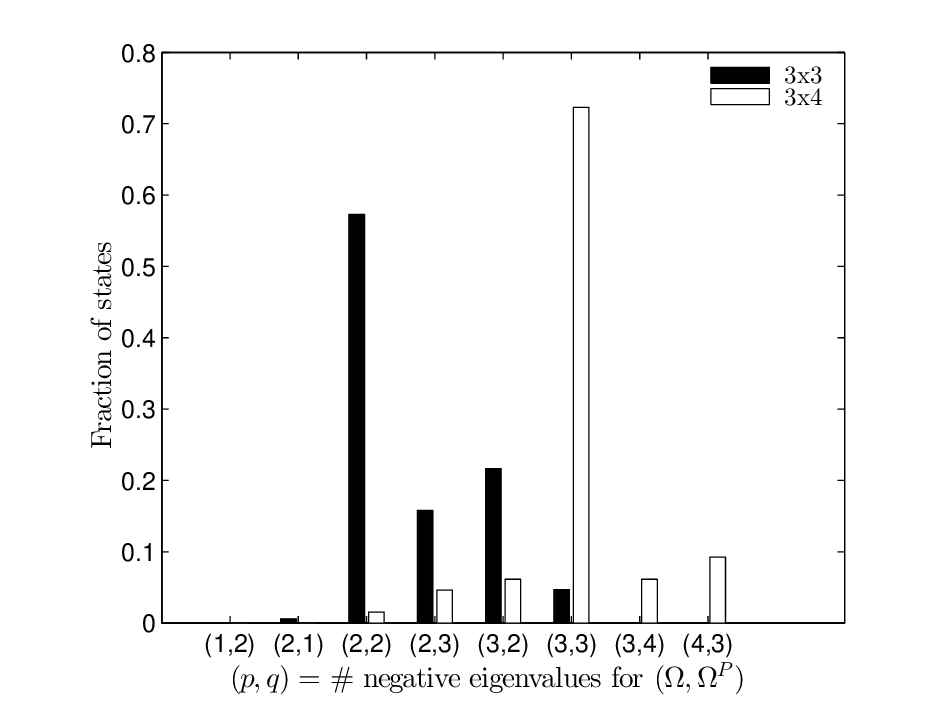}
\caption{Classification of generic quadratic extremal witnesses
found numerically by Algorithm~1 in
dimensions $3\times 3$ and $3\times 4$.  A witness $\Omega$ of
type $(p,q)$ has $p$ negative eigenvalues, and its partial
transpose $\Omega^P$ has $q$ negative eigenvalues.}
\label{fig:negeigs}	
\end{figure}

\begin{table}
\centering
{\begin{tabular}{@{}ccc@{}} 
\hline
    $(p,q)$   & $3\times 3$  & $3\times 4$\\
\hline\\[-0.4cm]
    $(1,2)/(2,1)$       & 0/1 & 0/0 \\	
    $(2,2)$             & 98   &   1  \\
    $(2,3)/(3,2)$       & 27/37   &   3/4 \\
    $(3,3)$             & 8    &   47  \\
    $(3,4)/(4,3)$       & 0/0    &   4/6 \\ 
\hline
\end{tabular}}
  \caption{Classification of generic quadratic extremal witnesses,
    obtained by Algorithm~1,
    in dimensions $3\times 3$ and $3\times 4$.
    A witness and its partial transpose has respectively
    $p$ and $q$ negative eigenvalues.
    The table lists the number of witnesses found.
  }
\label{tbl:negeigs}
\end{table}

\subsubsection{Quartic extremal witnesses}
\label{sec:quartic}

In each iteration of Algorithm~1 the critical
parameter value $t_c$ is reached when either a new quadratic zero
appears, or a new zero eigenvalue of the Hessian matrix appears at one
of the existing zeros.  If the second alternative occurs at least once
during a search, then the extremal witness found will be quartic,
otherwise it will be quadratic.  Our experience is that a random
search most often produces a quadratic extremal witness and only
rarely a quartic extremal witness.

In order to make this observation more quantitative we have made
random searches for quartic witnesses in dimension $3\times 3$ as
follows.  We take 58 hierarchies of faces of quadratic witnesses
generated by Algorithm~1, and generate 100 random
perturbations $\Gamma$ away from the quadratic witness found on each
face. Since there are eight faces in each of the 58  hierarchies, this is a
total of 46400 tests.  For each test we compute $t_c'$ as the smallest
$t$ resulting in a zero eigenvalue of the Hessian at a zero.  $t_c'$
is thus an upper bound for $t_c$.  At $t=t_c'$ we test whether
$\Omega+t\Gamma$ is still a witness, in which case it has got a
quartic zero, or if it is no longer a witness, which will mean that
there has appeared a new quadratic zero for some $t<t_c'$.  Three runs
failed, probably due to some bug in the algorithm.  With a tolerance
of $\pm 10^{-14}$ on function values we found that only 91 out of
46397 successful tests resulted in quartic witnesses.  Thus, in this
quantitative test the probability for finding a quartic witness was
$0.2\,\%$.  We conclude that the fraction of quartic extremal
witnesses among the generic extremal witnesses is small but nonzero.

We constructed an explicit example of a quartic extremal witness in
$3\times 3$ with eight zeros in the following way.  Starting at a
quadratic witness on a face $\mathcal{F}_4$ generated by
Algorithm~1 we found a quartic witness on the
boundary of $\mathcal{F}_4$
by choosing one quadratic zero to become 
quartic and perturbing the quadratic witness accordingly (this may 
not succeed, in which case we choose a different zero).
Continuing Algorithm~1 from
this quartic witness resulted in an extremal witness with eight zeros, of
which one is a quartic zero with a one dimensional kernel of the
Hessian.  Running Algorithm~1 succeeded since we
knew a priori which zero was quartic, so problems with the numerical
precision could be overcome.  We observe that as $M_2=9$ constraints
from each of the six first quadratic zeros are added to $\mathbf{U}_{01}$, and
further as $M_2+M_4(1)=18$ constraints from the quartic zero are added
to form $\mathbf{U}_\Omega$, all of these are linearly independent. Hence
$\mathbf{U}_\Omega$ has rank 72 as expected, and a single new quadratic
zero is located in the final step, adding eight new linearly independent
constraints so as to make $\dim\textrm{Ker}\,\mathbf{U}_\Omega=1$.

\subsubsection{Nongeneric quadratic extremal witnesses}
\label{sec:nongeneric}

In order to demonstrate that a quadratic extremal witness may have
more than the minimum number of zeros, we have constructed quadratic
extremal witnesses in $3\times 3$ dimensions with $10$ rather than the
expected number of nine zeros.

One method for finding such witnesses is illustrated in
Figure~\ref{fig:dshaped3x3}, where the kink in the curve to the right
represents a witness with 10 zeros.  All witnesses on the face have
eight zeros in common, whereas the curved lines on both sides of the
kink consist of quadratic extremal witnesses with nine zeros.  We will
return to this example in the next section.

An entirely different method is described here.  The basic idea is to
use Theorem~\ref{thm:zerosspan2} and try to construct nine product
vectors $\psi_i=\phi_i\otimes\chi_i$ that are linearly dependent, and for
good measure such that also the partially conjugated product vectors
$\widetilde{\psi}_i=\phi_i\otimes\chi_i^*$ are linearly dependent.  Then a
quadratic witness with these nine zeros, and no more, can not be
extremal, but it might lead to a quadratic extremal witness with 10
zeros.

A first attempt is to choose the nine product vectors directly, by
minimizing the sum of the smallest singular value of the $9\times 9$
matrix
$\psi={[\psi_1,\psi_2,\ldots,\psi_9]}$ and the smallest singular value of 
the corresponding $9\times 9$ matrix $\widetilde{\psi}$.
Since singular values are nonnegative by definition, minimization
will make both these singular values zero.  The minimization problem
is solved \textit{e.g.} by the Nelder--Mead algorithm, or by a random search.

Let $Z$ be a set of nine product vectors generated in this way,
then any witness $\Omega$ with these as zeros has to satisfy the
constraints $\mathbf{U}_{01}\Omega=0$.  The generic result we find is that
$\dim\textrm{Ker}\,\mathbf{U}_{01}=2$.  Hence $\Omega$ has to be a decomposable witness,
\begin{equation}
\Omega=p\,\eta\eta^{\dagger}
+(1-p)\,(\widetilde{\eta}\widetilde{\eta}^{\dagger})^P\!,
\end{equation}
with $\eta,\widetilde{\eta}\in\mathcal{H}$,
$\eta^{\dagger}\psi_i=\widetilde{\eta}^{\dagger}\widetilde{\psi}_i=0$
for $i=1,\ldots,9$, and $0\leq p\leq 1$.  This decomposable $\Omega$
is not what we are looking for, in fact it has a continuum of quartic
zeros.  The extremal decomposable witnesses $\eta\eta^{\dagger}$ and
$(\widetilde{\eta}\widetilde{\eta}^{\dagger})^P$ each contribute two
nonzero eigenvalues to the $8\times 8$ Hessian matrix $G_{\Omega}$ at
every zero, hence $\dim\textrm{Ker}\,G_{\Omega}=4$ at every zero, unless
$p=0$ or $p=1$ in which case $\dim\textrm{Ker}\,G_{\Omega}=6$.

A second attempt is to choose eight product vectors $\psi_i$ that are
linearly dependent and have linearly dependent partial conjugates
$\widetilde{\psi}_i$.  This almost works, but not quite.  We may
construct a decomposable witness $\Omega$ with these zeros from two
pure states orthogonal to all $\psi_i$ and two partially transposed
pure states orthogonal to all $\widetilde{\psi}_i$.  As a convex
combination of four extremal decomposable witnesses, each contributing
two nonzero eigenvalues to each $8\times 8$ Hessian matrix, $\Omega$
will be quadratic.  $\Omega$ is determined by eight real parameters,
including a normalization constant, since $\Omega=\rho+\sigma^P$
where $\rho$ and $\sigma$ are positive matrices of rank two, with
$\rho\psi_i=\sigma\widetilde{\psi}_i=0$.  The maximum number of
independent constraints from eight quadratic zeros is 72, giving
$\dim\textrm{Ker}\,\mathbf{U}_{01}=9$, one more than the dimension of the set of
decomposable witnesses having these zeros.  However, we find that
$\Omega$ generically has more than the eight prescribed zeros, increasing
the number of independent constraints to 73.  Hence, again the
constraints leave only the decomposable witnesses, none of which are
quadratic and extremal.

The third and successful attempt is to choose seven product vectors with
similar linear dependencies.  Denote by $\mathbf{U}_{01}^{(7)}$ the
corresponding linear system of constraints.  The kernel of
$\mathbf{U}_{01}^{(7)}$ is found, in two different cases, to have dimension
21 or 22.  A decomposable witness with the given product vectors as
zeros has the form $\Omega_7=\rho+\sigma^P$ where $\rho$ and $\sigma$
are positive matrices of rank three.  Hence, the set of such
decomposable witnesses is $18$ dimensional, so that there are three or four
dimensions in $\textrm{Ker}\,\mathbf{U}_{01}^{(7)}$ orthogonal to the face of
decomposable witnesses.  Defining $\Gamma_7$ to lie in these three or four
dimensions one can walk towards the boundary of the face
$\mathcal{F}_7=(\textrm{Ker}\,\mathbf{U}_{01}^{(7)})\cap\mathcal{S}_1^{\circ}$, finding
$\Omega_8=\Omega_7+t_c\Gamma_7$ with eight zeros.  $\Omega_8$ is now
guaranteed to be nondecomposable.  Let $\mathbf{U}_{01}^{(8)}$ be the system
defined by these eight zeros, defining the face
$\mathcal{F}_8=(\textrm{Ker}\,\mathbf{U}_{01}^{(8)})\cap\mathcal{S}_1^{\circ}$.  We find that
$\textrm{Ker}\,\mathbf{U}_{01}^{(8)}$ has dimension nine less than $\textrm{Ker}\,\mathbf{U}_{01}^{(7)}$.
Defining $\Gamma_8\in\textrm{Ker}\,\mathbf{U}_{01}^{(8)}$, we locate an
$\Omega_9\in\mathcal{F}_9$, on the boundary of $\mathcal{F}_8$, with nine zeros.  The
kernel of $\mathbf{U}_{01}^{(9)}$ has dimension nine less than
$\textrm{Ker}\,\mathbf{U}_{01}^{(8)}$, \textit{i.e.} three or four, hence there is still freedom to move
along $\mathcal{F}_9$.  Doing so produces a quadratic extremal witness in
$3\times 3$ with $10$ zeros rather than nine.


\section{D-shaped faces of the set of witnesses in low dimensions}
\label{sec:dshapedI}

In this section we reveal a special geometry of next-to-extremal faces
of $\mathcal{S}_1^{\circ}$ in $2\times 4$ and $3\times 3$ systems, related to the
presence of decomposable witnesses.

Let $\mathcal{F}_k$ denote a face of $\mathcal{S}_1^{\circ}$ with interior points that are
quadratic witnesses with $k$ zeros.  This is typically what we find in
the $k$-th iteration of Algorithm~1.  A face
$\mathcal{F}_7$ in dimension $2\times 4$, or $\mathcal{F}_8$ in dimension $3\times 3$,
is the last face found before an extremal quadratic witness is
reached.  These particular faces have a special geometry, because the
number of zeros is one less than the dimension of the Hilbert space,
and as a result part of the boundary is a line segment of decomposable
witnesses.

A decomposable witness on such a face has the form
\begin{equation}
\Omega=(1-p)\,\psi\psi^\dagger+p\,(\eta\eta^\dagger)^P\!,
\quad 0\leq p\leq 1\,,
\end{equation}
where $\psi$ is orthogonal to the $N-1$ product vectors
$\phi_i\otimes\chi_i$ that are the zeros of all the witnesses in the
interior of the face, and $\eta$ is orthogonal to the partially
conjugated product vectors $\phi_i\otimes\chi_i^*$.  We take $\psi$ and
$\eta$ to be normalized vectors,
$\psi^\dagger\psi=\eta^\dagger\eta=1$.

\begin{figure}
\centering
\includegraphics[width=0.99\textwidth]{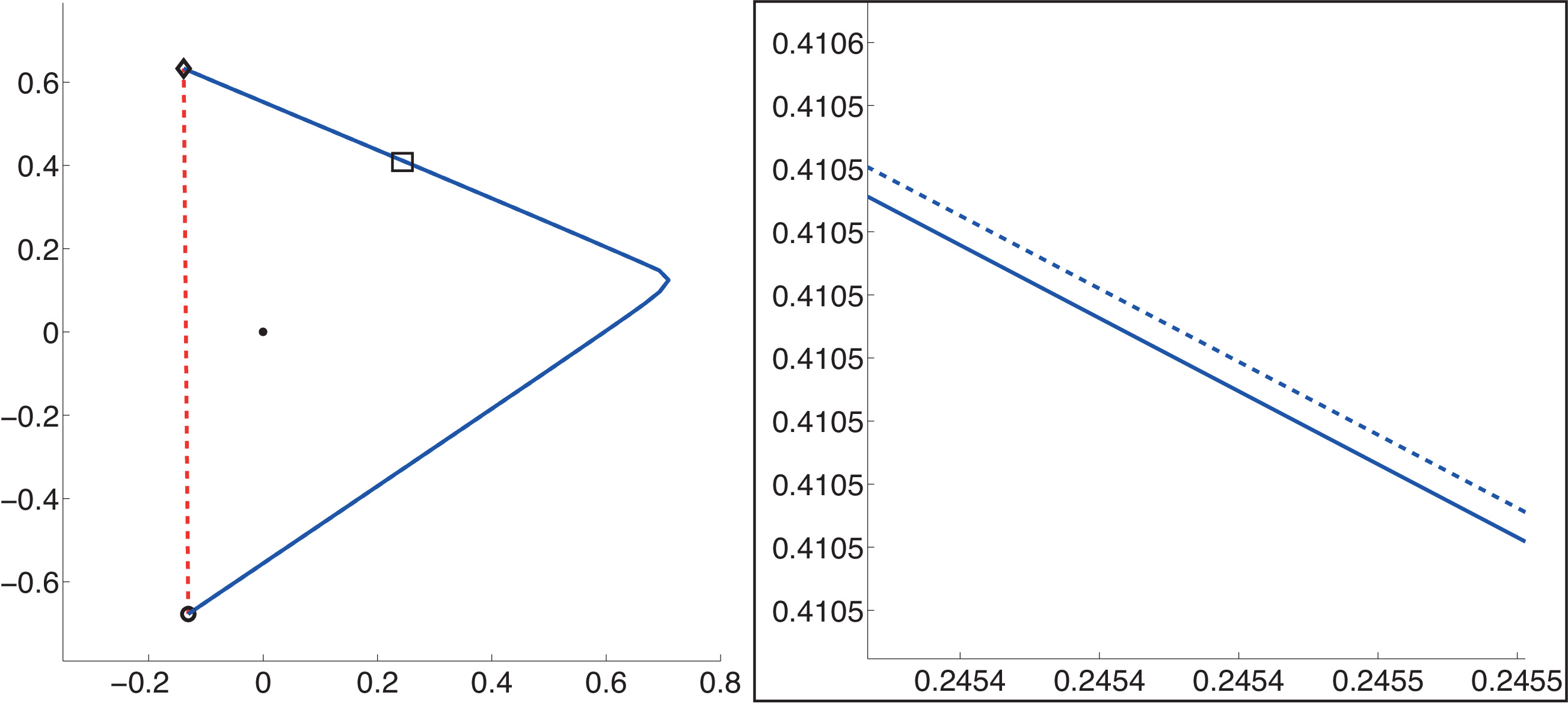}
\caption{A two dimensional face $\mathcal{F}_7\subset\mathcal{S}_1^{\circ}$ obtained by
applying Algorithm~1 in dimension
$2\times 4$.  Distances are defined by the Hilbert--Schmidt
metric.  The straight line segment (dashed, in red) consists of
decomposable witnesses, its upper end point is a pure state,
marked by a diamond, and its lower end point is the partial
transpose of a pure state, marked by a circle.  Starting from a
quadratic witness with seven zeros, marked by the black dot, the
curved boundary of the face (in blue) was located by perturbing in
all directions.  Every point on this curve represents an extremal
witness with eight quadratic zeros, except at the kink in the
middle of the curved boundary where the witness has nine quadratic
zeros.  These extremal witnesses are very nearly quartic.  We show
this by drawing another curved line (dashed, in blue) where the
first Hessian zero appears, that is, where one zero becomes
quartic.  This dashed curve is only visible in the enlarged part
of the figure.}
\label{fig:dshaped2x4}
\end{figure}

When we apply Algorithm~1 in $2\times 4$
dimensions and find a quadratic extremal witness, the generic case is
that the face $\mathcal{F}_7$ is two dimensional.  An example is shown in
Figure~\ref{fig:dshaped2x4}.  The line segment of decomposable
witnesses must be part of the boundary of the face $\mathcal{F}_7$, because the
interior of the face consists of quadratic witnesses with a fixed set
of seven quadratic zeros, whereas the decomposable witnesses have
additional zeros and Hessian zeros, in fact infinitely many quartic
zeros.  The rest of the boundary of the face is curved, and consists
of quadratic extremal witnesses with eight zeros.  There is one
exception, however,  seen in the figure as a kink
in the curved part of the boundary, and  this is a quadratic extremal
witness with nine zeros.

The interesting explanation is as follows.  As we go along the curve
consisting of quadratic witnesses with eight zeros, seven zeros are fixed,
they are the zeros defining the face.  But the eighth zero has to change
along the curve, because any two points on this part of the boundary
can be joined by a line segment passing through the interior of the
face, hence these two boundary points can have only the seven zeros in
common.  Starting from the two extremal decomposable witnesses
$\psi\psi^\dagger$ and $(\eta\eta^\dagger)^P$ we get two curved sections
of the boundary where the eighth zero changes continuously.  These two
sections meet in one point which is then a witness with two quadratic
zeros in addition to the seven zeros defining the face.

Accordingly, this next-to-extremal face $\mathcal{F}_7$ is two dimensional and
has the shape of a ``D'', where the straight edge is the line segment
of decomposable witnesses and the round part consists of quadratic
extremal witnesses.  Figure~\ref{fig:dshaped2x4} is a numerically
produced example of such a face.  Along similar lines we expect that
next-to-extremal faces in any $2\times N_b$ systems will be either
D-shaped, or line segments if by accident we hit the straight edge of
the D.

In the case of $\mathcal{F}_8$ in $3\times 3$ we can also construct the line
segment of decomposable witnesses on the boundary of $\mathcal{F}_8$.  The
remaining boundary will again consist of quadratic extremal witnesses,
this time with nine or ten zeros.  This part of the boundary is a curved seven
dimensional surface, since the face itself is eight dimensional.  Any two
dimensional section of $\mathcal{F}_8$ passing through the line segment of
decomposable witnesses is shaped as a D.  See
Figure~\ref{fig:dshaped3x3} for a numerically computed example.

\begin{figure}
\centering
\includegraphics[width=0.8\textwidth]{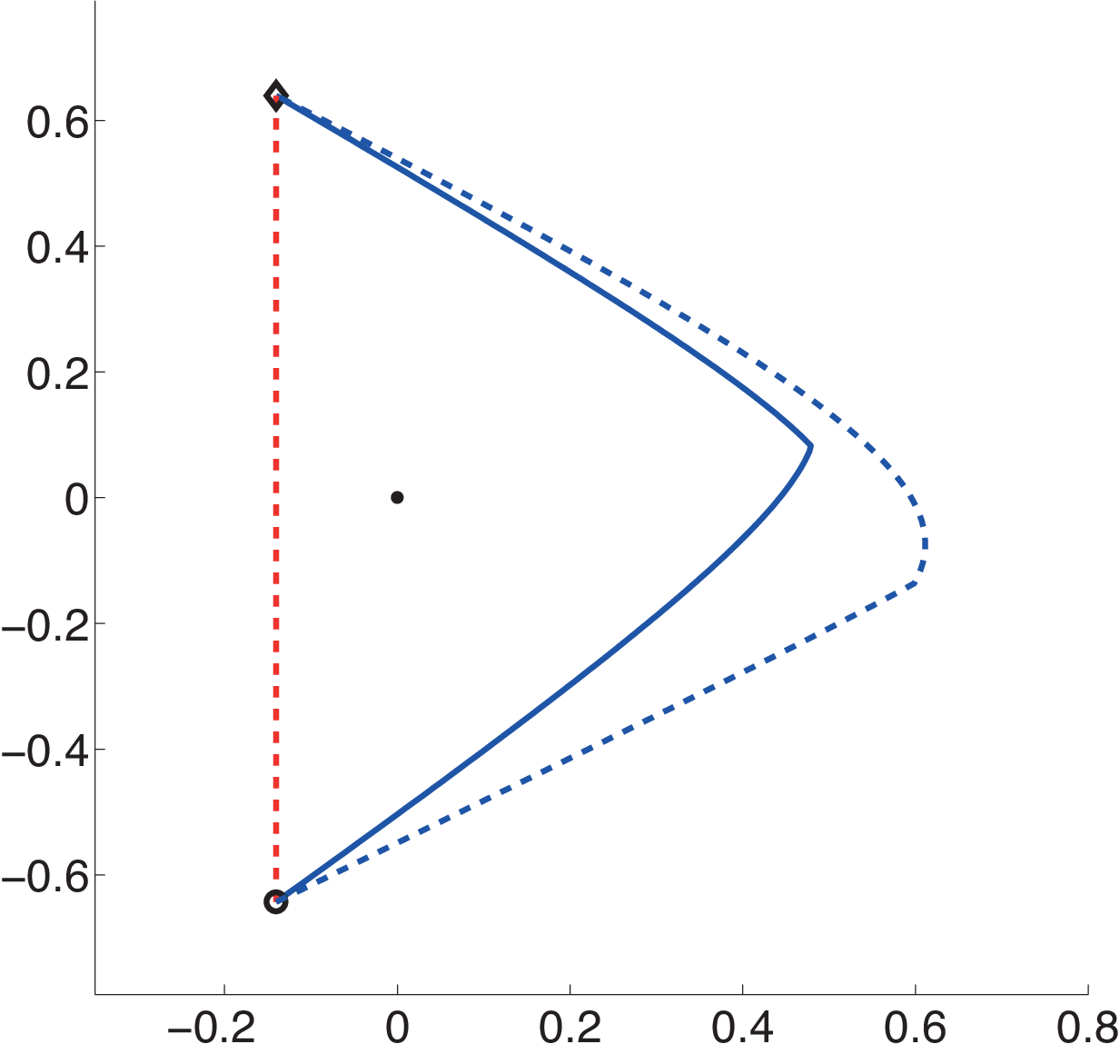}
\caption{A special two dimensional section of an eight dimensional
face $\mathcal{F}_8\subset\mathcal{S}_1^{\circ}$ obtained from output of
Algorithm~1 in dimension $3\times 3$.  This
section passes through a straight line segment of decomposable
witnesses (dashed, in red) which is part of the boundary of the
face.  The upper end point of the line segment is a pure state,
marked by a diamond, and the lower end point is the partial
transpose of a pure state, marked by a circle.  Starting from a
quadratic witness with eight zeros, marked by the black dot, the
curved boundary of the face (in blue) was located by perturbing in
all directions in the plane.  Every point on this curve represents
an extremal witness with nine quadratic zeros, except at the kink
where the witness has ten zeros.  The curved dashed line (in blue)
is where one zero becomes quartic.}
\label{fig:dshaped3x3}
\end{figure}

Note that we can not guarantee that any choice of eight random product
vectors gives rise to such a D.  Many choices may give rise to only
the line segment of decomposable witnesses, since there are inequality
constraints that are not automatically satisfied, even if we are able
to satisfy the equality constraints that we have discussed here.

In other dimensions line segments of decomposable witnesses
constructed from $N-1$ zeros also define D-shaped faces, but the
round part of the ``D'' will in those cases consist not of extremal
witnesses but of lower dimensional faces.


\section{Faces of the set of separable states}
\label{sec:separable}

Our understanding of witnesses as exposed in
Section~\ref{sec:extremal} translates into an understanding of faces
of $\mathcal{S}_1$, the set of separable states.  A face of a compact convex set
is defined by the extremal points it contains, and in the case of
$\mathcal{S}_1$ the extremal points are the pure product states.  Alfsen and
Shultz~\cite{Alfsen2010} describe faces of $\mathcal{S}_1$ in two categories,
either simplexes defined by at most $\max(N_a,N_b)$ pure product
states, or direct convex sums of faces isomorphic to matrix algebras.
According to our understanding these two categories correspond to
quadratic and certain quartic witnesses respectively.  It has been
known for some time that the set of entanglement witnesses has
unexposed faces.  These questions have been studied by
Chru{\'s}ci{\'n}ski and collaborators~\cite{Chru3,Chru4}.  In our
understanding the unexposed faces of $\mathcal{S}_1^{\circ}$ are those containing
quartic witnesses as interior points.  The facial structure of various
sets related to quantum entanglement has been studied especially by
Kye and collaborators~\cite{Kye2013}.

In this section we describe faces of $\mathcal{S}_1$ defined by different types
of witnesses.  We state some basic results which may be well known and
which are actually quite generally valid for any pair of dual convex
cones.  The distinction between exposed and unexposed faces is
central, and it would be interesting to know whether all the faces of
the set of separable states are exposed.  We believe that this is
true, although we have no proof.  Some support for our conjecture may
be drawn from the known facts that both the set of density matrices
and the set of PPT states have only exposed faces.

To finish this section we point out how the facial structure of the
set of separable states is related to a question which is of practical
importance when we want to test whether a given state is separable.
The question is how many pure product states we need, if the state is
separable, in order to write it as a convex combination of pure
product states.

\subsection{Duality of faces}

Given any subset $\mathcal{X}\subset\mathcal{S}_1$ we define its dual in
$\mathcal{S}_1^{\circ}$ as
\begin{equation}
\mathcal{X}^{\circ}=\{\,\Omega\in\mathcal{S}_1^{\circ} \mid \textrm{Tr}\,\Omega\rho=0
\;\;\forall\rho\in\mathcal{X}\,\}\;.
\end{equation}
Similarly, given any subset $\mathcal{Y}\subset\mathcal{S}_1^{\circ}$ we define its
dual in $\mathcal{S}_1$ as
\begin{equation}
\mathcal{Y}^{\circ}=\{\,\rho\in\mathcal{S}_1 \mid \textrm{Tr}\,\Omega\rho=0
\;\;\forall\Omega\in\mathcal{Y}\,\}\;.
\end{equation}
We will assume here that $\mathcal{X}^{\circ}$ and
$\mathcal{Y}^{\circ}$ are nonempty, a minimum requirement is that
$\mathcal{X}\subset\partial\mathcal{S}_1$ and $\mathcal{Y}\subset\partial\mathcal{S}_1^{\circ}$.  Then also
$\mathcal{X}^{\circ}\subset\partial\mathcal{S}_1^{\circ}$ and
$\mathcal{Y}^{\circ}\subset\partial\mathcal{S}_1$.

There always exists one single $\rho_0\in\partial\mathcal{S}_1$ such that
\begin{equation}
\mathcal{X}^{\circ}
=\{\rho_0\}^{\circ}
=\{\,\Omega\in\mathcal{S}_1^{\circ} \mid \textrm{Tr}\,\Omega\rho_0=0\,\}\;.
\end{equation}
In fact, every $\rho\in\mathcal{X}$ gives one linear constraint
$\textrm{Tr}\,\Omega\rho=0$ as part of the definition of $\mathcal{X}^{\circ}\!$.
In finite dimension at most a finite number of linear constraints can
be independent, hence
\begin{equation}
\mathcal{X}^{\circ}=\{\,\Omega\in\mathcal{S}_1^{\circ} \mid \textrm{Tr}\,\Omega\rho_i=0
\;\;\mathrm{for}\;\;i=1,2,\ldots,k\,\}
\end{equation}
for some states $\rho_1,\rho_2,\ldots,\rho_k\in{\mathcal{X}}$.  Define for
example
\begin{equation}
\rho_0=\frac{1}{k}\sum_{i=1}^k\rho_i\;.
\end{equation}
If ${\mathcal{X}}$ is not
convex, it may happen that $\rho_0\notin{\mathcal{X}}$.  Because $\textrm{Tr}\,\Omega\rho_i\geq 0$ for $i=1,2,\ldots,k$ the
equation $\textrm{Tr}\,\Omega\rho_0=0$ implies that $\textrm{Tr}\,\Omega\rho_i=0$ for
$i=1,2,\ldots,k$ and hence $\Omega\in{\mathcal{X}}^{\circ}\!$.

Taking three times the dual we get that
${\mathcal{X}}^{\circ}=\mathcal{F}^{\circ}$ where
$\mathcal{F}={\mathcal{X}}^{\circ\circ}$ is the double dual of ${\mathcal{X}}$.
Clearly $\mathcal{F}$ contains ${\mathcal{X}}$, and by our next theorem $\mathcal{F}$ is
an exposed face of $\mathcal{S}_1$.

Reasoning in the same way we conclude that it is always possible to
find one single witness $\Omega_0$ such that
\begin{equation}
\mathcal{Y}^{\circ}
=\{\Omega_0\}^{\circ}
=\{\,\rho\in\mathcal{S}_1 \mid \textrm{Tr}\,\Omega_0\rho=0\,\}\;,
\end{equation}
and an exposed face $\mathcal{G}$ of $\mathcal{S}_1^{\circ}$ containing ${\mathcal{Y}}$, in fact
$\mathcal{G}=\mathcal{Y}^{\circ\circ}\!$, such that
$\mathcal{Y}^{\circ}={\mathcal{G}}^{\circ}\!$.
\begin{theorem}
\label{th:exposeddualsI}
\textit{$\mathcal{X}^{\circ}$ is an exposed face of $\mathcal{S}_1^{\circ}$, and
$\mathcal{Y}^{\circ}$ is an exposed face of $\mathcal{S}_1$.}
\end{theorem}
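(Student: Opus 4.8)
The plan is to exploit the reduction, already carried out in the text immediately preceding the statement, of the infinite families of constraints defining $\mathcal{X}^{\circ}$ and $\mathcal{Y}^{\circ}$ to a single linear constraint each. We are handed a single $\rho_0\in\dSu$ with $\mathcal{X}^{\circ}=\{\,\Omega\in\Sud\mid\Tr\Omega\rho_0=0\,\}$, and a single witness $\Omega_0$ with $\mathcal{Y}^{\circ}=\{\,\rho\in\Su\mid\Tr\Omega_0\rho=0\,\}$. It then suffices to prove the assertion for the set cut out by one such constraint. The two halves of the statement are mirror images under the duality $\S\leftrightarrow\Sd$, so I would write the argument in full for $\mathcal{X}^{\circ}$ and remark that the same reasoning applies verbatim to $\mathcal{Y}^{\circ}$.

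First I would verify that $\mathcal{X}^{\circ}$ is a face in the sense of Definition~\ref{def:facedef}. The functional $\ell(\Omega)=\Tr\Omega\rho_0$ is linear, and since $\rho_0\in\S$ while every $\Omega\in\Sud$ is a witness, we have $\ell(\Omega)\geq 0$ on all of $\Sud$; thus $\mathcal{X}^{\circ}$ is precisely the set where the nonnegative functional $\ell$ attains its minimum value $0$. If $\Omega\in\mathcal{X}^{\circ}$ is a proper convex combination $\Omega=(1-p)\Omega_1+p\Omega_2$ with $\Omega_1,\Omega_2\in\Sud$ and $0<p<1$, then $0=\ell(\Omega)=(1-p)\ell(\Omega_1)+p\ell(\Omega_2)$ is a convex combination of two nonnegative numbers, forcing $\ell(\Omega_1)=\ell(\Omega_2)=0$ and hence $\Omega_1,\Omega_2\in\mathcal{X}^{\circ}$. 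This is exactly the convex-combination characterization of a face.

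Next I would show the face is exposed. The minimal affine space containing $\Sud$ is the hyperplane $\{\,\Omega\mid\Tr\Omega=1\,\}$ of dimension $\n^2-1$, so I set $\mathcal{A}=\{\,\Omega\mid\Tr\Omega=1,\ \Tr\Omega\rho_0=0\,\}$, giving $\mathcal{X}^{\circ}=\mathcal{A}\cap\Sud$. Because $\rho_0\in\dSu$ is a boundary separable state it is distinct from the interior maximally mixed state $\id/\n$, hence not proportional to $\id$, so the functional $\Omega\mapsto\Tr\Omega\rho_0$ is not constant on the affine hull of $\Sud$. Its zero set $\mathcal{A}$ therefore has codimension one there, that is $\dim\mathcal{A}=\n^2-2=\dim\Sud-1$, which is the defining property of an exposed face. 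I would also record that $\mathcal{X}^{\circ}$ is a proper face: were it all of $\Sud$, every witness would annihilate $\rho_0$, and since the witnesses span $H$ this forces $\rho_0=0$, contradicting $\Tr\rho_0=1$.

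The argument is largely mechanical once the single-constraint reduction is in hand, so I expect no serious obstacle. The one point requiring genuine care is the codimension count establishing exposedness, where I must guarantee that the exposing functional is truly nonconstant on the affine hull $\{\Tr\Omega=1\}$; this is exactly where the observation that a boundary state cannot be proportional to $\id$ enters, and the analogous observation that a boundary witness $\Omega_0\in\dSud$ is not proportional to $\id$ will be needed for the mirror case. For $\mathcal{Y}^{\circ}$ the identical steps apply with the roles of $\Su$ and $\Sud$ interchanged, using $\Omega_0$ in place of $\rho_0$ together with the inequality $\Tr\Omega_0\rho\geq 0$ on $\Su$.
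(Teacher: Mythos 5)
Your proposal is correct and follows essentially the same route as the paper: the face property via linearity of $\Omega\mapsto\Tr\Omega\rho_0$ together with nonnegativity on $\Sud$, and exposedness via the hyperplane $\Tr\Lambda\rho_0=0$ cutting the unit-trace affine space in dimension $\n^2-2$. Your only addition is the explicit check that the exposing functional is nonconstant on the affine hull (since the boundary state $\rho_0$ cannot equal $\id/\n$), a point the paper's proof asserts without comment.
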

\begin{proof}
Assume that $\Omega\in\mathcal{X}^{\circ}$ is a proper convex
combination of $\Omega_1,\Omega_2\in\mathcal{S}_1^{\circ}$,
\begin{equation}
\Omega=(1-p)\,\Omega_1+p\,\Omega_2
\qquad\mathrm{with}\qquad
0<p<1\;.
\end{equation}
We have to prove that $\Omega_1,\Omega_2\in\mathcal{X}^{\circ}\!$.  The
assumption that $\Omega_1,\Omega_2\in\mathcal{S}_1^{\circ}$ means that
$\textrm{Tr}\,\Omega_1\rho\geq 0$ and $\textrm{Tr}\,\Omega_2\rho\geq 0$ for every
$\rho\in\mathcal{S}_1$.  For every $\rho\in\mathcal{X}$ we have in addition that
\begin{equation}
0=\textrm{Tr}\,\Omega\rho=(1-p)\,\textrm{Tr}\,\Omega_1\rho+p\,\textrm{Tr}\,\Omega_2\rho\;,
\end{equation}
and from this we conclude that $\textrm{Tr}\,\Omega_1\rho=\textrm{Tr}\,\Omega_2\rho=0$.
This proves that $\Omega_1,\Omega_2\in{\mathcal{X}}^{\circ}\!$, so that
${\mathcal{X}}^{\circ}$ is a face.

It is exposed because it is dual to one single $\rho_0\in\mathcal{S}_1$.  In
fact, the equation $\textrm{Tr}\,\Lambda\rho_0=0$ for $\Lambda$ defines a
hyperplane of dimension $N^2-2$ in the $N^2-1$ dimensional affine
space of Hermitian $N\times N$ matrices of unit trace.

The proof that $\mathcal{Y}^{\circ}$ is an exposed face is entirely
similar.
\end{proof}
This theorem has the following converse.
\begin{theorem}
\label{th:exposeddualsII}
\textit{An exposed face of $\mathcal{S}_1^{\circ}$ is the dual of a separable state, and an
exposed face of $\mathcal{S}_1$ is the dual of a witness.}
\end{theorem}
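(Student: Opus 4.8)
The plan is to run the argument of Theorem~\ref{th:exposeddualsI} in reverse, using the bidual relation $\S^{\circ\circ}=\S$ to recover the separable state (respectively witness) that exposes a given face. I will prove the first statement in detail; the second is obtained by interchanging the roles of $\S$ and $\Sd$.

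First I would fix an exposed face $\F$ of $\Sud$. By the definition of an exposed face, $\F=\mathcal{A}\cap\Sud$ for a hyperplane $\mathcal{A}$ of dimension $\n^2-2$ in the $(\n^2-1)$-dimensional affine space of unit-trace Hermitian matrices. Because $\F$ is a proper face and $\mathcal{A}$ is a hyperplane with $\F=\mathcal{A}\cap\Sud$, the hyperplane $\mathcal{A}$ must support $\Sud$, so there is an affine functional that is nonnegative on $\Sud$ and vanishes exactly on $\F$. Any such functional has the form $\Omega\mapsto\Tr\Omega B-c$ for some $B\in H$ and $c\in\R$; restricting to the slice $\Tr\Omega=1$ and setting $\rho_0=B-c\,\id$ rewrites it as $\Omega\mapsto\Tr\Omega\rho_0$. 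After a sign change if necessary we may assume
\begin{equation*}
\Tr\Omega\rho_0\geq 0\quad\forall\,\Omega\in\Sud,\qquad
\F=\{\,\Omega\in\Sud\mid\Tr\Omega\rho_0=0\,\}=\{\rho_0\}^{\circ}.
\end{equation*}
Since $\mathcal{A}$ is a genuine hyperplane, $\rho_0$ is not a multiple of $\id$; in particular $\rho_0\neq 0$.

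Next I would upgrade $\rho_0$ to a separable state. Every nonzero witness has positive trace, so any $\Omega\in\Sd$ can be normalized into $\Sud$; the inequality $\Tr\Omega\rho_0\geq0$ therefore holds for all $\Omega\in\Sd=\S^{\circ}$. By the bidual relation $\S^{\circ\circ}=\S$, this says precisely that $\rho_0\in\S$, i.e. $\rho_0$ lies in the separable cone. As a nonzero positive semidefinite matrix it has $\Tr\rho_0>0$, so rescaling by the positive constant $1/\Tr\rho_0$ produces a separable state in $\Su$ without altering the zero set $\{\rho_0\}^{\circ}$. Thus $\F$ is the dual of this separable state, and $\rho_0\in\dSu$ because it is orthogonal to the witnesses in the nonempty face $\F$.

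The symmetric statement is proved identically: an exposed face $\G$ of $\Su$ is cut out by a supporting hyperplane $\rho\mapsto\Tr\Omega_0\rho$, the nonnegativity $\Tr\Omega_0\rho\geq0$ for all $\rho\in\Su$ (hence all $\rho\in\S$) gives $\Omega_0\in\S^{\circ}=\Sd$ directly, and normalizing yields $\Omega_0\in\Sud$ with $\G=\{\Omega_0\}^{\circ}$. The one step requiring care is the passage from ``nonnegative on $\Sud$'' to membership in the cone $\S$: this is exactly where $\S^{\circ\circ}=\S$ (quoted as an exercise from \cite{BV} earlier in the paper) does the work, and it is the only nonelementary ingredient, the rest being bookkeeping about how an affine supporting functional is represented through the trace inner product.
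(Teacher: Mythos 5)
Your proof is correct, and its skeleton --- represent the exposing hyperplane by a trace functional, show that functional is nonnegative on the whole convex set, conclude it lies in the dual cone, then normalize --- is the same as the paper's; but the two arguments invert which steps are proved and which are cited, and the inversion is worth noting. The one step you assert without proof, namely that a hyperplane $\mathcal{A}$ with $\F=\mathcal{A}\cap\Sud$ a proper face must support $\Sud$, is the entire content of the paper's proof: under the paper's definition of exposedness (where $\mathcal{A}$ is merely required to have codimension one, support not being built in as it is in the more common definition of an exposed face), this implication needs the two-line argument that if $\Sud$ contained points $\Omega_1,\Omega_2$ with $\Tr\Omega_1\rho_0<0<\Tr\Omega_2\rho_0$, then some proper convex combination of them would lie in $\mathcal{A}\cap\Sud=\F$, whence Definition~\ref{def:facedef} would force $\Omega_1,\Omega_2\in\F\subset\mathcal{A}$, a contradiction. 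You should write those lines out; with them your argument is complete. In exchange, you make explicit two points the paper compresses: the homogenization $\rho_0=B-c\,\id$ that turns an affine functional on the unit-trace slice into a trace functional, and, more substantively, the asymmetry between the two halves of the theorem --- the face-of-$\Su$ half (which the paper details) needs only the definition of $\Sd$ as a dual cone once support is established, whereas the face-of-$\Sud$ half (which you detail, and the paper dismisses as ``proved in a similar way'') genuinely requires the bidual relation $\S^{\circ\circ}=\S$ together with the positivity of the trace of nonzero witnesses in order to pass from nonnegativity against $\Sud$ to membership in the cone $\S$. Flagging that asymmetry, and the boundary condition $\rho_0\in\dSu$ needed for the duality notation to apply, is a genuine improvement on the paper's one-line treatment of that half.
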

\begin{proof}
We prove the second half of the theorem, the first half is proved in
a similar way.

An exposed face $\mathcal{F}$ of $\mathcal{S}_1$ is the intersection of $\mathcal{S}_1$ with a
hyperplane given by an equation for $\rho\in H$ of the form
$\textrm{Tr}\,\Lambda\rho=0$ where $\Lambda\in H$ is fixed.  The maximally
mixed state $\rho_0=I/N$ is an interior point of $\mathcal{S}_1$, hence
$\textrm{Tr}\,\Lambda=N\textrm{Tr}\,\Lambda\rho_0\neq 0$ and we may impose the
normalization condition $\textrm{Tr}\,\Lambda=1$.  We have to prove that
$\Lambda\in\mathcal{S}_1^{\circ}$, which means that $\textrm{Tr}\,\Lambda\rho\geq 0$ for all
$\rho\in\mathcal{S}_1$.

Assume that there exists some $\rho_1\in\mathcal{S}_1$ with
$\textrm{Tr}\,\Lambda\rho_1<0$.  Choose any $\rho_2\in\mathcal{S}_1$ with
$\textrm{Tr}\,\Lambda\rho_2>0$, for example $\rho_2=\rho_0$.  Then
$\rho_1,\rho_2\not\in\mathcal{F}$, since $\mathcal{F}$ is defined as the set of all
$\rho\in\mathcal{S}_1$ having $\textrm{Tr}\,\Lambda\rho=0$.  But there exists a proper
convex combination $\rho=(1-p)\rho_1+p\rho_2$ with $0<p<1$ such that
$\textrm{Tr}\,\Lambda\rho=0$, and hence $\rho\in\mathcal{F}$, contradicting the
assumption that $\mathcal{F}$ is a face of $\mathcal{S}_1$.
\end{proof}

We summarize Theorems~\ref{th:exposeddualsI}
and~\ref{th:exposeddualsII} as follows.
\begin{theorem}
\label{th:exposeddualsIII}
\textit{There is a one to one correspondence between exposed faces of $\mathcal{S}_1$
and exposed faces of $\mathcal{S}_1^{\circ}$.  The faces in each pair are dual
(orthogonal) to each other.}
\end{theorem}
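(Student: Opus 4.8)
The plan is to exhibit the correspondence explicitly as the pair of polarity maps $\F\mapsto\F^{\circ}$ (from exposed faces of $\Su$ to subsets of $\Sud$) and $\G\mapsto\G^{\circ}$ (in the reverse direction), and then to show that these two maps are well defined between the two families of exposed faces and are mutually inverse. Bijectivity and the asserted orthogonality of each dual pair follow immediately once these points are in place.

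First I would check that the maps land where claimed. Given an exposed face $\F$ of $\Su$, Theorem~\ref{th:exposeddualsI} applied with $\mathcal{X}=\F$ shows that $\F^{\circ}$ is an exposed face of $\Sud$; symmetrically, $\G^{\circ}$ is an exposed face of $\Su$ for every exposed face $\G$ of $\Sud$. Thus both maps are well defined on the respective sets of exposed faces. Moreover, by the very definition of the polars, every $\rho\in\F$ and every $\Omega\in\F^{\circ}$ satisfy $\Tr\Omega\rho=0$ (and likewise for the pair $\G$, $\G^{\circ}$), which is exactly the duality/orthogonality relation asserted in the statement.

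Next I would prove that the two maps are inverse to one another, i.e.\ that $\F^{\circ\circ}=\F$ for every exposed face $\F$ of $\Su$ and $\G^{\circ\circ}=\G$ for every exposed face $\G$ of $\Sud$. The inclusion $\F\subseteq\F^{\circ\circ}$ is automatic: any $\rho\in\F$ is annihilated by every $\Omega\in\F^{\circ}$, hence lies in $\F^{\circ\circ}$. For the reverse inclusion I would invoke that $\F$ is \emph{exposed}. By the second half of Theorem~\ref{th:exposeddualsII} there is a single witness $\Lambda\in\Sud$ with $\F=\{\Lambda\}^{\circ}=\{\rho\in\Su\mid\Tr\Lambda\rho=0\}$. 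This $\Lambda$ annihilates all of $\F$, so $\Lambda\in\F^{\circ}$; consequently every $\rho\in\F^{\circ\circ}$ must in particular satisfy $\Tr\Lambda\rho=0$, which forces $\rho\in\F$. Hence $\F^{\circ\circ}\subseteq\F$, giving equality. The argument for $\G^{\circ\circ}=\G$ is identical, using the first half of Theorem~\ref{th:exposeddualsII} to produce a single separable state whose polar is $\G$.

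With $\F^{\circ\circ}=\F$ and $\G^{\circ\circ}=\G$ in hand, the two polarity maps compose to the identity in both orders, so each is a bijection and they are inverse to each other, establishing the one-to-one correspondence. The step I expect to be the crux is the reverse inclusion $\F^{\circ\circ}\subseteq\F$: this is precisely where exposedness is indispensable, since it is what supplies the single normal $\Lambda$ of Theorem~\ref{th:exposeddualsII}. For a non-exposed face the cutting affine space need not be a single hyperplane with a witness as its normal, so no such $\Lambda$ is available and the bipolar identity can genuinely fail. This is consistent with the paper's later remark that $\Sud$ does possess unexposed faces, which is exactly why the correspondence is stated only for the exposed ones.
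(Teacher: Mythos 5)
Your proof is correct and follows essentially the same route as the paper: the paper states this theorem merely as a summary of Theorems~\ref{th:exposeddualsI} and~\ref{th:exposeddualsII}, and your argument assembles exactly those two ingredients into the correspondence. The only detail you make explicit beyond what the paper writes out is the bipolar identity $\F^{\circ\circ}=\F$ for exposed faces, obtained from the single exposing witness (or separable state) supplied by Theorem~\ref{th:exposeddualsII}, which is precisely the step the paper leaves implicit.
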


Since the extremal points of $\mathcal{S}_1$ are the pure product states, by
Theorem~\ref{thm:extrpface} the extremal points of a face $\mathcal{F}\in\mathcal{S}_1$
are the pure product states contained in $\mathcal{F}$.  It follows that the
extremal points of the face $\mathcal{Y}^{\circ}$ are the common zeros
of all the witnesses in $\mathcal{Y}$.  We have actually proved the
following result.
\begin{theorem}
\label{th:exposeddualsIV}
\textit{A set of product vectors in $\mathcal{H}$ is the complete set of zeros of
some witness if and only if they are the extremal points of an
exposed face of $\mathcal{S}_1$ when regarded as states in $\mathcal{S}_1$.}
\end{theorem}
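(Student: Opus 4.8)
The plan is to derive both implications directly from the duality machinery of Theorems~\ref{th:exposeddualsI}--\ref{th:exposeddualsIII} together with Theorem~\ref{thm:extrpface}, with essentially no new computation. The one thing I would pin down first is a precise dictionary between zeros and extremal points of dual faces: a product vector $\psi=\phi\ot\chi$ is a zero of a witness $\Omega$ exactly when $f_{\Omega}(\phi,\chi)=\Tr\Omega(\psi\psi^{\dagger})=0$, that is, when the pure product state $\psi\psi^{\dagger}$ lies in the dual set $\{\Omega\}^{\circ}\subset\Su$. Because a pure product state is an extremal point of $\Su$, Theorem~\ref{thm:extrpface} upgrades this to the statement that $\psi\psi^{\dagger}$ is then an extremal point of $\{\Omega\}^{\circ}$. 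Thus the zeros of $\Omega$ and the extremal points of $\{\Omega\}^{\circ}$ are in bijection.

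For the ``only if'' direction I would start from a witness $\Omega$ whose complete zero set is the given set $S$ of product vectors, and take $\mathcal{Y}=\{\Omega\}$. Then $\mathcal{Y}^{\circ}=\{\rho\in\Su\mid\Tr\Omega\rho=0\}$ is an exposed face of $\Su$ by Theorem~\ref{th:exposeddualsI}. By the dictionary its extremal points are precisely the pure product states $\psi\psi^{\dagger}$ with $\Tr\Omega(\psi\psi^{\dagger})=0$, \ie exactly the zeros of $\Omega$, which is $S$. Hence $S$ is the set of extremal points of an exposed face of $\Su$.

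For the ``if'' direction I would start from an exposed face $\F$ of $\Su$ whose extremal points form the given set $S$. By Theorem~\ref{th:exposeddualsII} there is a witness $\Omega_0$ with $\F=\{\Omega_0\}^{\circ}$. Applying the dictionary in reverse, the pure product states lying in $\F$---which by Theorem~\ref{thm:extrpface} are exactly the extremal points of $\F$, namely $S$---coincide with the zeros of $\Omega_0$. This exhibits $S$ as the complete zero set of $\Omega_0$.

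The only delicate point, and the one I would state most carefully, is the word \emph{complete}: I must check both that no product vector outside $S$ is a zero and that every element of $S$ really is a zero. Both are immediate once the equivalence $\Tr\Omega(\psi\psi^{\dagger})=0\Longleftrightarrow\psi\psi^{\dagger}\in\{\Omega\}^{\circ}$ is recognized as an exact biconditional, which turns the zero set of a witness into the full extremal-point set of its dual face, with nothing lost or added. So the real content is carried entirely by the earlier duality theorems, and the present statement is essentially a reformulation of them in the language of zeros.
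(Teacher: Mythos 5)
Your proposal is correct and follows essentially the same route as the paper, which likewise derives the statement from Theorems~\ref{th:exposeddualsI} and~\ref{th:exposeddualsII} together with Theorem~\ref{thm:extrpface}, using exactly your dictionary that the extremal points of a face of $\Su$ are the pure product states it contains, i.e.\ the zeros of the dual witness. The paper in fact states the theorem as an immediate consequence of this observation, so your write-up just makes explicit the same argument.
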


The remaining question, to which we do not know the answer, is whether
$\mathcal{S}_1$ has unexposed faces.  We state the following theorem, which
actually holds not only for $\mathcal{S}_1$ but for any compact convex set.
\begin{theorem}
\label{th:exposedS1}
\textit{Every proper face of $\mathcal{S}_1$ is  contained in an exposed face of $\mathcal{S}_1$.}
\end{theorem}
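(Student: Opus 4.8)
The plan is to produce, for a given proper face $\F$, a single supporting hyperplane of $\Su$ whose associated exposed face already contains $\F$. The argument uses only convex geometry and supporting hyperplanes, so it applies verbatim to any compact convex set, as the theorem asserts; at the end I indicate the shorter route available inside the present $\Su/\Sd$ duality.

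First I would pick a point $x$ in the relative interior of $\F$, which exists since $\F$ is a nonempty convex subset of the finite-dimensional space $H$. Because $\F$ is a proper face it is contained in the boundary $\dSu$ (a proper face is part of the boundary), so $x\in\dSu$ and $x$ is not interior to $\Su$. The supporting hyperplane theorem then yields a nonzero $\Lambda\in H$ and a constant $c$ with $\Tr\Lambda\rho\geq c$ for every $\rho\in\Su$, with equality at $x$, and with $\Tr\Lambda\rho>c$ for at least one $\rho\in\Su$. Intersecting the hyperplane $\mathcal{A}=\{\rho\mid \Tr\rho=1,\ \Tr\Lambda\rho=c\}$, which has dimension $\n^2-2$ inside the affine hull of $\Su$, with $\Su$ gives a proper exposed face $E=\mathcal{A}\cap\Su$ (exposed by the very definition of an exposed face), and $x\in E$.

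It remains to prove $\F\subseteq E$, and this is the one step that requires care. Here the crucial point is that $x$ was chosen in the \emph{relative interior} of $\F$. Given any $y\in\F$ with $y\neq x$, prolonging the segment from $y$ through $x$ a little inside $\F$ produces a point $z=x+\epsilon(x-y)\in\F$ for some small $\epsilon>0$, whence $x=\tfrac{\epsilon}{1+\epsilon}\,y+\tfrac{1}{1+\epsilon}\,z$ is a proper convex combination of $y,z\in\Su$. Since $E$ is a face of $\Su$ containing $x$, the defining property in Definition~\ref{def:facedef} forces $y,z\in E$. As $y\in\F$ was arbitrary we conclude $\F\subseteq E$, so $E$ is an exposed face containing $\F$, which proves the theorem. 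I expect this implication $x\in E\Rightarrow\F\subseteq E$ to be the only genuine obstacle: it fails for a boundary point of $\F$ that is not relatively interior, so the choice of $x$ is essential.

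Within the framework of this article there is an even shorter proof. Taking the double dual gives $\F\subseteq\F^{\circ\circ}=(\F^{\circ})^{\circ}$, and by Theorem~\ref{th:exposeddualsI} the set $(\F^{\circ})^{\circ}$ is an exposed face of $\Su$. The hypothesis of that theorem is met because $\F$ is convex and lies in $\dSu$, so the averaged state $\rho_0$ representing $\F^{\circ}=\{\rho_0\}^{\circ}$ lies in $\F\subseteq\dSu$ and hence admits an orthogonal witness, making $\F^{\circ}$ nonempty.
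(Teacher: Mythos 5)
Your main argument is correct, and its skeleton is the same as the paper's: pick a point in the relative interior of $\F$, produce a supporting hyperplane of $\Su$ through that point, and then use exactly your prolongation trick to conclude that the whole face lies in the exposed face cut out by that hyperplane (the paper compresses this last step into the single sentence that a witness vanishing at an interior point of $\F$ must vanish on all of $\F$). Where you genuinely differ is in how the supporting object is obtained. You cite the supporting hyperplane theorem of general convex geometry as a black box; this makes the proof short and makes transparent the paper's own remark that the result holds for any compact convex set. The paper instead constructs the supporting functional inside its duality framework: choosing $\sigma\in\Su\setminus\F$ and $\tau=(1+t)\rho-t\sigma$, the face property gives $\tau\notin\Su$ for every $t>0$; the bipolar relation $\S^{\circ\circ}=\S$ then makes each compact set $\mathcal{Y}(t)=\{\,\Lambda\in\Sud\mid\Tr\Lambda\tau\leq 0\,\}$ nonempty, and the nested intersection $\bigcap_{t>0}\mathcal{Y}(t)$ yields one witness $\Omega$ with $\Tr\Omega\rho=0$. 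What that buys is self-containedness (it is in effect a proof of the supporting hyperplane theorem in this setting) and the fact that the exposed face is exhibited directly in the form $\Omega^{\circ}$, the dual face of a witness, which is the form in which exposed faces are used throughout Section~\ref{sec:separable}; what your route buys is brevity and generality. Two small points to tighten in your write-up: before invoking Definition~\ref{def:facedef} for $E=\mathcal{A}\cap\Su$ you should verify that $E$ is in fact a face of $\Su$ --- this is the standard one-line computation appearing verbatim in the proof of Theorem~\ref{th:exposeddualsI}, so it is a gloss rather than a gap; and your ``even shorter'' double-dual route is legitimate only if one accepts as background the assertion of Section~\ref{sec:biquformopt} that every state on $\dSu$ is orthogonal to some witness, since that assertion is itself established by the same duality-plus-compactness reasoning as the paper's proof, so as an independent argument the short route would be circular.
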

\begin{proof}
Given a face $\mathcal{F}$ of $\mathcal{S}_1$, we have to prove that there exists a
witness $\Omega$ such that $\mathcal{F}$ is contained in the dual face
$\Omega^{\circ}$.

Choose $\rho\in\mathcal{F}$, in the interior of $\mathcal{F}$ if $\mathcal{F}$ contains more
than one point.  Choose also a separable state $\sigma\notin\mathcal{F}$,
and define $\tau=(1+t)\rho-t\sigma$.  Since $\mathcal{F}$ is a face, and
$\tau\in\mathcal{S}_1$ for $-1\leq t\leq 0$, we know that $\tau\not\in\mathcal{S}_1$ for
every $t>0$, and the set
\begin{equation}
\mathcal{Y}(t)=\{\,\Lambda\in\mathcal{S}_1^{\circ}g\mid\textrm{Tr}\,\Lambda\tau\leq 0\,\}
\end{equation}
is nonempty for every $t>0$.  Every $\mathcal{Y}(t)$ is a compact set,
and $\mathcal{Y}(t_1)\subset\mathcal{Y}(t_2)$ for $0<t_1<t_2$.  Hence,
the intersection of all sets $\mathcal{Y}(t)$ for $t>0$ is nonempty
and contains at least one witness $\Omega$ such that
$\textrm{Tr}\,\Omega\tau\leq 0$ for every $t>0$.  Clearly we must then have
$\textrm{Tr}\,\Omega\rho=0$.  Since $\textrm{Tr}\,\Omega\rho=0$ for one point $\rho$ in
the interior of the face $\mathcal{F}$, it follows that $\textrm{Tr}\,\Omega\rho=0$ for
every $\rho\in\mathcal{F}$.
\end{proof}

Unfortunately, this does not amount to a proof that $\mathcal{F}$ is exposed,
because it might happen that $\textrm{Tr}\,\Omega\sigma=0$ even though
$\sigma\notin\mathcal{F}$.

\subsection{Faces of $\mathcal{D}_1$ and $\mathcal{P}_1$}

When discussing faces of $\mathcal{S}_1$, the set of separable states, it may be
illuminating to consider the simpler examples of faces of $\mathcal{D}_1$, the
set of density matrices, and faces of $\mathcal{P}_1$, the set of PPT states.
Recall that $\mathcal{D}_1$ is selfdual, $\mathcal{D}_1^{\circ}=\mathcal{D}_1$.  The fact that all faces of
$\mathcal{D}_1$ and $\mathcal{P}_1$ are exposed may indicate that the same is true for all
faces of $\mathcal{S}_1$.

A face $\mathcal{F}$ of $\mathcal{D}_1$ is a complete set of density matrices on a
subspace $\mathcal{U}\subset\mathcal{H}$, thus there is a one to one correspondence
between faces of $\mathcal{D}_1$ and subspaces of $\mathcal{H}$.  A density matrix $\rho$
belongs to the face $\mathcal{F}$ when $\textrm{Img}\,\rho\subset\mathcal{U}$, and it is an
interior point of $\mathcal{F}$ when $\textrm{Img}\,\rho=\mathcal{U}$.

It is straightforward to show that when $\rho,\sigma\in\mathcal{D}_1$ we have
$\textrm{Tr}\,\rho\sigma=0$ if and only if $\textrm{Img}\,\rho\perp\textrm{Img}\,\sigma$.  Hence,
the dual, or opposite, face $\mathcal{F}^{\circ}$ is the set of density
matrices on the subspace $\mathcal{U}^{\perp}\!$, the orthogonal complement of
$\mathcal{U}$.  The double dual of $\mathcal{F}$ is $\mathcal{F}$ itself, $\mathcal{F}^{\circ\circ}=\mathcal{F}$,
since $(\mathcal{U}^{\perp})^{\perp}=\mathcal{U}$.

Every proper face $\mathcal{F}$ of $\mathcal{D}_1$ is exposed, since it is the dual of an
arbitrarily chosen interior point $\sigma\in\mathcal{F}^{\circ}$.

The definition $\mathcal{P}_1=\mathcal{D}_1\cap\mathcal{D}_1^P$ implies, by
Theorem~\ref{thm:convxints}, that every face $\mathcal{G}$ of $\mathcal{P}_1$ is an
intersection $\mathcal{G}=\mathcal{E}\cap\mathcal{F}^P\!$, where $\mathcal{E}$ and $\mathcal{F}$ are faces of $\mathcal{D}_1$.
This is the geometrical meaning of the procedure for finding extremal
PPT states introduced in~\cite{pptII}.

It follows that every face of $\mathcal{P}_1$ is exposed.  In fact, the face
$\mathcal{G}$ is dual to a decomposable witness $\rho+\sigma^P$ where
$\rho,\sigma\in\mathcal{D}$, such that $\mathcal{E}$ is dual to $\rho$ and $\mathcal{F}$ is dual
to $\sigma$.

\subsection{Unexposed faces of $\mathcal{S}_1^{\circ}$}

The Choi--Lam witness $\Omega_C$, as given in
equation~\eqref{eq:choiwitgen} with $\theta=0$, $a=c=1$, $b=0$, is an
example of an extremal witness, a zero dimensional face of $\mathcal{S}_1^{\circ}$,
which is not exposed.  The three isolated product vectors defined in
equation~\eqref{eq:choiisolzeros} and the continuum of product vectors
defined in equation~\eqref{eq:choicontzeros}, interpreted as states in
$\mathcal{S}_1$, are the extremal points of the dual face
$\mathcal{F}_C=\{\Omega_C\}^{\circ}\subset\mathcal{S}_1$.  One may check both numerically
and analytically that the face $\mathcal{F}_C$ has dimension 21.  The separable
states in the interior of $\mathcal{F}_C$ have rank seven, since they are
constructed from product vectors in a seven dimensional subspace.

We find numerically that the constraints $\mathbf{U}_0$ and $\mathbf{U}_1$
associated with the zeros of $\Omega_C$ define a four dimensional face
of $\mathcal{S}_1^{\circ}$.  This is then the dual face $\mathcal{F}_C^{\circ}$, the double dual
of $\Omega_C$.  The last four constraints needed to prove that
$\Omega_C$ is extremal come from the constraints $\mathbf{U}_2$ and $\mathbf{U}_3$
expressing the quartic nature of the zeros.

This is one example showing the mechanism for how faces of $\mathcal{S}_1^{\circ}$ may
avoid being exposed.  In general, a witness $\Omega$ having one or
more isolated quartic zeros will be an interior point of an unexposed
face.  This unexposed face is then a face of a larger exposed face
consisting of witnesses having the same zeros as $\Omega$, but such
that all the isolated zeros are quadratic.

\subsection{Simplex faces of $\mathcal{S}_1$}

Theorem~\ref{th:exposeddualsIV} expresses the relation between zeros
of entanglement witnesses and exposed faces of $\mathcal{S}_1$, the set of
separable states.  We do not know whether $\mathcal{S}_1$ has unexposed faces.
We consider first faces that are simplexes, having only a finite
number of extremal points.

\subsubsection{Faces of $\mathcal{S}_1$ dual to quadratic extremal witnesses}

The exposed faces of $\mathcal{S}_1$ defined by extremal witnesses with only
quadratic zeros are simplexes, and as such are particularly simple to
study.  Given the product vectors $\psi_i=\phi_i\otimes\chi_i$ for
$i=1,2,\ldots,k$, with $k=n+1$, as the zeros of a quadratic extremal
witness $\Omega$.  The corresponding product states
$\rho_i=\psi_i\psi_i^{\dagger}$ are the vertices of an $n$-simplex, which
is the exposed face $\Omega^{\circ}$ dual to $\Omega$.

Let $\rho$ be an interior point of this face,
\begin{equation}
\rho=\sum_{i=1}^kp_i\,\rho_i\,,
\qquad \sum_{i=0}^kp_i=1\,,
\qquad p_i>0\,.
\end{equation}
According to Theorem~\ref{thm:zerosspan2}, both $\rho$, constructed
from the zeros $\phi_i\otimes\chi_i$, and its partial transpose $\rho^P\!$,
constructed in the same way from the partially conjugated zeros
$\phi_i\otimes\chi_i^{\ast}$, have full rank $N=N_{a}N_{b}$.  Thus,
$\rho$ lies not only in the interior of $\mathcal{D}_1$, the set of density
matrices, but also in the interior of $\mathcal{P}_1$, the set of PPT states.

This geometric fact has the following interesting consequence.  Let
$\lambda$ denote the smallest one among the eigenvalues of $\rho$ and
$\rho^P\!$, we know that $0<\lambda<1/N$.  Then we may construct
entangled PPT states from the separable state $\rho$ and the maximally
mixed state $\rho_0$ as
\begin{equation}
\sigma=p\rho+(1-p)\rho_0\,,
\qquad 1<p\leq \frac{1}{1-N\lambda}\,.
\end{equation}

The number of zeros of generic quadratic extremal witnesses presented
in Table~\ref{tbl:constraints} has consequences also for the geometry
of $\mathcal{S}_1$.  In dimensions $2\times 4$ and $3\times 3$ such a witness
will have $N$ zeros and define a face of $\mathcal{S}_1$ in the interior of
$\mathcal{P}_1\subset\mathcal{D}_1$.  The boundary of this face will consist of other
faces of $\mathcal{S}_1$ defined by less than $N$ zeros, accordingly these
faces are located inside faces of $\mathcal{D}_1$ on the boundary of $\mathcal{D}_1$.  In
higher dimensions such a witness has more than $N$ zeros, and defines
a face of $\mathcal{S}_1$ also in the interior of $\mathcal{D}_1$, but such that its
boundary contains faces still in the interior of $\mathcal{D}_1$.  In this way
the geometry of $\mathcal{S}_1$ in relation to $\mathcal{D}_1$ becomes more and more
complicated as the dimensions increase.

\subsubsection{Some numerical results}

In an attempt to learn more about the geometry of simplex faces we
have studied numerically in dimension $3\times 3$ the faces of $\mathcal{S}_1$
dual to generic quadratic extremal witnesses.  They are 8-simplexes,
each defined by nine linearly independent pure product states that are
the zeros of the witness.  Our sample consisted of 171 extremal
witnesses found in random numerical searches.

\subsubsection*{Volumes}

Define the edge length factors $f_{ij}=d_{ij}^2=\|\rho_i-\rho_j\|^2$
in the Hilbert--Schmidt norm, and the symmetric $(n+2)\times(n+2)$
Cayley--Menger matrix
\begin{equation}
D=
\begin{pmatrix}
	0 & 1 & 1 & \cdots & 1 \\ 
	1 & 0 & f_{12} & \cdots & f_{1k} \\
	1 & f_{21} & 0 & \cdots & f_{2k} \\
	\vdots & \vdots & \vdots & \ddots & \cdots \\
	1 & f_{k1} & f_{k2} & \cdots & 0 \\
\end{pmatrix}\,.
\end{equation}
The volume of this $n$-simplex is then given by
\begin{equation}
V=\frac{{\sqrt{|\det(D)|}}}{2^n\,n!}\,.
\end{equation}
The volume of the regular $n$-simplex with edge length $s$ is
\begin{equation}
V_{\mathrm{reg}}=\frac{s^n}{n!}\sqrt{\frac{n+1}{2^n}}\,.
\end{equation}
The maximal distance between two pure states is $\sqrt{2}$, when the
states are orthogonal.  For $n=8$ and $s=\sqrt{2}$ we have 
$V_{\mathrm{reg}}=3/8!\approx 7.440\times 10^{-5}$.

Due to the highly irregular shapes we find volumes of the faces
varying over five orders of magnitude.  Since we regard density
matrices related by $\textrm{SL}\otimes\textrm{SL}$ transformations as
equivalent, a natural question is how regular these simplexes can be
made by such transformations, in other words, what is the maximum
volume $V^*$ we may obtain.  Figure~\ref{fig:volum} shows our data
for the ratio $V^*/V_{\rm reg}$.  There is some variation left in
this ratio, indicating that the 8-simplexes are genuinely irregular.

\begin{figure}
\centering
\includegraphics[width=0.8\textwidth]{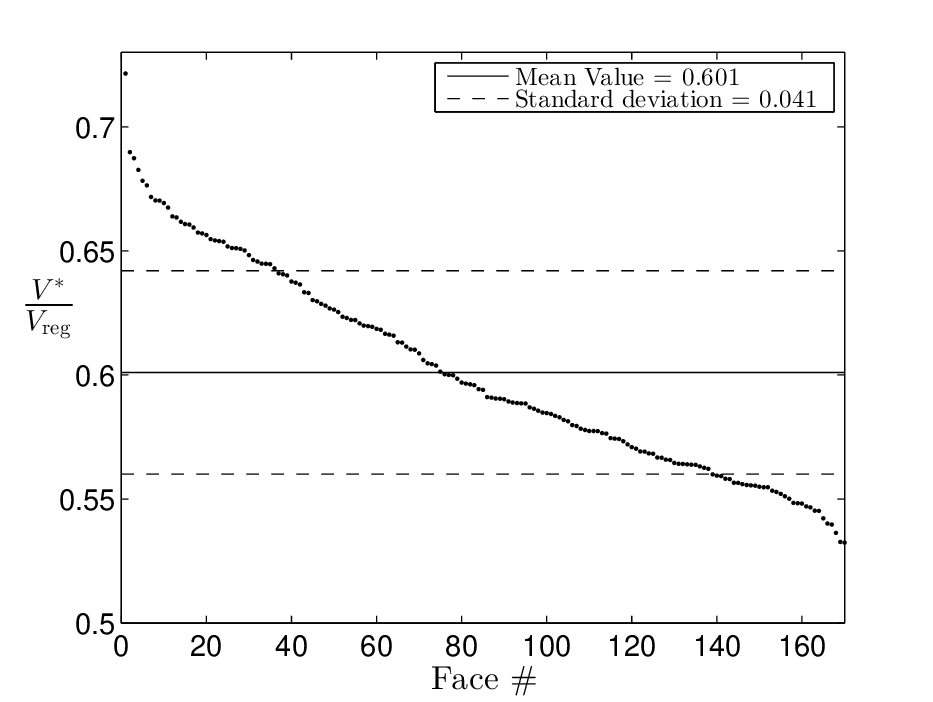}
\caption{Ratio of volumes $V^*$ and $V_{\mathrm{reg}}$ of 8-simplex
faces of $\mathcal{S}_1$ in $3\times 3$. $V^*$ is the simplex volume
maximized under $\textrm{SL}\otimes\textrm{SL}$ transformations, and
$V_{\mathrm{reg}}$ is the volume of the regular simplex of orthogonal
pure product states.}
\label{fig:volum}
\end{figure}

\subsubsection*{Positions relative to the maximally mixed state}

The second geometric property we have studied is the distance from the
maximally mixed state to the center of each simplex face, defined as
the average of the vertices,
\begin{equation}
\rho_c=\frac{1}{k}\sum_{i=1}^k\rho_i\,.
\end{equation}
We compare this distance to the radius of the maximal ball of
separable states centered around the maximally mixed
state~\cite{Gur2002}
\begin{equation}
R_m=\frac{1}{\sqrt{N(N-1)}}\,.
\end{equation}
For $N=3\times3$ we have $R_m=1/\sqrt{72}\approx 0.1179$.  The
distance $d_c=\|\rho_c-\rho_0\|$ can be minimized by
$\textrm{SL}\otimes\textrm{SL}$ transformations on $\rho_c$, resulting in a
unique minimal distance $d_c^*$ for each equivalence class of faces.
Figure~\ref{fig:avstand} shows our data for the ratio $d_c^*/R_m$.
We see that $d_c^*$ does not saturate the lower bound $R_m$, and
this is another indication of the intrinsic irregularity of the
8-simplexes.

\begin{figure}
\centering
\includegraphics[width=0.8\textwidth]{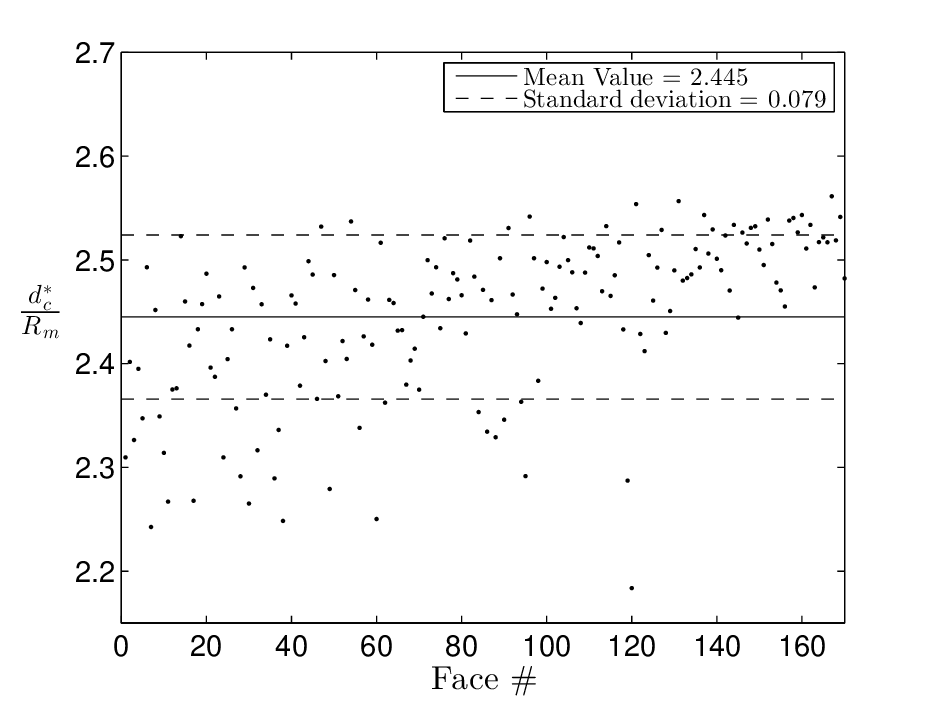}
\caption{Ratio of distances $d_c^*$ and $R_m$ for 8-simplex faces
of $\mathcal{S}_1$ in dimension $3\times 3$. $d_c^*$ is the distance from
the center of the face minimized under $\textrm{SL}\otimes\textrm{SL}$
transformations, and $R_m$ is the radius of the maximal ball of
separable states.  The points are plotted in the same order as in
Figure~\ref{fig:volum}.}
\label{fig:avstand}
\end{figure}

The third property studied is the orientation of each simplex relative
to the maximally mixed state $\rho_0$.  On a given face there exists a
unique state $\rho_{\mathrm{min}}$ closest to $\rho_0$, the minimum
distance is $d_{\mathrm{min}}=\|\rho_{\mathrm{min}}-\rho_0\|$.  In dimension
$N=N_{a}N_{b}$ the distance from any pure state to the maximally
mixed state $\rho_0$ is $\sqrt{(N-1)/N}$, and in our case this gives
an upper limit
\begin{equation}
d_{\mathrm{min}}<\sqrt{\frac{8}{9}}\approx 0.9428\,.
\end{equation}

With our sample of 171 faces it happens in only four cases, or 2\,\%,
that $\rho_{\mathrm{min}}$ lies in the interior of the face and has the
full rank nine.  In the remaining 98\,\% of the cases, $\rho_{\mathrm{min}}$
lies on the boundary of the face and has lower rank.

Figure~\ref{fig:minavstandrank} shows the rank and distance for the
state $\rho_{\rm min}$ in each face in our sample.  We see a tendency
that the minimum distance $d_{\rm min}$ is smaller when the rank of
$\rho_{\rm min}$ is higher.  This confirms our expectation that the
most regular simplex faces are positioned most symmetrically relative
to the maximally mixed state, and also come closest to this state.

We find numerically that if we generate an 8-simplex by generating a
random set of nine pure product states, then $\rho_{\rm min}$ will lie in the
interior of the simplex and have full rank in about $90\,\%$ of the
cases.  Also we observe no ranks smaller than six.  Thus,
Figure~\ref{fig:minavstandrank} proves that a simplex face generated
by a random search for an extremal witness looks very different from a
randomly generated simplex.  A random set of product vectors that
define an 8-simplex will in general not be the zeros of an
entanglement witness, and the simplex will not be a face of $\mathcal{S}_1$.

\begin{figure}
\centering
\includegraphics[width=0.8\textwidth]{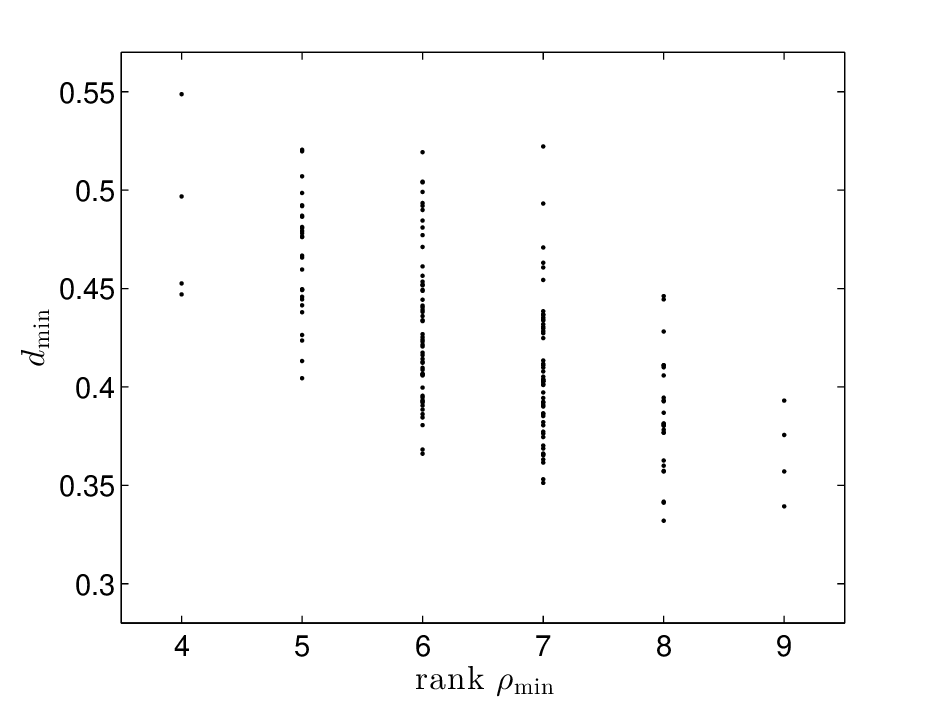}
\caption{The rank of $\rho_{\rm min}$ and its distance $d_{\rm min}$
from the maximally mixed state for 8-simplex faces in dimension
$3\times 3$.  The faces are derived from our set of quadratic
extremal entanglement witnesses found in random searches using
Algorithm~1.}
\label{fig:minavstandrank}
\end{figure}

\subsection{Other types of faces of $\mathcal{S}_1$}

To every entanglement witness there corresponds an exposed face of
$\mathcal{S}_1$, and the great variety of entanglement witnesses implies a
similar variety of faces of $\mathcal{S}_1$.  The class of simplex faces, having
only a finite number of extremal points, is rather special, other
faces may have only continuous sets of extremal points, or both
discrete and continuous subsets of extremal points.  Some special
examples in dimension $3\times 3$ may serve as illustrations.

The Choi--Lam witness is a good example.  It has three discrete zeros
and one continuous set of zeros, and these zeros in their role as pure
product states are the extremal points of a face of $\mathcal{S}_1$.

An extremal decomposable witness, \textit{i.e.} a pure state or the partial
transpose of a pure state, defines a face of $\mathcal{S}_1$ with one single
continuous set of extremal points.  To be more specific, consider a
pure state $\Omega=\psi\psi^{\dagger}\!$, then the zeros of $\Omega$ are
the product vectors orthogonal to $\psi$.  We consider two typical
cases.

One example is the product vector $\psi=e_1\otimes e_1$.  Then the
zeros are the product vectors
\begin{equation}
\label{eq:prodst3x3}
\phi\otimes\chi=
(a_1e_1+a_2e_2+a_3e_3)\otimes
(b_1e_1+b_2e_2+b_3e_3)
\end{equation}
with either $a_1=0$ or $b_1=0$.  These product vectors produce pure
product states belonging to a full matrix algebra in dimension
$2\times 3$ if $a_1=0$, and in dimension $3\times 2$ if $b_1=0$.  They
are the extremal points of a face of the type called
in~\cite{Alfsen2010} a convex combination of matrix algebras.

Another example is the Bell type entangled pure state
\begin{equation}
\psi=\frac{1}{\sqrt{3}}\,
(e_1\otimes e_1+e_2\otimes e_2+e_3\otimes e_3)\,.
\end{equation}
The zeros in this case are the product vectors as in
equation~(\ref{eq:prodst3x3}) with
\begin{equation}
a_1b_1+a_2b_2+a_3b_3=0\,.
\end{equation}
For every vector $\phi=a_1e_1+a_2e_2+a_3e_3$ there is a two
dimensional subspace of vectors $\chi=b_1e_1+b_2e_2+b_3e_3$ satisfying
this orthogonality condition.  And vice versa, for every vector $\chi$
there is a two dimensional subspace of vectors $\phi$.  These pure
product states are the extremal points of a face which is neither a
simplex nor a convex combination of matrix algebras.

\subsection{Minimal convex decomposition}

We know that any point in a compact convex set may be written as a
convex combination of extremal points, but if we want to do the
decomposition in practice an important question is how many extremal
points we need.  We would like to know a minimal number of extremal
points that is sufficient for all points.  The facial structure of the
set is then relevant.

The theorem of Carath{\'e}odory gives a sufficient number of extremal
points as $n+1$ where $n$ is the dimension of the set~\cite{Car}.
However, the set $\mathcal{D}_1$ of normalized density matrices is an example
where one can do much better.  The dimension of the set is $n=N^2-1$,
where $N$ is the dimension of the Hilbert space, but the spectral
representation of a density matrix in terms of its eigenvalues and
eigenvectors is a decomposition using only $N$ extremal points.  In
order to see how this number $N$ is related to the facial structure of
$\mathcal{D}_1$, consider the following proof of Carath{\'e}odory's theorem.

Any point $x$ in a compact convex set ${\mathcal{C}}$ is either extremal or
an interior point of a unique face $\mathcal{F}_1$, which might be the whole
set ${\cal C}$.  If $x$ is not extremal it may be written as a convex
combination
\begin{equation}
x=(1-p_1)x_1+p_1y_1\;,
\end{equation}
where $x_1$ is an arbitrary extremal point of $\mathcal{F}_1$ and $y_1$ is
another boundary point of $\mathcal{F}_1$.  If $y_1$ is extremal we define
$x_2=y_1$, otherwise $y_1$ is an interior point of a proper face
$\mathcal{F}_2$ of $\mathcal{F}_1$, and we write
\begin{equation}
y_1=(1-p_2)x_2+p_2y_2\;,
\end{equation}
where $x_2$ is an extremal point of $\mathcal{F}_2$ and $y_2$ is a boundary
point of $\mathcal{F}_2$.  Continuing this process we obtain a decomposition of
$x$ as a convex combination of extremal points
$x_1,x_2,\ldots,x_k\in{\cal C}$, and a sequence
$\mathcal{F}_1\supset\mathcal{F}_2\supset\ldots\supset\mathcal{F}_k$ of faces of decreasing
dimensions
\begin{equation}
n\geq n_1>n_2>\ldots>n_k=0\;.
\end{equation}
The length of the sequence is $k$, and the obvious inequality
$k\leq n+1$ is Carath{\'e}odory's theorem.

For the set $\mathcal{D}_1$ of normalized density matrices the longest possible
sequence of face dimensions has length $N$, it is
\begin{equation}
N^2-1>(N-1)^2-1>\ldots>(N-j)^2-1>\ldots>8>3>0\;.
\end{equation}
In general, we decompose an arbitrary density matrix $\rho$ of rank
$r$ as a convex combination of $r$ pure states that may be
eigenvectors of $\rho$, but need not be.

We may apply this procedure to the decomposition of an arbitrary
separable state as a convex combination of pure product states.  As we
see, the number of pure product states we have to use depends very
much on the facial structure of the set of separable states.  It might
happen, for example, that the face $\mathcal{F}_2$ is the dual of a quadratic
witness, and then it contains only a finite number of pure product
states, much smaller than the dimension $N^2-1$ of the set $\mathcal{S}_1$ of
normalized separable states.  Recall that the pure product states in
such a face are the zeros of the witness, and the number of zeros is
largest when the witness is extremal.  In the generic case the number
of zeros of a quadratic extremal witness is $n_c$, as given by
equation~(\ref{eq:nzquadrextrw}) and tabulated in
Table~\ref{tbl:constraints}.

It seems likely therefore that also in the decomposition of separable
states one could do much better than the $N^2$ pure product states
guaranteed by Carath{\'e}odory's theorem.  We consider this an
interesting open problem for further study.


\section{Optimal and extremal witnesses}
\label{sec:optimal}

The notion of optimal entanglement witnesses was developed by
Lewenstein \textit{et al.}~\cite{Lew,Lew2001}.  In this section we comment
briefly on the relation between optimal and extremal witnesses.  In
systems of dimension $2\times N_b$ and $3\times 3$ the two concepts
are, at least generically, closely related.  This has significant
geometric consequences.

Recall the definition that a witness $\Omega$ is extremal if and only
if there does not exist two other witnesses $\Omega_0$ and $\Omega_1$
such that
\begin{equation}
\label{eq:prprcnvcmb}
\Omega=(1-p)\Omega_0+p \Omega_1\,, \qquad 0<p<1\,.
\end{equation}

Lewenstein \textit{et al.} introduce two different concepts of optimality and
give two necessary and sufficient conditions for optimality, similar
to the condition for extremality~\cite{Lew}.  By their Theorem~1, a
witness $\Omega$ is optimal if and only if it is not a convex
combination of this form where $\Omega_1$ is a positive semidefinite
matrix.  By their Theorem~1(b), $\Omega$ is nondecomposable optimal if
and only if it is not a convex combination of this form where
$\Omega_1$ is a decomposable witness.  Obviously, extremal witnesses
are optimal by both optimality criteria.

Ha and Kye give examples to show that an optimal witness which is
nondecomposable need not be nondecomposable optimal according to the
definition of Lewenstein {\em et al.}, hence they propose to replace
the ambiguous term ``nondecomposable optimal witness'' by ``optimal
PPTES witness'', where PPTES refers to entangled states with positive
partial transpose~\cite{Ha2012b}.  Here we will use the original
terminology.

Note that the criterion for optimality is not invariant under partial
transposition, hence a witness $\Omega$ may be optimal while its
partial transpose $\Omega^P$ is not optimal, and vice versa.  The
criterion for nondecomposable optimality, on the other hand, is
invariant under partial transposition, so that $\Omega$ is
nondecomposable optimal if and only if $\Omega^P$ is nondecomposable
optimal.

Apparently the condition of being extremal is stricter than those of
being optimal or nondecomposable optimal.  Accordingly there should,
in general, exist plenty of nonextremal witnesses that are either
optimal or nondecomposable optimal.  We now investigate this question
further in the spirit of Theorem~\ref{thm:zerosspan2}.  The following
sufficient optimality conditions were proved by Lewenstein \textit{et al.}~\cite{Lew}
\begin{theorem}
\label{thm:optimalif}
\textit{A witness $\Omega$ is optimal if its zeros span the Hilbert space.}

\textit{$\Omega$ is nondecomposable optimal if its zeros span the Hilbert
space, and at the same time its partially conjugated zeros span the
Hilbert space.}
\end{theorem}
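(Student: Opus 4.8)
The plan is to prove each assertion by contraposition, feeding the two optimality criteria recalled just above into Theorem~\ref{thm:witzerosHessian}, which identifies the zeros of a convex combination of witnesses with the common zeros of the two summands. The one elementary fact I will lean on throughout is that a positive semidefinite matrix $M$ with $v^{\dagger}Mv=0$ satisfies $Mv=0$, so that a zero of a positive semidefinite matrix always lies in its kernel.

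For the first statement I would assume $\Omega$ is \emph{not} optimal. By the criterion of Lewenstein \etal this gives $\Omega=(1-p)\Omega_0+p\Omega_1$ with $0<p<1$, where $\Omega_0$ is a witness and $\Omega_1$ is positive semidefinite, hence a normalized density matrix and in particular a witness since $\Du\subset\Sud$, with $\Omega_1\neq 0$. As this is a convex combination of two witnesses, Theorem~\ref{thm:witzerosHessian} shows that every zero $(\phi_0,\chi_0)$ of $\Omega$ is a zero of $\Omega_1$, so $(\phi_0\otimes\chi_0)^{\dagger}\Omega_1(\phi_0\otimes\chi_0)=0$ and therefore $\Omega_1(\phi_0\otimes\chi_0)=0$ by positivity. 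If the zeros of $\Omega$ span $\H$, then the product vectors $\phi_0\otimes\chi_0$ span $\H$, forcing $\Ker\Omega_1=\H$ and $\Omega_1=0$, a contradiction. Hence $\Omega$ is optimal.

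For the second statement I would likewise assume $\Omega$ is \emph{not} nondecomposable optimal, so $\Omega=(1-p)\Omega_0+p\Omega_1$ as before but with $\Omega_1=\rho+\sigma\pt$ a decomposable witness, $\rho,\sigma\in\D$, $\Omega_1\neq 0$. Again every zero $(\phi_0,\chi_0)$ of $\Omega$ is a zero of $\Omega_1$, and the analysis of zeros of decomposable witnesses preceding Theorem~\ref{thm:zerosnospan2} then yields $\rho(\phi_0\otimes\chi_0)=0$ and $\sigma(\phi_0\otimes\chi_0^{*})=0$. If the zeros span $\H$, then $\Ker\rho=\H$ and $\rho=0$; if in addition the partially conjugated zeros $\phi_0\otimes\chi_0^{*}$ span $\H$, then $\Ker\sigma=\H$ and $\sigma=0$. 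Together these force $\Omega_1=0$, the same contradiction, so $\Omega$ is nondecomposable optimal.

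I do not expect a serious obstacle: the result is essentially a corollary of Theorem~\ref{thm:witzerosHessian} once the optimality criteria are phrased in the convex-combination form of~\eqref{eq:prprcnvcmb}. The only step needing care is to confirm that the matrix peeled off, $\Omega_1$, is itself a genuine witness, so that Theorem~\ref{thm:witzerosHessian} applies to the combination; this is immediate from $\Du\subset\Pud\subset\Sud$. It is worth emphasizing that these are merely sufficient conditions, and the converses fail: a pure state viewed as a quartic extremal witness is optimal (being extremal) yet its zeros span only $\psi^{\perp}$, as noted in the discussion following Theorem~\ref{thm:zerosspan2}.
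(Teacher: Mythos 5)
Your proposal is correct and follows essentially the same route as the paper: write the hypothetical convex combination~\eqref{eq:prprcnvcmb}, use Theorem~\ref{thm:witzerosHessian} to transfer the zeros of $\Omega$ to $\Omega_1$, then kill $\Omega_1$ via the kernel argument for a positive semidefinite matrix (resp.\ via $\rho(\phi_0\otimes\chi_0)=0$, $\sigma(\phi_0\otimes\chi_0^*)=0$ for a decomposable $\Omega_1$) when the zeros, resp.\ the zeros and their partial conjugates, span $\H$. The paper's proof is just a terser statement of exactly this argument; your added checks (that $\Omega_1$ is itself a witness so Theorem~\ref{thm:witzerosHessian} applies, and the remark that the converse fails for quartic witnesses) are consistent with the paper's discussion.
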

\begin{proof}
Equation~(\ref{eq:prprcnvcmb}) implies that the zeros and Hessian
zeros of $\Omega$ are also zeros and Hessian zeros of $\Omega_0$ and
$\Omega_1$.  But if $\Omega_1$ is positive semidefinite and its
zeros span the Hilbert space, then $\Omega_1=0$.  Similarly, if
$\Omega_1$ is decomposable and its zeros and partially conjugated
zeros both span the Hilbert space, then $\Omega_1=0$.
\end{proof}

For a quadratic witness these conditions are not only sufficient but
also necessary.

\begin{theorem}
\label{thm:quadraticoptimalII}
\textit{Let $\Omega$ be a quadratic witness.}

\textit{Then $\Omega$ is optimal if and only if its zeros span the Hilbert
space.}

\textit{$\Omega$ is nondecomposable optimal if and only if its zeros span
the Hilbert space and its partially conjugated zeros also span the
Hilbert space.}

\textit{Hence, by Theorem~\ref{thm:zerosnospan2}, if $\Omega$ is
nondecomposable optimal it is nondecomposable.}
\end{theorem}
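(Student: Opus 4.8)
The plan is to establish the two ``only if'' directions, since the ``if'' directions are exactly Theorem~\ref{thm:optimalif}, and the final nondecomposability assertion will then be formal. I would argue both ``only if'' statements by contraposition, and in each case the mechanism is identical: produce a positive semidefinite (respectively decomposable) witness that shares all the zeros of $\Omega$, and use it as a perturbation direction to exhibit $\Omega$ as a proper convex combination of the form~\eqref{eq:prprcnvcmb}.

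For optimality, suppose the zeros of the quadratic witness $\Omega$ do not span $\H$, and pick a unit vector $\psi$ orthogonal to every zero. Set $P=\psi\psi^{\dagger}$. Since $f_P(\phi,\chi)=|\psi^{\dagger}(\phi\ot\chi)|^2$ vanishes at each zero $(\phi_0,\chi_0)$ of $\Omega$, and $P$ is a witness so its gradient vanishes there as well, $P$ satisfies all the zeroth- and first-order constraints coming from the zeros of $\Omega$, that is $\UU_{01}P=0$. This is the step where quadraticity enters: because $\Omega$ is quadratic, $\UU_{\Omega}=\UU_{01}$, so $\UU_{\Omega}P=0$. Following the construction in Subsection~\ref{sec:characterization}, $\Gamma=P-\Omega$ is traceless, lies in $\Ker\UU_{\Omega}$, and is nonzero since $P\neq\Omega$ (the rank-one $P$ has only quartic zeros by Theorem~\ref{thm:r1zeros}, whereas $\Omega$ does not). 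Theorem~\ref{thm:witperturb} then furnishes some $t_1<0$ with $\Omega_0=\Omega+t_1\Gamma$ a witness, and rewriting $\Omega_0=(1-t_1)\Omega+t_1P$ gives $\Omega=(1-p)\Omega_0+pP$ with $p=-t_1/(1-t_1)\in(0,1)$. As $P$ is positive semidefinite, $\Omega$ is not optimal.

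For nondecomposable optimality I would run the same argument in two cases. If the zeros fail to span $\H$, the matrix $P$ above is a pure state, hence a decomposable witness, so the very same splitting shows $\Omega$ is not nondecomposable optimal. If instead the partially conjugated zeros fail to span $\H$, choose $\eta$ orthogonal to all of them and set $Q=(\eta\eta^{\dagger})\pt$, a decomposable witness whose form $f_Q(\phi,\chi)=|\eta^{\dagger}(\phi\ot\chi^*)|^2$ vanishes at each zero of $\Omega$; thus again $\UU_{\Omega}Q=\UU_{01}Q=0$ and the identical perturbation step writes $\Omega$ as a proper convex combination involving the decomposable $Q$. Either way $\Omega$ fails to be nondecomposable optimal, which is the desired contrapositive. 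The concluding clause is then immediate: a nondecomposable optimal $\Omega$ has both its zeros and its partially conjugated zeros spanning $\H$, so by Theorem~\ref{thm:zerosnospan2} any putative decomposition $\Omega=\rho+\sigma\pt$ with $\rho,\sigma\in\D$ would force $\rho=0$ and $\sigma=0$, \ie $\Omega=0$, which is impossible; hence $\Omega$ is nondecomposable.

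The step to watch is precisely the appeal to quadraticity, where I equate $\UU_{\Omega}$ with $\UU_{01}$. The whole argument hinges on the fact that for a quadratic witness the constraint system is determined by the zeros alone, so that any witness sharing those zeros automatically lies in $\Ker\UU_{\Omega}$. For a quartic witness this collapses: $P$ and $Q$ need not reproduce the Hessian zeros of $\Omega$, so the extra constraints $\UU_{23}$ are generally violated ($\UU_{23}P\neq0$), the perturbation direction leaves $\Ker\UU_{\Omega}$, and the splitting is unavailable. I would therefore invoke quadraticity explicitly at that equality, and separately verify $\Gamma\neq0$ using Theorem~\ref{thm:r1zeros}.
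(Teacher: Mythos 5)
Your proof is correct and takes essentially the same approach as the paper: contraposition via the pure state $\psi\psi^{\dagger}$ (resp.\ the partially transposed pure state $(\eta\eta^{\dagger})\pt$) built from a vector orthogonal to the zeros (resp.\ partially conjugated zeros), with quadraticity invoked at exactly the point the paper invokes it. The only cosmetic difference is that you obtain the prolongation of the line segment through $\Ker\UU_{\Omega}$ and Theorem~\ref{thm:witperturb}, whereas the paper appeals to Theorem~\ref{thm:witzerosHessianII} directly---equivalent machinery expressed in the constraint language.
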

\begin{proof}
We have to prove the ``only if'' part, and we reason like in the
proofs of Theorem~\ref{thm:zerosspan2} and
Theorem~\ref{thm:extremecondI}.  If the zeros do not span the
Hilbert space, there exists a vector $\psi$ orthogonal to all zeros,
and we define $\Omega_1=\psi\psi^{\dagger}$.  Then the zeros of
$\Omega_1$ include the zeros of $\Omega$, and we know that
$\Omega_1\neq\Omega$ because $\Omega$ is quadratic and $\Omega_1$ is
quartic.  The line segment from $\Omega_1$ to $\Omega$ consists of
witnesses having exactly the same zeros and Hessian zeros as
$\Omega$ ($\Omega$ has no Hessian zeros since it is quadratic).  By
Theorem~\ref{thm:witzerosHessianII} this line segment can be
prolonged within $\mathcal{S}_1^{\circ}$ so that it gets $\Omega$ as an interior
point.  Hence $\Omega$ is neither optimal nor nondecomposable
optimal.

If the partially conjugated zeros do not span the Hilbert space,
then there exists a vector $\eta$ orthogonal to all the partially
conjugated zeros.  Now we define $\Omega_1=(\eta\eta^{\dagger})^P\!$,
and use it in the same way as a counterexample to prove that
$\Omega$ is not nondecomposable optimal.
\end{proof}

It is known that the equivalence between optimality and the spanning
property holds also in the case of decomposable witnesses in
$2\times N_b$ dimensions.  Augusiak, Tura and Lewenstein proved the
following result~\cite{Aug2011A}

\begin{theorem}
\label{thm:optimaldecomp2xN}
\textit{Let $\Omega$ be a decomposable witness in $2\times N_b$ dimensions,
then the following conditions are equivalent.}
\begin{enumerate}[i)]
{\setlength\itemsep{0.0\baselineskip}
{\setlength\itemindent{25pt}
\item \textit{$\Omega$ is optimal;}}
{\setlength\itemindent{25pt}
\item \textit{The zeros of $\Omega$ span the Hilbert space;}}
{\setlength\itemindent{25pt}
\item \textit{$\Omega=\sigma^P\!$, where $\sigma>0$ and $\mathrm{Img}\,\sigma$
contains no product vectors.}}
}
\end{enumerate}
\end{theorem}
A subspace containing no product vectors is often called a completely
entangled subspace.
Obviously,
 $\Omega$ is not extremal
unless $\sigma$ is
a pure state.  It is also known
that for dimensions $3\times 3$ and higher, there exist optimal
decomposable witnesses without the spanning property~\cite{Aug2011B}.

Lewenstein \textit{et al.} also give an optimality test for quartic witnesses
without the spanning property.  Since we do not have a complete
understanding of constraints from quartic zeros, we leave the relation
between nondecomposable optimality and extremality of quartic
witnesses as an open problem.

In dimensions $3\times 3$ and $2\times N_b$, a witness with doubly
spanning zeros will generically be extremal, by
Theorem~\ref{thm:2xn3x3}, and hence both optimal and nondecomposable
optimal.  From Table~\ref{tbl:constraints} we see that in higher
dimensions the number of zeros of a quadratic extremal witness must be
larger than the Hilbert space dimension $N=N_aN_b$.  Hence
nonextremal witnesses with doubly spanning zeros must be very common,
so that nondecomposable optimality is a significantly weaker property
than extremality.  In particular, when we search for generic quadratic
extremal witnesses and reach the stage where the witnesses have $N$ or
more zeros, then every face of $\mathcal{S}_1^{\circ}$ we encounter will consist
entirely of nondecomposable optimal witnesses, most of which are not
extremal.

The observations made above give a picture of increasingly complicated
geometry as dimensions increase.  In dimensions $2\times 2$ and
$2\times 3$ every witness is decomposable~\cite{HorMPR96,Sto1963}.  In
$2\times N_b$ and $3\times 3$ a generic quadratic witness is either
nondecomposable optimal and extremal, or it is neither.  In higher
dimensions there exist an abundance of nondecomposable optimal
nonextremal witnesses.


\section{The SPA separability conjecture}
\label{sec:spa}

Horodecki and Ekert introduced the concept of a ``structural physical
approximation'' (SPA) of a non-physical map that can be used as a
mathematical operation for detecting
entanglement~\cite{PHor2003,HorEk}.  The SPA is a physical map and may
be implemented in an experimental setup.  An example is partial
transposition, where the transposition map is applied to one
subsystem.  The transposition map is unphysical, since it is not
completely positive, hence the need for an SPA.

The Jamio{\l}kowski isomorphism relates positive maps and entanglement
witnesses, and the SPA of a witness $\Omega$ is usually defined as the
nearest positive matrix that is a convex combination of $\Omega$ and
the maximally mixed state $I/N$.  The witness itself corresponds to
a map which is positive but not completely positive, whereas its SPA
corresponds to a completely positive map.  Define
\begin{equation}
\Sigma(p)=(1-p)\Omega+\frac{p}{N}\,I\,,
\end{equation}
and let

\begin{itemize}
{\setlength\itemsep{0.0\baselineskip}
  \item[--] $p_0$ be the smallest value of $p$ such that $\Sigma(p)$ is
  a separable density matrix;
  \item[--] $p_1$ be the smallest value of $p$ such that $\Sigma(p)$ is
  a density matrix;
  \item[--] $p_2$ be the smallest value of $p$ such that
  $(\Sigma(p))^P$ is a density matrix.
}
\end{itemize}
If $\lambda_1\leq 0$ and $\lambda_2\leq 0$ are the lowest eigenvalues
of $\Omega$ and $\Omega^P$ respectively, then
\begin{equation}
p_1=-\frac{N\lambda_1}{1-N\lambda_1}\,,\qquad
p_2=-\frac{N\lambda_2}{1-N\lambda_2}\,.
\end{equation}
The SPA of $\Omega$ is defined in~\cite{Kor} as
$\widetilde{\Omega}=\Sigma(p_1)$.

Clearly $p_0\geq p_1$ and $p_0\geq p_2$, but we may have either
$p_1<p_2$, $p_1=p_2$, or $p_1>p_2$.  The SPA of $\Omega$ is a PPT
state if and only if $p_1\geq p_2$.  It follows directly from the
definitions that the SPA of the witness $\Omega^P$ is
$(\Sigma(p_2))^P\!$, and this is a PPT state if and only if
$p_2\geq p_1$.

The so called SPA separability conjecture was put forward by Korbicz
{\em et al.}, and states that the SPA of an optimal entanglement
witness is always a separable density matrix~\cite{Kor}.  Thus for
optimal witnesses we should have $p_0=p_1\geq p_2$.  One reason for
the interest in this conjecture is that separability simplifies the
physical implementation of the corresponding map.

The conjecture was supported by several explicit
constructions~\cite{Aug,Chru2,Chru,Chru6}.  In particular, it is true
for extremal decomposable witnesses, that is, for witnesses of the
form $\Omega=(\psi\psi^{\dagger})^P$ where $\psi$ is an entangled
vector~\cite{Kor,Aug}.  Note that in this case
$\Omega^P=\psi\psi^{\dagger}$ is not an entanglement witness, according
to the standard definition that a witness must have at least one
negative eigenvalue.  According to our definition a witness is not
required to have negative eigenvalues, thus we define
$\psi\psi^{\dagger}$ to be an extremal decomposable witness.

Counterexamples to the SPA conjecture are known.  It was shown
recently by Chru{\'s}ci{\'n}ski and Sarbicki that the SPA of an
optimal but non-extremal decomposable witness, of the form
$\Omega=\sigma^P$ with $\sigma\geq 0$ of rank three, may be
entangled, although it is automatically a PPT state~\cite{Chru2014}.
This counterexample leaves open the possibility that the conjecture
may hold for all such witnesses in dimensions $2\times N_b$ with
$N_b\geq 4$.

Counterexamples of the form of the generalized Choi witness,
Equation~(\ref{eq:choiwitgen}), have been given by Ha and Kye and by
St{\o}rmer~\cite{Ha2012,Stormer2013}.  With the parameter values
considered by St{\o}rmer we find that the witnesses are not extremal,
and their SPAs are entangled because they are not PPT states.  More
interesting are the parameter values considered by Ha and Kye, because
they give extremal witnesses with SPAs that are entangled PPT states.
We describe these witnesses in some more detail below.

The SPA conjecture applies directly to extremal entanglement
witnesses, since they are optimal.  If $\Omega$ is an extremal
witness, then so is the partial transpose $\Omega^P\!$.  However, if
the SPA of $\Omega$ is a PPT state, then the SPA of $\Omega^P$ is
usually not, and vice versa.  This means that the original SPA
conjecture can not be true both for $\Omega$ and $\Omega^P\!$, simply
because separable states are PPT states.  Ha and Kye gave essentially
this argument in~\cite{Ha2012}.  What saves the SPA conjecture in the
case of an extremal decomposable witness
$\Omega=(\psi\psi^{\dagger})^P$ and its partial transpose
$\Omega^P=\psi\psi^{\dagger}$ is that the latter is excluded by the
standard definition.

An obvious modification of the conjecture would be to redefine the SPA
as the nearest PPT matrix instead of the nearest positive matrix.  In
the case of a pair of extremal witnesses $\Omega$ and $\Omega^P$ the
modified conjecture holds either for both or for none.  Since the
original SPA conjecture holds for extremal decomposable witnesses
$\Omega=(\psi\psi^{\dagger})^P\!$, the modified SPA conjecture holds for
$\Omega^P=\psi\psi^{\dagger}\!$, which is an extremal decomposable witness
according to our definition.

The counterexamples cited above disprove also this modified
separability conjecture in its full generality, including some
extremal nondecomposable witnesses.  However, we want to point out
that the known counterexamples are rather nongeneric.  Therefore we
have made a small numerical investigation of how the modified
conjecture works for generic quadratic extremal witnesses.  We are
unable to draw a definitive conclusion because of the limited
numerical precision of our separability test.

\subsection*{The counterexample of Ha and Kye}

The generalized Choi witness, Equation~(\ref{eq:choiwitgen}), contains
four parameters $a,b,c,\theta$.  It is required here that
$2/3\leq a<1$.  Each value of $a$ gives twelve different witnesses,
depending on two arbitrary signs and one arbitrary angle
$\theta_1=0,\pm 2\pi/3$.  We define
\begin{align}
\nonumber
a_1 = \sqrt{\frac{1-(1-a)^2}{2}}\;,\quad & \nonumber \quad a_2=\sqrt{\frac{1-9(1-a)^2}{2}}\;,\\
\nonumber
b = \frac{a_1\pm a_2}{2}\;,\quad & \nonumber\quad c=a_1-b\;,\\
\theta_0 = \arccos\!\left(\frac{a+a_1}{2}\right)\;,\quad & \quad\theta=\pm\theta_0+\theta_1\;.
\end{align}
The unnormalized SPA of this witness $\Omega$ is $\rho=\Omega+a_1I$,
of rank eight, and the SPA of $\Omega^P$ is $\rho^P\!$, of rank six.
Thus $\rho$ is a PPT state.  The kernel of $\rho$ is spanned by the
vector $\psi^{(0)}\!$, and the kernel of $\rho^P$ is spanned by
$\psi^{(1)}\!,\psi^{(2)}\!,\psi^{(3)}$ where
\begin{equation}
(\psi^{(0)},\psi^{(1)},\psi^{(2)},\psi^{(3)})
=\begin{pmatrix}
  1 & 0 & 0 & 0\\
  0 &{\rm e}^{{\rm i}\theta} & 0 & 0\\
  0 & 0 &c+a_1& 0\\
  0 &c+a_1& 0 & 0\\
  {\rm e}^{-{\rm i}\theta_1}& 0 & 0 & 0\\
  0 & 0 & 0 &{\rm e}^{{\rm i}\theta}\\
  0 & 0 &{\rm e}^{{\rm i}\theta}& 0\\
  0 & 0 & 0 &c+a_1\\
  {\rm e}^{{\rm i}\theta_1}& 0 & 0 & 0
\end{pmatrix}.
\end{equation}

A separable state $\sigma$ satisfies the range criterion that
$\textrm{Img}\,\sigma$ is spanned by product vectors such that the partially
conjugated product vectors span $\textrm{Img}\,\sigma^P\!$.  The present PPT
state $\rho$ is entangled, in fact it is a so called edge state,
violating the range criterion maximally.  It is straightforward to
verify that there exists no product vector $\varphi\otimes\chi$
orthogonal to $\psi^{(0)}$ such that $\varphi\otimes\chi^{\ast}$ is
orthogonal to all of $\psi^{(1)}\!,\psi^{(2)}\!,\psi^{(3)}$.

\subsection*{Generic quadratic extremal witnesses}

When we find an extremal witness $\Omega$ in a random numerical search
by Algorithm~1, presumably the probabilities of
finding $\Omega$ and $\Omega^P$ are equal.  Hence we expect to have
about fifty per cent probability of finding an extremal witness with
an SPA which is a PPT state.  This is also what we see in practice.

For each extremal witness $\Omega$ found in dimensions $3\times 3$ and
$3\times 4$ we have computed the SPA of $\Omega$ and of $\Omega^P\!$.
As expected, in each case one SPA is a PPT state and the other one is
not.  The number of PPT states in our numerical searches originating from 
the quadratic witnesses $\Omega$ or
$\Omega^P\!$, is listed in Table~\ref{tbl:spapptnpt}.

\begin{table}
\centering
{\begin{tabular}{ccc}
\hline
     & $3\times 3$ & $3\times 4$\\
\hline\\[-0.4cm]
     $\Omega$       & 81 & 41 \\	
     $\Omega^P$    & 90 & 24 \\
\hline
\end{tabular}}
  \caption{
    Number of cases where the SPA of $\Omega$ or $\Omega^P$ is a
    PPT state, when $\Omega$ is a quadratic
    extremal witness found numerically in dimension
    $3\times 3$ or $3\times 4$.}
\label{tbl:spapptnpt}
\end{table}

Since the original separability conjecture of Korbicz \textit{et al.} fails both
in the generic case and by non-generic counterexamples, a natural
question is whether the SPA is always separable when it is a PPT
state.  As mentioned there are counterexamples
in~\cite{Ha2012,Stormer2013}.  On the other hand, these
counterexamples are rather special, and there remains a possibility
that the SPA of a generic extremal witness with only quadratic zeros
could be separable if it is a PPT state.

For this reason, in our numerical examples we have tried to check the
separability of the SPA whenever it is a PPT state.  We find that it
is always very close to being separable, in no case are we able to
conclude that it is certainly not separable.  Unfortunately, our
numerical separability test is not sufficiently precise that we may
state with any conviction the opposite conclusion that the state is
separable.

In view of the known examples and counterexamples the status of the
modified SPA separability conjecture, which defines the SPA as a PPT
state, is not completely clear.  It would be interesting to know some
simple criterion for exactly when it is true.  This may also be
regarded as one aspect of the more general question of which border
states of the set of separable states are also border states of the
set of PPT states.

On the other hand, if we regard separability as an essential property
of the SPA, it seems that the natural solution would be to define the
SPA by the parameter value $p=p_0$, where $\Sigma(p)$ first becomes
separable, rather than by the value $p=p_1$, where $\Sigma(p)$ first
becomes positive, or by $p=\max(p_1,p_2)$, where $\Sigma(p)$ first
becomes a PPT state.


\section{A real problem in a complex cloak}
\label{sec:real}

We want to outline here a different approach to the problem of
describing the convex cone $\mathcal{S}^{\circ}$ of entanglement witnesses, where we
regard it as a real rather than a complex problem.  This means that we
treat $\mathcal{S}_1^{\circ}$ as a subset of a higher dimensional compact convex set.
We are motivated by the observation that the problem is intrinsically
real, since $H$ is a real vector space, $f_\Omega$ is a real valued
function, and some of the constraints discussed in
Section~\ref{sec:constraints} are explicitly real.  In particular, the
second derivative matrix, the Hessian, defining constraints at a
quartic zero of a witness is a real and not a complex matrix.

Another motivation is the possibility of expressing the extremal
points of $\mathcal{S}_1^{\circ}$ as convex combinations of extremal points of the
larger set.  As an example we show how a pure state as a witness is a
convex combination of real witnesses that are not of the complex form.

In order to arrive at an explicitly real formulation of the problem we
proceed as follows.  Define $J:\mathbb{R}^{2N_a}\rightarrow\mathbb{C}^{N_a}$ and
$K:\mathbb{R}^{2N_b}\rightarrow\mathbb{C}^{N_b}$ as
\begin{equation}
\label{eq:realtransform}
J=(I_a,{\rm i} I_a)\,,\qquad
K=(I_b,{\rm i} I_b)\,,
\end{equation}
and define
\begin{equation}
Z=(J\otimes K)^\dagger\Omega(J\otimes K)\,.
\end{equation}
Then if $\phi=Jx$ and $\chi=Ky$ we have that
\begin{equation}
g_Z(x,y)=(x\otimes y)^T Z(x\otimes y)
=(\phi\otimes\chi)^\dagger\Omega(\phi\otimes\chi)=f_\Omega(\phi,\chi)\,. 
\end{equation}
Since $Z$ is a Hermitian matrix, its real part is symmetric and its
imaginary part antisymmetric.  The expectation value $g_Z(x,y)$ is
real because only the symmetric part of $Z$ contributes.  Furthermore,
only the part of $Z$ that is symmetric under partial transposition
contributes.  Hence $g_Z(x,y)=g_W(x,y)$ when we define
\begin{equation}
\label{eq:realwitness}
W=\frac{1}{2}\,\mathrm{Re}\,\,(Z+Z^P)\,.
\end{equation}
Since $J\otimes K=(I_a\otimes K,{\rm i} I_a\otimes K)$ is an $N\times(4N)$ matrix,
$Z$ is a $(4N)\times(4N)$ matrix,
\begin{equation}
Z=\begin{pmatrix} X & {\rm i} X \\ -{\rm i} X & X\end{pmatrix},
\end{equation}
where $X$ is a $(2N)\times(2N)$ matrix,
\begin{equation}
X=(I_a\otimes K)^{\dagger}\Omega(I_a\otimes K)\,.
\end{equation}
The partial transposition of $Z$ transposes submatrices of size
$(2N_b)\times(2N_b)$.

We may now replace the complex biquadratic form $f_\Omega$ by the real
form $g_W$, and repeat the analysis in Section~\ref{sec:constraints}
almost unchanged.  We therefore see that complex witnesses in
dimension $N=N_a\times N_b$ correspond naturally to real witnesses
in $(2N_a)\times(2N_b)=4N$.

Write the matrix elements of $\Omega$ as
\begin{equation}
(\Omega_{ik})_{jl}=\Omega_{ij;kl}\,.
\end{equation}
In this notation, $\Omega_{ik}$ is an $N_b\times N_b$ submatrix of
the $N\times N$ matrix $\Omega$.  Then we have that
\begin{equation}
X_{ik}=K^{\dagger}\Omega_{ik}K
=\begin{pmatrix}
\Omega_{ik} & {\rm i}\Omega_{ik}\\
-{\rm i}\Omega_{ik} & \Omega_{ik}
\end{pmatrix}.
\end{equation}
The complex matrix $\Omega_{ik}$ may always be decomposed as
\begin{equation}
\Omega_{ik}=A_{ik}+B_{ik}+{\rm i}\,(C_{ik}+D_{ik})\,,
\end{equation}
where $A_{ik},B_{ik},C_{ik},D_{ik}$ are real matrices, and
$A_{ik},C_{ik}$ are symmetric, $B_{ik},D_{ik}$ are antisymmetric.  The
Hermiticity of $\Omega$ means that
\begin{equation}
\Omega_{ik}=(\Omega_{ki})^{\dagger}
= A_{ki}-B_{ki}-{\rm i}\,(C_{ki}-D_{ki})\,,
\end{equation}
or equivalently,
\begin{equation}
\label{eq:ABCDikki}
A_{ki}= A_{ik}\,,\quad
B_{ki}=-B_{ik}\,,\quad
C_{ki}=-C_{ik}\,,\quad
D_{ki}= D_{ik}\,.
\end{equation}
In particular,
\begin{equation}
B_{ii}=C_{ii}=0\,.
\end{equation}

We finally arrive at the following explicit  relation between
the complex witness $\Omega$ and the real witness $W$,
\begin{equation}
\label{eq:WUV}
W=\begin{pmatrix} U & -V \\ V & U\end{pmatrix},
\end{equation}
with
\begin{eqnarray}
\label{eq:UikVik}
U_{ik} & = &
\frac{1}{2}\,\mathrm{Re}\,\,(X_{ik}+X_{ik}^T)
=\begin{pmatrix}
A_{ik} & -D_{ik}\\
D_{ik} & A_{ik}
\end{pmatrix},
\nonumber
\\
V_{ik} & = &
\frac{1}{2}\,\mathrm{Im}\,\,(X_{ik}+ X_{ik}^T)
=\begin{pmatrix}
C_{ik} & B_{ik}\\
-B_{ik} & C_{ik}
\end{pmatrix}.
\end{eqnarray}
Note that the $(2N_b)\times(2N_b)$ matrices $U_{ik}$ and $V_{ik}$ are
symmetric by construction, $U_{ik}=(U_{ik})^T$ and
$V_{ik}=(V_{ik})^T\!$.  This means that $W$ is symmetric under partial
transposition, $W^P=W$.

It also follows from the equations~(\ref{eq:UikVik})
and~(\ref{eq:ABCDikki}) that $U_{ki}=U_{ik}$ and $V_{ki}=-V_{ik}$.
This means that the $(2N)\times(2N)$ matrix $U$ is symmetric, $U^T=U$
because
\begin{equation}
(U^T)_{ik}=(U_{ki})^T=U_{ki}=U_{ik}\,,
\end{equation}
whereas $V$ is antisymmetric, $V^T=-V$ because
\begin{equation}
(V^T)_{ik}=(V_{ki})^T=V_{ki}=-V_{ik}\,.
\end{equation}
This means that $W$ is symmetric, $W^T=W$.

\subsection{A pure state as a witness}

If $\Omega$ is a pure state, $\Omega=\psi\psi^{\dagger}$ with
$\psi^{\dagger}\psi=1$, then
\begin{equation}
f_\Omega(\phi,\chi)
=(\phi\otimes\chi)^\dagger\Omega(\phi\otimes\chi)
=|z|^2=(\mathrm{Re}\,z)^2+(\mathrm{Im}\,z)^2\,, 
\end{equation}
with
\begin{equation}
z=(\phi\otimes\chi)^{\dagger}\psi
=(x\otimes y)^T(J\otimes K)^{\dagger}\psi\,.
\end{equation}
We have seen that $\Omega$ is extremal in $\mathcal{S}_1^{\circ}$, but we see now that
it is not extremal in the larger set of real witnesses that are not of
the complex form given in equation~(\ref{eq:WUV}).

Write $\psi=a+{\rm i} b$ with $a,b$ real, and
\begin{equation}
p=(J\otimes K)^{\dagger}\psi
=\begin{pmatrix}
    q\\ -{\rm i} q
\end{pmatrix},
\end{equation}
with
\begin{equation}
q=(I_a\otimes K)^{\dagger}\psi\,.
\end{equation}
Here $q$ is a $(2N)\times 1$ matrix, and if we write $q_i$ with
$i=1,2,\ldots,N_a$ for the submatrices of size $(2N_b)\times 1$, we
have that
\begin{equation}
q_i=K^{\dagger}\psi_i=
\begin{pmatrix}
\psi_i\\ -{\rm i}\psi_i
\end{pmatrix}=
\begin{pmatrix}
a_i+{\rm i} b_i\\ b_i-{\rm i} a_i
\end{pmatrix}
=r_i+{\rm i} s_i\,,
\end{equation}
with
\begin{equation}
r_i=
\begin{pmatrix}
a_i\\ b_i
\end{pmatrix},\qquad
s_i=
\begin{pmatrix}
b_i\\ -a_i
\end{pmatrix}.
\end{equation}
We now have that
\begin{equation}
(\mathrm{Re}\,z)^2=(x\otimes y)^T Z_r(x\otimes y)\,,\qquad
(\mathrm{Im}\,z)^2=(x\otimes y)^T Z_i(x\otimes y)\,,
\end{equation}
with
\begin{eqnarray}
Z_r & = &
(\mathrm{Re}\,p)(\mathrm{Re}\,p)^T=
\begin{pmatrix}
rr^T & rs^T\\
sr^T & ss^T
\end{pmatrix},
\nonumber
\\
Z_i & = &
(\mathrm{Im}\,p)(\mathrm{Im}\,p)^T=
\begin{pmatrix}
ss^T & -sr^T\\
-rs^T & rr^T
\end{pmatrix}.
\end{eqnarray}
These two matrices are symmetric, $Z_r^T=Z_r$ and $Z_i^T=Z_i$, but we
should also make them symmetric under partial transposition.
Therefore we replace them by the matrices
\begin{eqnarray}
\nonumber
W_r & = &
\frac{1}{2}\,(Z_r+Z_r^P)=
\begin{pmatrix}
    A & B\\
    C & D
\end{pmatrix},
\\
W_i & = &
\frac{1}{2}\,(Z_i+Z_i^P)=
\begin{pmatrix}
    D & -C\\
    -B & A
\end{pmatrix},
\end{eqnarray}
where
\begin{eqnarray}
\nonumber
A_{ik} & = &
\frac{1}{2}\,(r_ir_k^T+r_kr_i^T)=
\frac{1}{2}
\begin{pmatrix}
  a_ia_k^T+a_ka_i^T & a_ib_k^T+a_kb_i^T\\
  b_ia_k^T+b_ka_i^T & b_ib_k^T+b_kb_i^T
\end{pmatrix},
\\
\nonumber
B_{ik} & = &
\frac{1}{2}\,(r_is_k^T+s_kr_i^T)=
\frac{1}{2}
\begin{pmatrix}
    a_ib_k^T+b_ka_i^T & -a_ia_k^T+b_kb_i^T\\
    b_ib_k^T-a_ka_i^T & -b_ia_k^T-a_kb_i^T
\end{pmatrix},
\\
\nonumber
C_{ik} & = &
\frac{1}{2}\,(s_ir_k^T+r_ks_i^T)=
\frac{1}{2}
\begin{pmatrix}
    b_ia_k^T+a_kb_i^T & b_ib_k^T-a_ka_i^T\\
    -a_ia_k^T+b_kb_i^T & -a_ib_k^T-b_ka_i^T
\end{pmatrix},
\\
D_{ik} & = &
\frac{1}{2}\,(s_is_k^T+s_ks_i^T)=
\frac{1}{2}
\begin{pmatrix}
    b_ib_k^T+b_kb_i^T & -b_ia_k^T-b_ka_i^T\\
    -a_ib_k^T-a_kb_i^T & a_ia_k^T+a_ka_i^T
\end{pmatrix}.
\end{eqnarray}


\section{Summary and outlook}
\label{sec:conclusion}

In this article we have approached the problem of distinguishing
entangled quantum states from separable states through the related
problem of classifying and understanding entanglement witnesses.  We
have studied in particular the extremal entanglement witnesses from
which all other witnesses may be constructed.  We give necessary and
sufficient conditions for a witness to be extremal, in terms of the
zeros and Hessian zeros of the witness, and in terms of linear
constraints on the witness imposed by the existence of zeros.  The
extremality conditions lead to systematic methods for constructing
numerical examples of extremal witnesses.

We find that the distinction between quadratic and quartic zeros is
all important when we classify extremal witnesses.  Nearly all
extremal witnesses found in random searches are quadratic, having only
quadratic zeros.  There is a minimum number of zeros a quadratic
witness must have in order to be extremal, and most of the quadratic
extremal witnesses have this minimum number of zeros.  To our
knowledge, extremal witnesses of this type have never been observed
before, even though they are by far the most common.

The number of zeros of a quadratic extremal witness increases faster
than the Hilbert space dimension.  Since a witness is optimal if its
zeros span the Hilbert space, this implies that in all but the lowest
dimensions witnesses may be optimal and yet be very far from extremal.

The facial structures of the convex set of separable states and the
convex set of witnesses are closely related.  The zeros of a witness,
whether quadratic or quartic, always define an exposed face of the set
of separable states.  The other way around, an exposed face of the set
of separable states is defined by a set of pure product states that
are the zeros of some witness, in fact, they are the common zeros of
all the witnesses in a face of the set of witnesses.  The existence of
witnesses with quartic zeros is the root of the existence of
nonexposed faces of the set of witnesses.  It is unknown to us whether
the set of separable states has unexposed faces.

One possible way to test whether or not a given state is separable is
to try to decompose it as a convex combination of pure product states.
In such a test it is of great practical importance to know the number
of pure product states that is needed in the worst case.  The general
theorem of Carath{\'e}odory provides a limit of $n+1$ where $n$ is the
dimension of the set, but this is a rather large number, and an
interesting unsolved problem is to improve this limit, if possible.
We point out that this is a question closely related to the facial
structure of the set of separable states, which in turn is related to
the set of entanglement witnesses.

Another unsolved problem we may mention here again is the following.
We have found a procedure for constructing an extremal witness from
its zeros and Hessian zeros, but an arbitrarily chosen set of zeros
and Hessian zeros does not in general define an extremal entanglement
witness.  How do we construct an extremal witness by choosing a set of
zeros and Hessian zeros we want it to have?  Clearly this is a much
more complicated problem than the trivial problem of constructing a
polynomial in one variable from its zeros.

In conclusion, by studying the extremal entanglement witnesses we have
made some small progress towards understanding the convex sets of
witnesses and separable mixed states.  But this has also made the
complexity of the problem even more clear than it was before.  The
main complication is the fact that zeros of extremal witnesses may be
quartic, since this opens up an almost unlimited range of variability
among the quartic witnesses.  A full understanding of this variability
seems a very distant goal.  Nevertheless we believe that it is useful
to pursue the study of extremal witnesses in order to learn more about
the geometry of the set of separable states, and it is rather clear
that a combination of analytical and numerical methods will be needed
also in future work.


\section*{Acknowledgments}

We acknowledge gratefully research grants from The Norwegian University of
Science and Technology (Leif Ove Hansen) and from The Norwegian Research
Council (Per \O yvind Sollid). 
  

\appendix


\section{Explicit expressions for positivity constraints}
\label{sec:explicit}

We show here explicit expressions for the constraints developed in
Section~\ref{sec:polynomeq}.

Given a zero $(\phi_0,\chi_0)$, let $j_m$ for $m=1,\ldots,2(N_a-1)$
be the column vectors of the matrix $J_0$.  We construct them for
example from an orthonormal basis
$\phi_0,\phi_1,\ldots,\phi_{N_a-1}$ of $\mathcal{H}_a$, taking
$j_{2l-1}=\phi_l$, $j_{2l}={\rm i}\phi_l$ for $l=1,\ldots,N_a-1$.
Similarly, let $k_n$ for $n=1,\ldots,2(N_b-1)$ be the column vectors
of $K_0$.

Then the equations $\mathbf{T}_0\Omega=0$, $\mathbf{T}_1\Omega=0$ may be written as

\begin{equation}
\label{eq:bndconds1explicit}
 \textrm{Tr}\,(E^0\Omega)=0\;,\quad\textrm{Tr}\,(E^1_m\Omega)=0\;,\quad\textrm{Tr}\,(E^2_n\Omega)=0\;,
\end{equation}
with
\begin{align}
\nonumber
E^0 & = \phi_0\phi_0^\dagger\otimes\chi_0\chi_0^\dagger\,,\\
\nonumber
E^1_m & = (j_m\phi_0^\dagger+\phi_0 j_m^\dagger)\otimes\chi_0\chi_0^\dagger\,,\\
E^2_n & = \phi_0\phi_0^\dagger\otimes(k_n\chi_0^\dagger+\chi_0 k_n^\dagger)\,.
\end{align}

With $z_i\in\textrm{Ker}\,G_{\Omega}$, $z_i^T=(x_i^T,y_i^T)$ and
$\xi_i=J_0x_i$, $\zeta_i=K_0y_i$, for $i=1,\ldots,K$, the constraints
$(\mathbf{T}_2\Omega)_i=0$ given in equation~\eqref{eq:T2icnstr} may be
written explicitly as follows,

\begin{equation}
\label{eq:bndconds2explicit}
\textrm{Tr}\,(F^1_{im}\Omega)=0\;,\quad\textrm{Tr}\,(F^2_{in}\Omega)=0\;,
\end{equation}
with
\begin{align}
\nonumber
F^1_{im} & =(\xi_i j_m^\dagger+j_m\xi_i^\dagger)\otimes\chi_0\chi_0^\dagger
+(\phi_0 j_m^\dagger+j_m\phi_0^\dagger)\otimes(\zeta_i\chi_0^\dagger
+\chi_0\zeta_i^\dagger)\,,\\
F^2_{in} & =\phi_0\phi_0^\dagger\otimes(\zeta_i k_n^\dagger+k_n\zeta_i^\dagger)
+(\xi_i\phi_0^\dagger+\phi_0\xi_i^\dagger)\otimes(\chi_0 k_n^\dagger
+k_n\chi_0^\dagger)\,.
\end{align}

System $\mathbf{T}_3$ can be written as follows. For any linear combinations
\begin{equation}
\xi=\sum_{i=1}^Ka_i\xi_i\,,\qquad
\zeta=\sum_{i=1}^Ka_i\zeta_i\,,
\end{equation}
with real coefficients $a_i$ and $\xi_i,\zeta_i$ as above, we have
\begin{equation}
\label{eq:bndconds3explicit}
\textrm{Tr}\,G\Omega=0,\quad
G=(\xi\phi_0^\dagger+\phi_0\xi^\dagger)\otimes\zeta\zeta^\dagger
+\xi\xi^\dagger\otimes(\zeta\chi_0^\dagger+\chi_0\zeta^\dagger)\,.
\end{equation}
This is one single constraint for each set of coefficients $a_i$.  The
minimum number of different linear combinations that we have to use is
given by the binomial coefficient
\begin{equation}
\binom{K+2}{3}=\frac{K(K+1)(K+2)}{6}\,.
\end{equation}


\section{Numerical implementation of Algorithm~1}
\label{sec:numimp}

In this appendix we first describe possibilities for solving the optimization
problem \eqref{eq:kernelproblem} and then, assuming that this problem
can be solved, we describe how to localize the boundary of a face of
quadratic witnesses in $\mathcal{S}_1^{\circ}$.

\subsection*{Solving the optimization problem}

We first describe some possibilities for solving the
optimization problem \eqref{eq:kernelproblem}.  This question has
received little attention in the physics community, and in the
optimization literature focus shifted towards a rigorous rather than
towards an ``applied'' approach.  Two comments are in place.
In~\cite{pptIII,Dahl2007} simple algorithms were presented
which in our experience work fine in most situations.  However, when
applied to Algorithm~1 we repeatedly observe poor
convergence, presumably due to degeneracies of eigenvalues.  In~\cite{Eis2004} it is
stated that a straight forward parameterization of
problem \eqref{eq:kernelproblem} can not be solved in an efficient
manner.  Instead they present a formalism which again, in our view, is
unnecessarily complicated for low dimensions.  The conceptually
simplest approach is to perform repeated local minimization from
random starting points, thus heuristically (though sufficiently safe)
obtaining all global optimal points. Below we describe different
possible approaches for local minimization, either applying available
optimization algorithms or implementing simple ones ourselves.

In the case that only the optimal value $p^*$ and not the optimal
solution $(\phi^*,\chi^*)$ is sought for, a positive maps approach
is fruitful.  Given an operator $A\in H$ one can define the linear map
$\mathbf{L}_A$ such that
\begin{equation}
(\phi\otimes\chi)^\dagger A (\phi\otimes\chi)=\chi^\dagger \mathbf{L}_A(\phi\phi^\dagger)\chi\,.
\end{equation}
Problem \eqref{eq:kernelproblem} is then equivalent to minimizing the
smallest eigenvalue of $\mathbf{L}_A(\phi\phi^\dagger)$ as a function of
normalized $\phi$, and hence to the unconstrained minimization of
\begin{equation}
\widetilde{f}_\Omega(\phi)=\mathrm{min}{\left[\mathrm{spectrum}\,\,
\frac{\mathbf{L}_A(\phi\phi^\dagger)}{\phi^\dagger\phi}\right]}.
\end{equation}
With a real parametrization $\phi=Jx$ as in equation
\eqref{eq:realtransform} this function is readily minimized by the
Nelder--Mead downhill simplex algorithm, listed \textit{e.g.} in~\cite{NR} and
implemented \textit{e.g.} in MATLAB's ``fminsearch'' function~\cite{matlab} or
Mathematica's ``NMinimize''~\cite{mathematica}.

In the case that also the optimal point $(\phi^*,\chi^*)$ is
wanted several possibilities exist.  An approximate solution is
easily found by feeding the real problem given by
\eqref{eq:realwitness} to a SQP-algorithm~\cite{NW,Garberg2011}, \textit{e.g.}
the one provided in MATLAB's ``fmincon'' function~\cite{matlab}.
Another possibility is to apply the infinity norm in
\eqref{eq:kernelproblem} and perform box-constrained minimization in
each face of the $\infty$-ball.  Since the Hessian of the function is
readily available trust-region algorithms are suitable (also provided
in MATLAB's ``fmincon'').  If $A$ is known to be a witness, even the
box-constraints on each face of the $\infty$-ball can be removed,
since in that case the objective function is bounded below.  The
problem has then been reduced to unconstrained minimization in a
series of affine spaces, one for each face of the $\infty$-ball.  With
this latter formulation, we have had success with a quasi-Newton
approach~\cite{NW} performing exact line search, see~\cite{Hauge2011}
for details. Further, we have had promising results with a complex
conjugate gradient algorithm also performing exact line search~\cite{Garberg2011},
though further development is necessary.

\subsection*{Finding the boundary of a face}
\label{sec:findfacebnd}

Here we describe how to localize the boundary of a face of quadratic
witnesses in $\mathcal{S}_1^{\circ}$.  We assume we have some routine
for solving problem \eqref{eq:kernelproblem}.

Given a quadratic witness $\Omega$ situated in the interior of a face
defined by $k$ product vectors, and a perturbation $\Gamma$, we define
a family of functions as follows.  Consider problem
\eqref{eq:kernelproblem} with $A(t)=\Omega+t\Gamma$.  Given all local
minima, sort these in ascending order according to function value.
Then the function $\kappa_0(t)$ is the value of local minimum number
$k+1$ if this exists and 0 otherwise.  Function $\kappa_i(t)$,
$i=1,\ldots,k$ is the smallest positive eigenvalue of
$\mathrm{D}^2(f_\Omega+t f_\Gamma)$ evaluated at zero number $i$.  Now each
$\kappa_i(t)$, $i=0,\ldots,k$ is a function with a single simple root
in $t\in[0,\infty)$ which can be easily located using any standard
rootfinding technique.

Another possible approach to finding the boundary of the face is as
follows.  Rather than locating all local minima of problem
\eqref{eq:kernelproblem}, aim only at locating the global minimum
value.  This global minimum value as a function of $\Omega$ will be
called the face function.  The face function is zero on a face,
positive in the interior of $\mathcal{S}^{\circ}$ and negative outside of $\mathcal{S}^{\circ}$.
Finding some initial $\theta>\theta_c$ then allows for one-sided
extrapolation towards $\theta_c$.

If we employ the 2-norm for the constraints in
\eqref{eq:kernelproblem} the face function is concave as a function of
$\theta>\theta_0$.  To see this, note that the optimal value of
problem \eqref{eq:kernelproblem} is equal to the optimal value of
\begin{equation}
\widetilde{f}(\phi,\chi)=\frac{(\phi\otimes\chi)^\dagger\Omega
(\phi\otimes\chi)}{(\phi^\dagger\phi)(\chi^\dagger\chi)}\,.
\end{equation}
Let $p^*(\Omega)$ denote the optimal value.  Since $p^*(\Omega)$
is defined as a point-wise minimum $p^*(\Omega)$ is a concave
function of $\Omega$~\cite{BV}.  Projection of $\Omega$ on some face is
an affine operation, an operation preserving concavity.  Accordingly
the face function is a concave function once the new local minimum
exists. This fact makes either approach for locating the boundary of a
face simpler, since the qualitative form of the function whose root to
find is known.



\begin{thebibliography}{10}

\bibitem{hhhh}
R.~Horodecki, P.~Horodecki, M.~Horodecki, and K.~Horodecki, Quantum
  entanglement, {\em Rev. Mod. Phys.} \textbf{81} (2009) 865.

\bibitem{NC}
M.~A. Nielsen and I.~L. Chuang, {\em Quantum Computation and Quantum
  Information} (Cambridge University Press, 2000).

\bibitem{epr1935}
A.~Einstein, B.~Podolsky, and N.~Rosen, Can quantum-mechanical description of
  physical reality be considered complete? {\em Phys. Rev.} \textbf{47}
  (1935) 777.

\bibitem{CHSH1969}
J.~F. Clauser, M.~A. Horne, A.~Shimony, and R.~A. Holt, Proposed experiment
  to test local hidden-variable theories, {\em Phys. Rev. Lett.} \textbf{23} (1969) 880.

\bibitem{Wer1989}
R.~F. Werner, Quantum states with {E}instein--{P}odolsky--{R}osen
  correlations admitting a hidden-variable model, {\em Phys. Rev. A}
  \textbf{40} (1989) 4277.

\bibitem{Mas2008}
L.~Masanes, Y.-C. Liang, and A.~C. Doherty, All bipartite entangled states
  display some hidden nonlocality, {\em Phys. Rev. Lett.} \textbf{100} (2008) 090403.

\bibitem{schro1935}
E.~Schr\"{o}dinger, Die gegenw\"{a}rtige situation in der quantenmechanik,
  {\em Naturwissenschaften}, \textbf{23} (1935) 807,823,844.

\bibitem{HorPR1994}
R.~Horodecki and P.~Horodecki, Quantum redundancies and local realism, {\em
  Phys. Lett. A} \textbf{194} (1994) 147.

\bibitem{Peres1996}
A.~Peres, Separability criterion for density matrices, {\em Phys. Rev.
  Lett.} \textbf{77} (1996) 1413.

\bibitem{BZ}
I.~Bengtsson and K.~\.{Z}yczkowski, {\em Geometry of Quantum States} (Cambridge University Press, 2006).

\bibitem{HorMPR96}
M.~Horodecki, P.~Horodecki, and R.~Horodecki, Separability of mixed states:
  necessary and sufficient conditions, {\em Phys. Lett. A} \textbf{223} (1996) 1.

\bibitem{Zyc}
K.~\ifmmode~\dot{Z}\else \.{Z}\fi{}yczkowski, P.~Horodecki, A.~Sanpera, and
  M.~Lewenstein, Volume of the set of separable states, {\em Phys. Rev. A} \textbf{58} (1998) 883.

\bibitem{Gur2002}
L.~Gurvits and H.~Barnum, Largest separable balls around the maximally mixed
  bipartite quantum state, {\em Phys. Rev. A} \textbf{66} (2002) 062311.

\bibitem{Sto1963}
E.~St{\o}rmer, Positive linear maps of operator algebras, {\em Acta
  Mathematica} \textbf{66}{110} (1963) 233.

\bibitem{poslinmaps2013}
E.~St{\o}rmer, {\em Positive {L}inear {M}aps of {O}perator {A}lgebras}
(Springer, 2013).

\bibitem{Alfsen2010}
E.~Alfsen and F.~Shultz, Unique decompositions, faces, and automorphisms of
  separable states, {\em J. Math. Phys.} \textbf{66}{51} (2010) 052201.

\bibitem{pptI}
J.~M. Leinaas, J.~Myrheim, and E.~Ovrum, Geometrical aspects of
  entanglement, {\em Phys. Rev. A} \textbf{66}{74} (2006) 012313.

\bibitem{ppt55}
L.~O. Hansen, A.~Hauge, J.~Myrheim, and P.~{\O}. Sollid, Low rank positive
  partial transpose states and their relation to product vectors, {\em Phys.
  Rev. A} \textbf{66}{85} (2012) 022309.

\bibitem{HorMPR98}
M.~Horodecki, P.~Horodecki, and R.~Horodecki, Mixed-{S}tate {E}ntanglement
  and {D}istillation: {I}s there a ''{B}ound'' {E}ntanglement in {N}ature?
  {\em Phys. Rev. Lett.} \textbf{66}{80} (1998) 5239.

\bibitem{Terhal2000}
B.~M. Terhal, Bell inequalities and the separability criterion, {\em Phys.
  Lett. A} \textbf{66}{271} (2000) 319.

\bibitem{Choi1975}
M.-D. Choi, Positive {S}emidefinite {B}iquadratic {F}orms, {\em Linear Alg.
  Appl.} \textbf{66}{12} (1975) 95.

\bibitem{ChoiLam1977}
M.-D. Choi and T.-Y. Lam, Extremal {P}ositive {S}emidefinite {F}orms, {\em
  Math. Ann.} \textbf{66}{231} (1977) 1.

\bibitem{Robertson1985}
A.~{Guyan Robertson}, Positive {P}rojections on ${C}^*$-algebras and an
  {E}xtremal {P}ositive {M}ap, {\em Journal of the London Mathematical
  Society} \textbf{2} (1985) 133.

\bibitem{Grab}
J.~Grabowski, M.~Ku\'{s}, and G.~Marmo, Wigner's theorem and the geometry of
  extreme positive maps, {\em Journal of Physics A: Mathematical and
  Theoretical} \textbf{42} (2009) 345301.

\bibitem{Mar2008}
M.~Marciniak, On extremal positive maps acting between type 1 factors,
{\em Banach Center Publ.} \textbf{89} (2010) 201.

\bibitem{Lew}
M.~Lewenstein, B.~Kraus, J.~I. Cirac, and P.~Horodecki, Optimization of
  entanglement witnesses, {\em Phys. Rev. A} \textbf{62} (2000) 052310.

\bibitem{Ha2012a}
K.-C. Ha and S.-H. Kye, Entanglement witnesses arising from {C}hoi type
  positive linear maps, {\em J. Phys. A: Math. Theor.} \textbf{45} (2012) 415305.

\bibitem{Chru3}
G.~Sarbicki and D.~Chru{\'s}ci{\'n}ski, A class of exposed indecomposable
  positive maps, {\em J. Phys. A: Math. Theor.} \textbf{46} (2013) 015306.

\bibitem{Lew2001}
M.~Lewenstein, B.~Kraus, P.~Horodecki, and J.~I. Cirac, Characterization of
  separable states and entanglement witnesses, {\em Phys. Rev. A} \textbf{63} (2001)
044304.

\bibitem{PHor2003}
P.~Horodecki, From limits of quantum operations to multicopy entanglement
  witnesses and state-spectrum estimation, {\em Phys. Rev. A} \textbf{68} (2003)
052101.

\bibitem{HorEk}
P.~Horodecki and A.~Ekert, Method for direct detection of quantum
  entanglement, {\em Phys. Rev. Lett.} \textbf{89} (2002) 127902.

\bibitem{Kor}
J.~K. Korbicz, M.~L. Almeida, J.~Bae, M.~Lewenstein, and A.~Ac\'{i}n,
  Structural approximations to positive maps and entanglement-breaking
  channels, {\em Phys. Rev. A} \textbf{78} (2008) 062105.

\bibitem{Ha2012}
K.-C. Ha and S.-H. Kye, The structural physical approximations and optimal
  entanglement witnesses, {\em J. Math. Phys.} \textbf{53} (2012) 102204.

\bibitem{pptII}
J.~M. Leinaas, J.~Myrheim, and E.~Ovrum, Extreme points of the set of density
  matrices with positive partial transpose, {\em Phys. Rev. A} \textbf{76} (2007) 034304.

\bibitem{pptIII}
J.~M. Leinaas, J.~Myrheim, and P.~{\O}. Sollid, Numerical studies of
  entangled positive-partial-transpose states in composite quantum systems,
  {\em Phys. Rev. A} \textbf{81} (2010) 062329.

\bibitem{pptIV}
J.~M. Leinaas, J.~Myrheim, and P.~{\O}. Sollid, Low rank extremal
  positive-partial-transpose states and unextendible product bases, {\em
  Phys. Rev. A} \textbf{81} (2010) 062330.

\bibitem{pptV}
P.~{\O}. Sollid, J.~M. Leinaas, and J.~Myrheim, Unextendible product bases
  and extremal density matrices with positive partial transpose, {\em Phys.
  Rev. A}, vol.~{84}, p.~042325, 2011.

\bibitem{Hauge2011}
A.~Hauge, A geometrical and computational study of entanglement witnesses,
  Master's thesis, Norwegian University of Science and Technology, Dept.\ of
  Physics, March 2011.

\bibitem{HorP1997}
P.~Horodecki, Separability criterion and inseparable mixed states with
  positive partial transposition, {\em Phys. Lett. A} \textbf{232} (1997) 333.

\bibitem{Car}
C.~Carath\'{e}odory, \"{U}ber den {V}ariabilit\"{a}tsbereich der
  {F}ourierschen {K}onstanten von positiven harmonischen {F}unktionen, {\em
  Rendiconti del Circolo Matematico di Palermo} \textbf{32} (1911) 193.

\bibitem{Stormer2013}
E.~St{\o}rmer, Separable states and the structural physical approximation of
  a positive map, {\em J. Funct. Anal.} \textbf{264} (2013) 2197.

\bibitem{BV}
S.~Boyd and L.~Vandenberghe, {\em Convex Optimization}, (Cambridge University Press, 2004).

\bibitem{NW}
J.~Nocedal and S.~J. Wright, {\em Numerical Optimization}, (Springer, 2006).

\bibitem{Brandao2005}
F.~G. S.~L. Brand\~ao, Quantifying entanglement with witness operators,
  {\em Phys. Rev. A} \textbf{72} (2005) 022310.

\bibitem{Bre2006}
H.-P. Breuer, Optimal entanglement criterion for mixed quantum states, {\em
  Phys. Rev. Lett.} \textbf{97} (2006) 080501.

\bibitem{Doh2004}
A.~C. Doherty, P.~A. Parrilo, and F.~M. Spedalieri, Complete family of
  separability criteria, {\em Phys. Rev. A} \textbf{69} (2004) 022308.

\bibitem{Ioa2006}
L.~M. Ioannou and B.~C. Travaglione, Quantum separability and entanglement
  detection via entanglement-witness search and global optimization, {\em
  Phys. Rev. A} \textbf{73} (2006) 052314.

\bibitem{San2001}
A.~Sanpera, D.~Bru\ss{}, and M.~Lewenstein, Schmidt-number witnesses and
  bound entanglement, {\em Phys. Rev. A} \textbf{63} (2001) 050301.

\bibitem{Toth2005}
G.~T\'oth and O.~G\"uhne, Detecting genuine multipartite entanglement with
  two local measurements, {\em Phys. Rev. Lett.} \textbf{94} (2005) 060501.

\bibitem{Alfsen2012}
E.~Alfsen and F.~Shultz, Finding decompositions of a class of separable
  states, {\em Lin. Alg. Appl.} \textbf{437} (2012) 2613.

\bibitem{Jam1972}
A.~Jamio{\l}kowski, Linear transformations which preserve trace and positive
  semidefiniteness of operators, {\em Reports on mathematical physics},
  \textbf{3} (1972) 275.

\bibitem{Aug}
R.~Augusiak, J.~Bae, L.~Czekaj, and M.~Lewenstein, On structural physical
  approximations and entanglement breaking maps, {\em J. Phys. A: Math.
  Theor.} \textbf{44} (2011) 185308.

\bibitem{Choetal1992}
S.-J. Cho, S.-H. Kye, and S.~G. Lee, Generalized {C}hoi maps in
  three-dimensional matrix algebra, {\em Lin. Alg. Appl.} \textbf{171} (1992) 213.

\bibitem{Ha2011}
K.-C. Ha and S.-H. Kye, Entanglement {W}itnesses {A}rising from {E}xposed
  {P}ositive {L}inear {M}aps, {\em Open Systems \& Information Dynamics}, 
\textbf{18} (2011) 323.

\bibitem{Zwolak2012}
J.~P. Zwolak and D.~Chru{\'s}ci{\'n}ski, New tools for investigating positive
  maps in matrix algebras, {\em Rep. Math. Phys} \textbf{71} (2013) 163.

\bibitem{Chru2}
D.~Chru{\'s}ci{\'n}ski and J.~Pytel, Constructing optimal entanglement witnesses.
{II}. {W}itnessing entanglement in
$4{N}\otimes4{N}$ systems, {\em Phys. Rev. A} \textbf{82} (2010) 052310.

\bibitem{Chru4}
D.~Chru{\'s}ci{\'n}ski and G.~Sarbicki, Exposed positive maps: a sufficient
  condition, {\em J. Phys. A: Math. Theor.} \textbf{45} (2012) 115304.

\bibitem{Kye2013}
S.-H. Kye, Facial structures for various notions of positivity and
  applications to the theory of entanglement, {\em Rev. Math. Phys.} \textbf{25} (2013) 1330002.

\bibitem{Ha2012b}
K.-C. Ha and S.-H. Kye, Optimality for indecomposable entanglement
  witnesses, {\em Phys. Rev. A}, \textbf{86} (2012) 034301.

\bibitem{Aug2011A}
R.~Augusiak, J.~Tura, and M.~Lewenstein, A note on the optimality of
  decomposable entanglement witnesses and completely entangled subspaces,
  {\em J. Phys. A: Math. Theor.} \textbf{44} (2011) 212001.

\bibitem{Aug2011B}
R.~Augusiak, G.~Sarbicki, and M.~Lewenstein, Optimal decomposable witnesses
  without the spanning property, {\em Phys. Rev. A} \textbf{84} (2011) 052323.

\bibitem{Chru}
D.~Chru\ifmmode \acute{s}\else \'{s}\fi{}ci\ifmmode~\acute{n}\else
  \'{n}\fi{}ski, J.~Pytel, and G.~Sarbicki, Constructing optimal entanglement
  witnesses, {\em Phys. Rev. A} \textbf{80} (2009) 062314.

\bibitem{Chru6}
D.~Chru{\'s}ci{\'n}ski and J.~Pytel, Optimal entanglement witnesses from
  generalized reduction and {R}obertson maps, {\em J. Phys. A: Math. Theor.},
  \textbf{44} (2011) 165304.

\bibitem{Chru2014}
D.~Chru{\'s}ci{\'n}ski and G.~Sarbicki, Disproving the conjecture on the
  structural physical approximation to optimal decomposable entanglement
  witnesses, {\em J. Phys. A: Math. Theor.} \textbf{47} (2014) 195301.

\bibitem{Dahl2007}
G.~Dahl, J.~M. Leinaas, J.~Myrheim, and E.~Ovrum, A tensor product matrix
  approximation problem in quantum physics, {\em Lin. Alg. Appl.},
  \textbf{420} (2007) 711.

\bibitem{Eis2004}
J.~Eisert, P.~Hyllus, O.~G\"uhne, and M.~Curty, Complete hierarchies of
  efficient approximations to problems in entanglement theory, {\em Phys.
  Rev. A} \textbf{70} (2004) 062316.

\bibitem{NR}
W.~H. Press, S.~A. Teukolsky, W.~T. Vetterling, and B.~P. Flannery, {\em
  Numerical Recipes} (Cambridge University Press, 2007).

\bibitem{matlab}
{MATLAB} {R}2011b {O}ptimization {T}oolbox {D}ocumentation.\\
{\tt http://www.mathworks.se/help/toolbox/optim/index.html}.

\bibitem{mathematica}
Mathematica 8 {O}ptimization {D}ocumentation.\\
{\tt http://reference.wolfram.com/mathematica/guide/Optimization.html}.

\bibitem{Garberg2011}
{\O}.~S. Garberg, Numerical optimization of expectation values in product
states, project thesis, Norwegian University of Science and Technology,
Dept. of Physics, December 2011.

\end{thebibliography}
\end{document}